
\documentclass[12pt]{article}
\usepackage{amsmath}
\usepackage{amssymb}
\usepackage{amsthm}
\usepackage{fullpage}
\usepackage[utf8]{inputenc}
\usepackage{mathtools}
\usepackage{url}
\usepackage[usenames,svgnames]{xcolor}
\usepackage{cite}
\usepackage{enumitem}
\usepackage{comment}
\usepackage{mathdots}
\usepackage{graphicx}
\usepackage{float}
\usepackage{enumitem}
\usepackage{datetime}
\usepackage[titletoc,title]{appendix}
\usepackage{comment}
\usepackage[symbol]{footmisc}

\usepackage[margin=1.3in]{geometry}

\usepackage{ tikz, pgfplots, todonotes,color}
\usetikzlibrary{decorations.markings,arrows,math}

\usepackage{url}

\usepackage{stackengine}
\usepackage{scalerel}

\usepackage[pagebackref=true]{hyperref} 
\usepackage[all]{hypcap} 

\textwidth 16truecm
\textheight 22truecm
\topmargin -1truecm
\oddsidemargin -0truecm


\theoremstyle{plain}
\theoremstyle{definition}
\newtheorem{theorem}{Theorem}[section]
\newtheorem{lemma}[theorem]{Lemma}
\newtheorem{definition-theorem}[theorem]{Definition-Theorem}
\newtheorem{definition-proposition}[theorem]{Definition-Proposition}

\newtheorem{example}{Example}[section]
\newtheorem{examples}{Example}[subsection]

\newtheorem{remark}{Remark}[section]

\newtheorem{definition}{Definition}[section]

\numberwithin{equation}{section} 

\numberwithin{equation}{section} 


\DeclareMathOperator{\diag}{diag}
\DeclareMathOperator{\cyc}{cyc}

\DeclareMathOperator{\aut}{aut}


\def\ra{{\rightarrow}}

\def\tr{\mathrm {tr}}
\def\det{\mathrm {det}}

\def\diag{\mathrm {diag}}

\def\&{&{\hskip -20pt}}


\DeclarePairedDelimiter{\abs}{|}{|}

\def\be{\begin{equation}}
\def\ee{\end{equation}}

\def\bea{\begin{eqnarray}}
\def\eea{\end{eqnarray}}

\def\bt{\begin{theorem}}
\def\et{\end{theorem}}

\def\bex{\begin{example}\small \rm}
\def\eex{\end{example}}

\def\bexs{\begin{examples}\small \rm}
\def\eexs{\end{examples}}

\def\ra{\rightarrow}
\def\ss{\subset}
\def\deq{\coloneqq}

\def\br{\begin{remark}\small \rm}
\def\er{\end{remark}}



\def\CC{{\mathcal C}}

\def\GG {{\mathcal G}}

\def\MM{{\mathcal M}}

\def\SS{{\mathcal S}}
\def\TT {{\mathcal T}}

\def\VV{{\mathcal V}}
\def\WW{{\mathcal W}}


\def\Ib{\mathbf{I}}

\def\Nb{\mathbf{N}}

\def\Pb{\mathbf{P}}

\def\Zb{\mathbf{Z}}

\def\cb{\mathbf{c}}
\def\db{\mathbf{d}}

\def\mub{\boldsymbol{\mu}}
\def\nub{\boldsymbol{\nu}}



\def\grM{\mathfrak{M}}


\hyphenation{Grassmannian}
\hyphenation{Determinant}
\hyphenation{parallelograms}


\begin{document}
\baselineskip 16pt

\medskip
\begin{center}
\begin{Large}\fontfamily{cmss}
\fontsize{17pt}{27pt}
\selectfont
	\textbf{Constellations and $\tau$-functions for rationally weighted Hurwitz numbers}
	\end{Large}

\bigskip \bigskip
\begin{large} 
J. Harnad$^{1, 2}$\footnote[1]{e-mail:harnad@crm.umontreal.ca}  
and B. Runov$^{1, 2}$\footnote[2]{e-mail:boris.runov@concordia.ca}
 \end{large}
 \\
\bigskip
\begin{small}
$^{1}${\em Centre de recherches math\'ematiques, Universit\'e de Montr\'eal, \\C.~P.~6128, succ. centre ville, Montr\'eal, QC H3C 3J7  Canada}\\
$^{2}${\em Department of Mathematics and Statistics, Concordia University\\ 1455 de Maisonneuve Blvd.~W.~Montreal, QC H3G 1M8  Canada}
\end{small}
 \end{center}
\medskip
\begin{abstract}
Weighted constellations give graphical representations of weighted branched
coverings of the Riemann sphere. They were  introduced to provide a combinatorial
interpretation of the $2$D Toda $\tau$-functions of hypergeometric type
serving as generating functions for weighted Hurwitz numbers in the case of polynomial
weight generating functions.  The product over all vertex and edge weights of a given weighted
constellation, summed over all configurations, reproduces the $\tau$-function. 
In the present work, this is generalized to constellations in which the weighting parameters
are determined by a rational weight generating function. The associated $\tau$-function may be expressed 
as a sum over the weights of  doubly labelled weighted constellations, with two types of weighting parameters
 associated to each equivalence  class of branched coverings.
The double labelling of branch points, referred to as ``colour'' and ``flavour'' indices, is required by the fact that,
in the Taylor expansion of the weight generating function, a particular colour from amongst the denominator 
parameters may appear multiply, and the flavour labels indicate this multiplicity.
 \end{abstract}

\section{$2$D Toda $\tau$-functions, weighted Hurwitz numbers and constellations}
\label{tau_hurwitz_constell}

There is a growing literature on the use of KP or $2$D Toda $\tau$-functions \cite{Sa, SW, Ta, Takeb, UTa} of hypergeometric 
type \cite{Or, OrSc1, OrSc2} as generating functions for various types of weighted Hurwitz numbers 
\cite{Ok, Pa, GJ, MMN, AC1, AC2, AMMN1, AMMN2, GGN1, GGN2, H1, H2, H3, HO2, GH1, GH2}.
The latter are weighted sums over enumerations of Riemann surfaces,  realized as $N$-sheeted branched 
covers of the Riemann sphere, having specified classes of branching
structures \cite{Hur1, Hur2}. Equivalently, they may be viewed as enumerations of factorizations of 
the identity element $\Ib_N \in \SS_N$ in the symmetric group $\SS_N$  on $N$ elements into a product of elements within specified conjugacy
classes \cite{Frob1, Frob2, Sch}. The two are related through the monodromy representation of the fundamental group of the sphere punctured
at the branch points, obtained by lifting homotopy classes of loops from the base to the covering surface \cite{LZ}.

Both interpretations involve an enumeration of more precisely defined structures, in which not
just the conjugacy classes in $\SS_N$ are specified, but the actual monodromy group elements. 
Branched covers with specified monodromy,  or equivalently, factorizations of the identity element in $\SS_N$, 
have nice combinatorial representations as special types of marked bipartite graphs on the surface, called {\em constellations} \cite{LZ},
which contain in their markings all the data needed to uniquely identify the full ramification  structure
 of the branched cover or, equivalently, the monodromy group elements at each of the branch points.

By assigning suitable weights to the vertices and edges, we obtain a {\em weighted} constellation, 
depending on  parameters that characterize the nature of the weighting. Taking a product over all the
vertex and edge weights to obtain the total weight of the constellation, and summing over all branched coverings, 
taking into account the number of configurations corresponding to a given set of conjugacy classes (the pure Hurwitz numbers), 
we reconstruct, in a purely combinatorial fashion, the specific KP or $2$D Toda $\tau$-function that serves
as generating function for the corresponding weighted Hurwitz numbers.

Such weighted graphical representations have previously only been implemented for weights
determined by a polynomial weight generating function \cite{ACEH1, ACEH2}.  One class of such weighted 
coverings that has been considerably studied consists of Belyi curves, in which  there are just three branch points, with either one 
or two of them assigned specific ramification profiles. Removing one of the three branch points (or one of the factors in the
group product), the constellations reduce to Grothendieck's {\em dessins d'enfants} \cite{Z, KZ, AC1, AC2}.

More general weight generating functions are needed, however, to include the various types of Hurwitz numbers 
of interest, such as: {\em simple} Hurwitz numbers \cite{Pa, Ok}, for which the weight generating function is the exponential function;
{\em weakly monotonic} Hurwitz numbers \cite{GGN1, GGN2},  which enumerate weakly monotonic paths in the
Cayley graph of $\SS_N$ generated by transpositions or, equivalently, 
 {\em signed} enumerations of branched covers with  arbitrary branching  \cite{GH1, GH2, HO2},
  for which the weight generating function is the inverse of a linear function;
  more general families of weighted Hurwitz numbers, combining enumerations of weakly and strongly monotonic
  paths in the Cayley graph \cite{GH2, HO2}, for which the weight generating function is rational 
 and {\em quantum} weighted Hurwitz numbers \cite{H1, H3, HOrt, HR }, for which the generating function 
 is a quantum exponential.

The purpose of the present work is to extend the notion of weighted constellations
from the polynomially weighted case to more general {\em rationally weighted} 
Hurwitz numbers \cite{GH2, H1, HO2, BH}. Although the graphical interpretation of
the underlying unweighted constellations remains the same, the labelling is refined
 to include not only a distinction between two species of vertices, associated to the numerator  and denominator
 weighting parameters, and a specification of ``colour'', to distinguish the various  weighted branch points, 
 but also a second, non-negative integer label,  which we call ``flavour'', that takes into account the 
 degree of the monomials in the denominator  parameters that occur in the Taylor expansion of the 
 rational  weight generating function.  

Generating functions for rationally weighted Hurwitz numbers may all be given matrix integral representations
 when the KP and $2$D Toda flow parameters are evaluated at the trace invariants of a pair of ``external coupling''
 matrices  \cite{BH, GH1, GGN1, GGN2}. In particular, for the case of weakly monotonic  double Hurwitz numbers, 
 where the weight generating function is the inverse of a first degree factor,  the generating $\tau$-function
 is identifiable as the Harish-Chandra-Itzykson-Zuber matrix integral  \cite{HC, IZ, GGN1, GGN2}. The next simplest case, where both the 
 numerator and the denominator factors are linear \cite{GH1},  also has a matrix integral representation \cite{HO1, GH1}
 and may be interpreted combinatorially as enumeration  of a $2$-species hybrid of weakly and strongly monotonic paths 
 in the Cayley graph generated by transpositions. In general, by refining the parametrization, or taking into account the
coefficients of monomial terms in the weighting parameters in the expansion, rationally weighted Hurwitz numbers may be interpreted
as enumeration of ``multispecies'' branched covers, or equivalently, various hybrid
classes of monotonic paths in the Cayley graph, with specified step lengths \cite{H2}.

Section \ref{tau_function_gen}  recalls the definitions,  both geometrical and group theoretical, 
of {\em pure}  Hurwitz numbers and  {\em weighted} Hurwitz numbers for arbitrary weight generating functions.
The  construction of $2$D Toda $\tau$-functions of hypergeometric type \cite{Or, OrSc1, OrSc2}  that serve as their generating functions
 is also recalled (Theorem \ref{prop:tauHurwitz}), with particular focus on the case of rational weight generating functions. 
In Section \ref{rational_constel_tau}, we define the corresponding doubly labelled weighted constellations and show
(Theorem \ref{tau_function_rat_weighted_constell}) that, summing over their total weights, we recover the 
$2$D Toda $\tau$-functions that generate rationally weighted Hurwitz numbers. Section \ref{examples_rat_constel} provides a number
of examples of rationally weighted constellations,  including the two cases mentioned above, 
of weakly monotonic Hurwitz numbers and hybrids of weakly and strongly monotonic ones \cite{GH1}.  

\section{$2$D Toda $\tau$-functions as generating functions for weighted Hurwitz numbers}
\label{tau_function_gen}

\subsection{Hurwitz numbers}

We begin by recalling the definition of {\em pure Hurwitz numbers} in the group theoretical sense
of Frobenius  and Schur \cite{Frob1, Frob2, LZ, Sch} (called {\em Frobenius numbers} in \cite{Za}).
\begin{definition} 
\label{frob_pure_hurwitz}
For a pair $N, k \in \Nb^+$ of positive integers, and a set of $k$ integer partitions $\{\mu^{(1)}, \dots, \mu^{(k)}\}$
of $N$, the {\em pure Hurwitz number} $H(\mu^{(1)}, \dots, \mu^{(k)})$ is defined as $1/N!$ times the number of distinct factorizations of the 
identity element $\Ib_N \in \SS_N$ in the symmetric group on $N$ elements as a product of  $k$ elements
\be
\Ib_N = h_1 \cdots h_k, 
\label{factor_k_Id}
\ee
 such that, for each $i=1, \dots, k$,  $h_i \in \SS_N$ belongs to the conjugacy class $\cyc(\mu^{(i)})\ss \SS_N$ 
 whose cycle lengths are equal to the parts of $\mu^{(i)}$. The connected version of the pure Hurwitz numbers,
 denoted $\tilde{H}(\mu^{(1)}, \dots, \mu^{(k)})$  is determined by including in the enumeration only those factorizations
 (\ref{factor_k_Id}) in which the group generated by the factors acts transitively on the set of elements $(1, \dots , N)$ that
 $\SS_N$ permutes.
\end{definition}
An equivalent way of defining  $H(\mu^{(1)}, \dots, \mu^{(k)})$ stems from the work of Hurwitz \cite{Hur1, Hur2}
on enumeration of $N$-sheeted branched covers of the Riemann sphere.
 \begin{definition}
 \label{hurwitz_pure_hurwitz}
 The pure Hurwitz number $H(\mu^{(1)}, \dots, \mu^{(k)})$ may alternatively be defined as the number
 of inequivalent $N$-sheeted branched covers  $\Gamma \ra \Pb^1$ of the Riemann sphere having
 $k$ (possible) branch points $(Q^{(1)}, \dots , Q^{(k)})$, such that the ramification profiles of the $Q^{(i)}$'s are given
 by the partitions $\{\mu^{(i)}\}_ {i=1, \dots, k}$, divided by the order $|\aut(\Gamma)|$ of the automorphism
 group $\aut(\Gamma)$. The connected version  $\tilde{H}(\mu^{(1)}, \dots, \mu^{(k)})$ is similarly determined
 by enumerating only branched covers that  cannot be split into a disjoint union of connected components.
  \end{definition}
 The equivalence of these two definitions \cite{LZ}  follows from the monodromy representation
 \be
 \MM: \pi_1(\Pb^1\backslash(Q^{(1)}, \dots, Q^{(k)})) \ra \SS_N
 \ee
  of  the fundamental group of the  punctured sphere $\Pb^1\backslash(Q^{(1)}, \dots, Q^{(k)})$
 obtained by lifting homotopy classes of loops in  $\Pb^1\backslash(Q^{(1)}, \dots, Q^{(k)})$ to $\Gamma$, 
 and noting that a simple loop surrounding all the  branch points is homotopic to the identity element.

The Frobenius-Schur formula  \cite{Sch, Frob1, Frob2, Za} 
\be
 H(\mu^{(1)}, \dots, \mu^{(k)}) = \sum_{\lambda, \ |\lambda|=N} (h(\lambda))^{k-2} \prod_{i=1}^k {\chi_\lambda(\mu^{(i)}) \over z_{\mu^{(i)}}},
 \label{Frob_Schur_Hurwitz}
 \ee
 gives an explicit expression  for $H(\mu^{(1)}, \dots, \mu^{(k)})$  in terms of the characters $\chi_\lambda(\mu^{(i)})$ of $\SS_N$  
 corresponding to irreducible representations  with symmetry class $\lambda$, evaluated on the conjugacy classes
 $\{\cyc(\mu^{(i)})\}_{i=1 \dots, k}$, consisting of elements of cycle type $\mu^{(i)}$.
The sum in (\ref{Frob_Schur_Hurwitz}) is over all partitions $\lambda \vdash {\hskip -2 pt} N$ 
 (i.e., over all irreducible representations of $\SS_N$),  $h(\lambda)$ denotes the product of  
 hook lengths of $\lambda$ and
 \be
 z_\mu = \prod_{i=1}^{\mu_1} m_i(\mu))! i^{m_i(\mu)}
 \label{z_mu_def}
 \ee
 is the order of the stability subgroup of any element in the conjugacy class $\cyc(\mu)$,
 where $m_i(\mu)$ is the number of parts of $\mu$ equal to $i$. 

\subsection{Weighted Hurwitz numbers}
\label{weighted_hurwitx}

We recall the notion of {\em weighted} Hurwitz numbers \cite{GH2, HO2} associated to a weight generating function
\be
G(z) = 1 + \sum_{i=1}^\infty g_i z^i,
\ee
which may be viewed as a formal power series or, alternatively, a formal infinite product
\be
G(z) = G_{\bf c}(z):= \prod_{i=1}^\infty (1+c_i z)
\label{G_inf_prod}
\ee
or, in {\em dual form} 
\be
G(z) = \tilde{G}_{\bf d}(z):=\prod_{i=1}^\infty (1-d_i z)^{-1}.
\label{G_dual_inf_prod}
\ee
In the first case, the Taylor coefficients are identified as the {\em elementary symmetric functions} \cite{Mac}
\be
g_i = e_i(\cb)
\ee
of the infinite sequence of parameters
\be
\cb = (c_1, c_2, \dots),
\ee
while in the second, they are identified as the {\em complete symmetric functions} \cite{Mac}
\be
g_i = h_i(\db)
\ee
of the infinite sequence 
\be
\db = (d_1, d_2, \dots).
\ee

In some cases, the weight generating function series may be summed to an analytic function, such
as the exponential function 
\be
G(z) = G_{exp}(z):=e^z,
\ee
which occurs in the case of {\em simple} Hurwitz numbers \cite{Pa, Ok, GH1, GH2}.
In this work, we focus on {\em rational} weight generating functions, of the form
\be
	G(z) = G_{{\bf c}, {\bf d}}(z) :=\frac{\prod_{i=1}^{L}(1+c_i  z)}{\prod_{i=1}^{M}(1-d_i z)}\,,
	\label{Gi_rat}
\ee
for which the Taylor series coefficients are
\be
g_i=g_i({\bf c}, {\bf d}):= \sum_{j=0}^i e_j({\bf c}) h_{i-j}({\bf d}),
\label{rational_g_i}
\ee
in terms of the finite set of variables 
\be
{\bf c} =(c_1, \dots, c_L), \quad {\bf d} = (d_1, \dots, d_M). 
\ee

 Given a weight generating function $G(z)$ and a pair of partitions $(\mu, \nu)$ of $N$,  
 we consider weighted branched covers $\Gamma \ra \Pb^1$ of the Riemann sphere with possible branch points 
 at $0$ and $\infty$, having ramification profiles $\nu$ and $\mu$ respectively, and a further set of branch points 
 at  $k$ distinct, finite, nonzero points $(Q^{(1)}, \dots, Q^{(k)})$, whose ramification 
 profiles are $(\mu^{(1)}, \dots, \mu^{(k)})$.
Recall that the sum 
\be
d :=\sum_{i=1}^k\ell^*(\mu^{(i)})
\label{d_def}
\ee
of the colengths 
\be
\ell^*(\mu^{(i)}) := N- \ell(\mu^{(i)}) 
\ee
of the ramification profiles $(\{\mu^{(i)}\}_{i=1, \dots, k}, \mu, \nu) $ 
determines the Euler characteristic $\chi(\Gamma)$ of the covering surface 
through the Riemann-Hurwitz formula
\be
\chi(\Gamma) = 2 - 2g(\Gamma) = \ell(\mu) +\ell(\nu) -d,
\label{riemann_hurwitz}
\ee
where for connected covers $\Gamma\ra \Pb$,  $g(\Gamma)$ is the genus.
For a given choice $(G(z), d, \mu, \nu)$, the corresponding weighted  double Hurwitz number 
$H^d_G(\mu, \nu)$ is defined as follows.
 \begin{definition}[Weighted double Hurwitz numbers]
 For $d \in \Nb^+$, 
 \be
H^d_G(\mu, \nu) := 
\sum_{k=0}^d  \sideset{}{'}\sum_{\mu^{(1)}, \dots \mu^{(k)}, \atop  |\mu^{(i)}| =N , \ \sum_{i=1}^k \ell^*(\mu^{(i)}) =d} 
\WW_G(\mu^{(1)}, \dots, \mu^{(k)}) H(\mu^{(1)}, \dots, \mu^{(k)}, \mu, \nu),
\label{H_d_G_def}
\ee
where $\sideset{}{'}\sum$ denotes a sum over all  $k$-tuples of partitions 
$(\mu^{(1)}, \dots, \mu^{(k)})$ of $N$ other than the cycle type of the identity element $(1^N)$,
\be
\ell^*(\mu^{(i)}) := |\mu^{(i)}| - \ell(\mu^{(i)})= N - \ell(\mu^{(i)})
\ee
is the {\em colength} of the partition $\mu^{(i)}$, which may take values between $1$ (simple branching) and $N-1$ (complete
ramification),  and  $\WW_G(\mu^{(1)}, \dots, \mu^{(k)})$ is the weight
given to the configuration  of ramification profiles $(\mu^{(1)}, \dots, \mu^{(k)}, \mu, \nu)$. (Note that
this is independent of the choice of the two partitions $(\mu, \nu)$). For $d=0$, we  set
\be
H^0_G(\mu, \nu) := H(\mu, \nu).
\ee

For the case of a weight generating function represented by an infinite product,
as in eq.~(\ref{G_inf_prod}), for $k>0$, the weight factor is defined to be \cite{GH2, HO2}
\bea
 \WW_{G_{\bf c}}(\mu^{(1)}, \dots, \mu^{(k)}) &\&:=
 {1\over k!}
 \sum_{\sigma \in \SS_{k}} 
 \sum_{1 \le a_1 < \cdots < a_{k } } 
  c_{a_{\sigma(1)}}^{\ell^*(\mu^{(1)})} \cdots c_{a_{\sigma(k)}}^{\ell^*(\mu^{(k)})} \cr
 &\&= {|\aut(\lambda)|\over k!} m_\lambda ({\bf c}).
 \label{WG_def}
\eea
and, for $k=0$
\be
\WW_G(\emptyset) := 1.
\ee
Here  $m_\lambda ({\bf c}) $ is the monomial symmetric function \cite{Mac} of the 
parameters \hbox{${\bf c}:= (c_1, c_2, \dots)$}
\be
m_\lambda ({\bf c}) = {1\over |\aut(\lambda)|}\sum_{\sigma \in \SS_{k}} \sum_{1 \le a_1 < \cdots < a_{k }}
 c_{a_{\sigma(1)}}^{\lambda_1} \cdots c_{a_{\sigma(k)}}^{\lambda_{k}} ,
  \label{m_lambda}
\ee
 indexed by the partition $\lambda$ of weight $|\lambda|=d$ and length  $\ell(\lambda) =k$, whose 
parts $\{\lambda_i\}_{i=1, \dots, d}$  are equal to the colengths $\{\ell^*(\mu^{(i)})\}_{i=1, \dots, d}$ 
expressed in weakly decreasing order
\be
\{\lambda_i\}_{i=1, \dots k} \sim \{\ell^*(\mu^{(i)})\}_{i=1, \dots k},  \quad \lambda_1 \ge \cdots \ge \lambda_k >0,
\label{lambda_colengths_mu}
\ee
and
 \be
 |\aut(\lambda)|:=\prod_{i\geq 1} m_i(\lambda)!.
 \label{def_aut_lambda}
 \ee
We similarly denote by
\be
\tilde{H}^d_G(\mu, \nu) := 
\sum_{k=1}^d \sideset{}{'}\sum_{\mu^{(1)}, \dots \mu^{(k)}, \atop  |\mu^{(i)}| =N, \ \sum_{i=1}^k \ell^*(\mu^{(i)}) =d} 
\WW_G(\mu^{(1)}, \dots, \mu^{(k)}) \tilde{H}(\mu^{(1)}, \dots, \mu^{k)}, \mu, \nu)
\label{H_d_G_connected_def}
\ee
the {\em connected} weighted Hurwitz numbers corresponding to the weight generating function $G(z)$.

For the case of weight generating functions of the dual infinite product type (\ref{G_dual_inf_prod}),
the weight $\WW_{\tilde{G}_{\bf d}}(\mu^{(1)}, \dots, \mu^{(k)})$ is similarly defined \cite{GH2, HO2}, but with the
monomial symmetric functions $m_\lambda({\bf c})$ appearing in eq.~(\ref{WG_def}) replaced by the
``forgotten'' symmetric function $f_\lambda({\bf d})$, as defined in \cite{Mac}.

For rational weight generating functions (\ref{Gi_rat}), there are two types of branch points,
$(\mu^{(1)}, \dots, \mu^{(l)}; \nu^{(1)}, \dots, \nu^{(m)})$, corresponding to the  weighting parameters
$({\bf c},{\bf d})$ appearing in the numerator and denominator of (\ref{Gi_rat}) respectively, and
the weight becomes \cite{GH2, HO2}
\bea
&\& \WW_{G_{{\bf c}, {\bf d}}}(\mu^{(1)}, \dots, \mu^{(l)}; \nu^{(1)}, \dots, \nu^{(m)}) \cr
&\& :=
{(-1)^{\sum_{j=1}^l \ell^*(\nu^{(j)}) -m }\over l! m!}
 \sum_{\sigma \in \SS_l \atop \sigma' \in \SS_m}   \sum_{1 \leq a_1 < \cdots < a_{l } \leq L\atop 1 \leq b_1\cdots \leq b_m\leq M } 
  c_{a_{\sigma(1)}}^{\ell^*(\mu^{(1)})} \cdots c_{a_{\sigma(l)}}^{\ell^*(\mu^{(l)})}  
  d_{b_{\sigma'(1)}}^{\ell^*(\nu^{(1)})} \cdots d_{b_{\sigma'(m)}}^{\ell^*(\nu^{(m)})}.
 \label{WG_cd_def}
\eea
The corresponding rationally weighted double Hurwitz numbers are denoted
 \bea
H^d_{G_{{\bf c}, {\bf d}}}(\mu, \nu)&\& := 
\sum_{l=0}^L\sum_{m=0}^d  \sideset{}{'}\sum_{{\mu^{(1)}, \dots \mu^{(l)} ,\nu^{(1)}, \dots \nu^{(m)}, \atop    
\sum_{i=1}^l \ell^*(\mu^{(i)})+ \sum_{j=1}^m \ell^*(\nu^{(j)})  =d}\atop  |\mu^{(i)}|  =   |\nu^{(j)}|=N }
{\hskip -30 pt} \WW_{G_{{\bf c}, {\bf d}}}(\mu^{(1)}, \dots, \mu^{(l)}; \nu^{(1)}, \dots, \nu^{(m)})  \cr
&\&{\hskip 120 pt}  \times H(\mu^{(1)}, \dots, \mu^{(l)}, \nu^{(1)}, \dots, \nu^{(m)}, \mu, \nu).\cr
&\&
\label{H_d_G_rat_def}
\eea
\end{definition}

Note that the weight factor $ \WW_{G_{{\bf c}, {\bf d}}}(\mu^{(1)}, \dots, \mu^{(l)}; \nu^{(1)}, \dots, \nu^{(m)}) $ 
depends only on the parameters ${\bf c}, {\bf d}$
and on the colengths $\{\ell^*(\mu^{(i)}), \ell^*(\nu^{(j)}\}$ of the weighted partitions, not on the pair $(\mu, \nu)$,
nor on any further details relating to the parts of $\{\mu^{(i)}, \nu^{(j)}\}$.

\subsection{$2$D Toda $\tau$-functions of hypergeometric type}
\label{Toda_hypergeom_G}

For a given weight generating function $G(z)$,  and a small nonzero complex constant  $\beta$, 
we assume that $G(z)$ may be evaluated  at $z=j\beta$, for all integers $j\in \Zb$, 
and define the following infinite sequence of parameters:
\be
r_j^{(G, \beta)} := G(j \beta) .
\label{rj_beta}
\ee
For any partition $\lambda$ of weight $|\lambda|= N$,
we define $r_\lambda^{(G, \beta)}$ by the following {\em content product} formula \cite{HO2, GH2}
\be
r_\lambda^{(G, \beta)}  \deq   \prod_{(i,j)\in \lambda} r_{j-i}^{(G, \beta)},
\label{r_lambda_G}
 \ee
where  $(i,j)\in \lambda$ refers to the position of a box in the Young diagram of the partition $\lambda$ 
 in matrix index notation.

Following \cite{GH1, H1, HO2}, we introduce a parametric family $\tau^{(G, \beta, \gamma)}({\bf t},{\bf s})$ 
of 2D Toda $\tau$-functions  of hypergeometric type   \cite{ KMMM, Or, OrSc1, OrSc2} (at the lattice point $0$)
associated to the weight generating function $G(z)$,  defined by the double Schur function series
\begin{eqnarray}
\tau^{(G, \beta, \gamma)}({\bf t},{\bf s})
:= \sum_{N=0}^\infty  \gamma^N\sum_{\lambda, |\lambda|=N} r^{(G, \beta)}_\lambda s_\lambda({\bf t}) s_\lambda({\bf s}),
\label{tau_schur_G}
\end{eqnarray}
where 
\be
{\bf t} =(t_1, t_2, \dots),  \quad {\bf s}= (s_1, s_2, \dots)
\label{2DToda_flow_vars}
\ee
 are the two sets of 2D Toda flow parameters, identified, within normalization, with the power sums
 \be
 p_i = i t_i, \quad p'_i = i s_i
 \ee
 over a pair of sequences of auxiliary variables,  $\gamma$ is an additional nonzero parameter,
 introduced to keep track of the weights $|\lambda |$, and the sum is taken over all integer partitions (including $\lambda=\emptyset$).
These will serve as generating functions for the weighted double Hurwitz numbers as explained below.

Making a change of basis  from the Schur functions to the power sum symmetric functions 
\be
p_\mu({\bf t}):= \prod_{i=1}^{\ell(\mu) }p_{\mu_i} =  \prod_{i=1}^{\ell(\mu) } \mu_i t_{\mu_i}, \quad  p_\nu({\bf s}) := \prod_{i=1}^{\ell(\nu)} p'_{\nu_i} =  \prod_{i=1}^{\ell(\nu)} \nu_i s_{\nu_i},
\ee 
 using the Frobenius character formula \cite{FH, Mac},
\be
s_\lambda({\bf t}) = \sum_{\mu, \, \abs{\mu} = \abs{\lambda}} z_\mu^{-1} \chi_\lambda(\mu) p_\mu ({\bf t}),
\label{frobenius_character}
\ee
allows us to express $  \tau^{(G, \beta, \gamma)} ({\bf t}, {\bf s})$ equivalently as a double series
in the power sum symmetric functions, whose coefficients are equal to the $H^d_G(\mu, \nu)$'s (see ~\cite{GH1, GH2, HO2, H1} for 
details).
\begin{theorem}{\phantom{x}{\hskip -3 pt}\cite{GH1, GH2, HO2}.}
\label{prop:tauHurwitz}
The $\tau$-function $\tau^{(G, \beta, \gamma)} ({\bf t},{\bf s})$ has the equivalent series expansion
\bea
  \tau^{(G, \beta, \gamma)} ({\bf t}, {\bf s}) &\& =  
\sum_{N=0}^\infty  \gamma^N\sum_{\substack{\mu, \nu \\ |\mu|=|\nu| =N}}
\sum_{d=0}^\infty \beta^d  H^d_G(\mu, \nu) p_\mu({\bf t}) p_\nu({\bf s}).
\label{tau_G_H}
\eea
while its logarithm gives the corresponding expansion for connected Hurwitz numbers
\bea
 \ln( \tau^{(G, \beta, \gamma)} ({\bf t}, {\bf s})) &\& =  
\sum_{N=0}^\infty  \gamma^N\sum_{\substack{\mu, \nu \\ |\mu|=|\nu| =N}}
\sum_{d=0}^\infty \beta^d  \tilde{H}^d_G(\mu, \nu) p_\mu({\bf t}) p_\nu({\bf s}).
\label{ln_tau_G_H}
\eea
\end{theorem}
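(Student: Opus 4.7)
The plan is to reduce (\ref{tau_G_H}) to a content-product identity by inserting the Frobenius character formula (\ref{frobenius_character}) for both Schur functions in the series (\ref{tau_schur_G}), and then to match the resulting coefficient of $p_\mu({\bf t})\, p_\nu({\bf s})$ with the weighted Hurwitz number $H^d_G(\mu,\nu)$ via the Frobenius--Schur formula (\ref{Frob_Schur_Hurwitz}). Substituting (\ref{frobenius_character}) into (\ref{tau_schur_G}) and rearranging the summations, (\ref{tau_G_H}) becomes equivalent to showing that, for every $N \in \Nb^+$ and every pair $(\mu,\nu)$ of partitions of $N$,
\[
\sum_{|\lambda|=N} r_\lambda^{(G, \beta)} \frac{\chi_\lambda(\mu)\chi_\lambda(\nu)}{z_\mu z_\nu} \;=\; \sum_{d=0}^\infty \beta^d\, H^d_G(\mu,\nu).
\]

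The heart of the argument is to expand $r_\lambda^{(G, \beta)} = \prod_{(i,j)\in\lambda} G((j-i)\beta)$ as a formal power series in $\beta$. Using the product representations (\ref{G_inf_prod}) or (\ref{Gi_rat}), in each factor $(1 \pm c_a(j-i)\beta)^{\pm 1}$ one selects either the constant term or a power of the weighting parameter, and collects by total $\beta$-degree $d$. Each resulting term is a product of content power sums of the form $\sum_{(i,j)\in\lambda}(j-i)^{\ell^*(\mu^{(r)})}$ times a monomial in the parameters ${\bf c}$ (and ${\bf d}$). The classical identification via Jucys--Murphy elements---symmetric polynomials in the JM elements are central in $\Cbb[\SS_N]$ and act on the irreducible $V_\lambda$ by the corresponding symmetric function of the contents $\{j-i : (i,j)\in\lambda\}$---then converts the left-hand side above into a linear combination of products of conjugacy class sums in $Z(\Cbb[\SS_N])$. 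Evaluating the coefficient of $\Ib_N$ in the expansion of $h_\mu\, h_\nu\, h_{\mu^{(1)}}\cdots h_{\mu^{(k)}}$ via (\ref{Frob_Schur_Hurwitz}) reproduces the pure Hurwitz number $H(\mu^{(1)},\ldots,\mu^{(k)},\mu,\nu)$, and the accompanying parameter-monomials, after symmetrization over the choice of weighting indices, assemble into exactly the weight $\WW_G(\mu^{(1)},\ldots,\mu^{(k)})$ of (\ref{WG_def}) (respectively $\WW_{G_{{\bf c},{\bf d}}}$ of (\ref{WG_cd_def}) in the rational case), including the prefactors $1/k!$ and $|\aut(\lambda)|$ and, in the rational case, the signs $(-1)^{\sum_j \ell^*(\nu^{(j)})-m}$. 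Comparison with (\ref{H_d_G_def}) (or (\ref{H_d_G_rat_def})) then yields (\ref{tau_G_H}).

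The connected version (\ref{ln_tau_G_H}) follows from the standard exponential principle for branched covers, as in the pure Hurwitz case: a disconnected cover decomposes into a disjoint union of connected components, and all the weighting data---$\gamma^N$, the weight factors $\WW_G$, and the power sums $p_\mu({\bf t})\, p_\nu({\bf s})$ under splitting $\mu=\sqcup_\alpha \mu_\alpha$, $\nu=\sqcup_\alpha \nu_\alpha$---factorize multiplicatively over the components, while the automorphism weights convert the sum over ordered decompositions into the usual exponential. Thus $\tau^{(G,\beta,\gamma)} = \exp(\tilde\tau^{(G,\beta,\gamma)})$, where $\tilde\tau^{(G,\beta,\gamma)}$ is the generating series assembled from the connected Hurwitz numbers $\tilde H^d_G(\mu,\nu)$, and (\ref{ln_tau_G_H}) follows upon taking $\ln$.

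The main obstacle is the bookkeeping step from content products to class-sum products: one must carefully track how the $\beta^d$-coefficient of $\prod_{(i,j)\in\lambda} G((j-i)\beta)$ decomposes, through the Jucys--Murphy mechanism, into a sum indexed by tuples $(\mu^{(1)},\ldots,\mu^{(k)})$ (and by two tuples with signs in the rational case), so that the resulting combinatorial factors precisely reproduce $1/k!$, $|\aut(\lambda)|$, the colength constraint $\sum_r \ell^*(\mu^{(r)})=d$, and the signs in (\ref{WG_cd_def}). All remaining steps are formal manipulations of symmetric functions and $\SS_N$-characters.
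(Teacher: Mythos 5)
A preliminary remark: the paper states Theorem \ref{prop:tauHurwitz} as a quoted result of \cite{GH1, GH2, HO2} and gives no proof of its own, so your proposal can only be compared with the argument of those references --- which is indeed the one you outline: the Frobenius character formula, the identification of the content product $r_\lambda^{(G,\beta)}$ as the eigenvalue on $V_\lambda$ of the central element $\prod_{b=1}^N G(\beta J_b)$ built from the Jucys--Murphy elements $J_b$, the expansion of that element in conjugacy-class sums $C_\mu$, and the exponential principle for the connected version. The overall strategy, and your treatment of the logarithm, are correct.

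There is, however, a concrete gap in the ``bookkeeping step'' that you yourself flag as the main obstacle, and it is not merely bookkeeping. First, the $\beta$-expansion of $\prod_{(i,j)\in\lambda}(1+c_a\beta(j-i))$ produces \emph{elementary} symmetric functions $e_m(\mathrm{cont}(\lambda))$ of the contents, not the power sums $\sum_{(i,j)\in\lambda}(j-i)^{\ell^*(\mu^{(r)})}$ you write; this matters because the exact form of Jucys' theorem, $e_m(J_1,\dots,J_N)=\sum_{\ell^*(\mu)=m}C_\mu$ (a sum over classes of colength \emph{exactly} $m$), is what makes the exponent of $\beta$ coincide with the total colength and produces the monomial symmetric functions $m_\lambda({\bf c})$ of (\ref{WG_def}); power sums $p_m(J_1,\dots,J_N)$ do not expand into single-colength class sums, so the argument as literally described would not close even in the polynomial case. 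Second, and more seriously for the rational case: ``selecting a power of the weighting parameter'' in each denominator factor $(1-d_b\beta(j-i))^{-1}$ yields \emph{complete} symmetric functions $h_m(\mathrm{cont}(\lambda))$ with coefficient $d_b^m\beta^m$, and $h_m(J_1,\dots,J_N)$ expands into class sums of all colengths $\le m$ (the weakly monotone walk count), so the power of $\beta$ attached to a class $C_{\nu^{(j)}}$ would \emph{not} equal $\ell^*(\nu^{(j)})$ and the weight (\ref{WG_cd_def}) would not emerge. The missing ingredient is the Cauchy-kernel rearrangement $\prod_{b}\prod_{(i,j)\in\lambda}\bigl(1-\beta d_b(j-i)\bigr)^{-1}=\sum_\rho f_\rho({\bf d})\,\beta^{|\rho|}\,e_\rho(\mathrm{cont}(\lambda))$, which re-expresses the denominator contribution through the forgotten symmetric functions $f_\rho({\bf d})$: precisely these supply the signs $(-1)^{\sum_j\ell^*(\nu^{(j)})-m}$ and the weakly increasing index sums $1\le b_1\le\cdots\le b_m\le M$ appearing in (\ref{WG_cd_def}), and restore the grading of $\beta$ by total colength. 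With that identity and Jucys' theorem inserted, your outline becomes the proof of \cite{GH2, HO2}; without them, the matching of the $\beta^d$-coefficient to $H^d_G(\mu,\nu)$ fails for any $G$ with a nontrivial denominator.
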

Thus $\tau^{(G, \beta, \gamma)}({\bf t},{\bf s})$ is interpretable as a generating function for weighted double Hurwitz numbers $H^d_G(\mu, \nu)$, with the exponents of the variables $\gamma$ and $\beta$ equal  to the weight $N=|\mu|=|\nu|$ 
and the total colength $d$,  as defined in eq.~(\ref{d_def}), respectively.

In particular, the $2$D Toda $\tau$-function  that serves as generating function for rationally weighted Hurwitz numbers is:
\bea
\tau^{(G_{{\bf c}, {\bf d}}, \beta, \gamma)}({\bf t},{\bf s})
&\&:= \sum_{N=0}^\infty \gamma^N
\sum_{\lambda, \, |\lambda| = N } r^{(G_{{\bf c}, {\bf d}},  \beta)}_\lambda s_\lambda({\bf t}) s_\lambda({\bf s})
\label{tau_schur_G_cd} \\
&\& =\sum_{N=0}^\infty \gamma^N \sum_{\substack{\mu, \nu \\ |\mu|=|\nu| = N}} 
\sum_{d=0}^\infty \beta^d  H^d_{G_{{\bf c}, {\bf d}}}(\mu, \nu) p_\mu({\bf t}) p_\nu({\bf s}).
\label{tau_G_cd_H}
\eea

\section{Rationally weighted constellations and $\tau$-functions}
\label{rational_constel_tau}


\subsection{Constellations and branched coverings}
\label{constellation_singl_double_label}

We begin by recalling the combinatorial representation of {\em constellations} as special types of marked bipartite
 graphs on a Riemann surface $\Gamma$ containing all the combinatorial information required to realize it 
 as an $N$-fold branched cover $\Gamma \ra \Pb^1$  of the Riemann sphere \cite{Ch, LZ}.
Equivalently, these may be viewed as  graphical representations of a factorization of the identity element $\Ib_N\in \SS_N$ in
the symmetric group
\be
h_0 h_1 \cdots h_{k+1} = \Ib_N,
\label{factor_Id_N}
\ee
where, for purposes of the weighted enumerations to follow, it will be convenient to choose the number of factors as
$k+2$, with indexing labels $(0,1, \dots, k+1)$. These will be thought of equivalently as labelling the $k+2$ branch points
$(Q^{(0)},  Q^{(1)}, \dots, Q^{(k+1)})$  in $\Pb^1$ under the projection map $\Gamma \ra \Pb^1$ whose monodromy 
representations are  generated by the factors $(h_0, h_1, \dots, h_{k+1})$, giving the monodromies of the lifts of the simple,
positively oriented loop, passing through a generic  point $P\in \Pb^1$, and going once around the branch points $Q^{(i)}$ for 
$i=0, \dots, k+1$. Conventionally, the points $(Q^{(0)}, Q^{(k+1)})$ will be identified as $(0, \infty)$
and the cycle types of $(h_0, h_{k+1})$ denoted as $(\nu, \mu)$.
 
 The definition below of {\em singly labelled constellations} is essentially the same as Definition 3.1 in \cite{ACEH2}. 
 It differs from the one in \cite{LZ} by the fact that, in  the latter, one of the factors in (\ref{factor_Id_N}), say $h_0$,
 is omitted, and  so are the  associated vertices of the graph that form the preimage of $Q^{(0)}$.
  But since this element is  determined by the factorization formula (\ref{factor_Id_N}), there is a unique way to 
  restore the missing vertices, and the corresponding edges, by adding these to  the ``amputated'' graphs of \cite{LZ}, 
  to obtain the constellations defined here;   i.e.,  the two graphs can be uniquely transformed into each other.
 Another slight difference is that we label the $N$ elements of the set $D$ (the ``darts'') upon which
 $\SS_N$ acts as $(1, 2, \dots, N)$, while there is no such ordered labelling in \cite{LZ}.
 What are there defined as {\em star} vertices therefore do not carry any particular numbering. However, as shown in \cite{LZ}, any
 permutation of the elements of $D$ gives rise to an isomorphic constellation, in which
 the factors in (\ref{factor_Id_N}) are replaced by their conjugates under the permutation.
 To recover the product (\ref{factor_Id_N}) explicitly, not  just up to conjugation, it is of course
 necessary to specify the numbering. Finally, we do not impose the requirement that the graph be connected,
 or equivalently, that the group generated by the factors in (\ref{factor_Id_N}) act transitively on $D$.

\begin{definition}[Singly labelled constellations]
\label{singly_labelled_constell}
	  A (singly labelled) constellation $\CC_{N, k+2}(\Gamma)$ of degree $N$ and length $k+2$
	 is a graph  on a  Riemann surface $\Gamma$ consisting of two types of vertices: $N$ {\em star} vertices, 
	  labelled $(\underline{1}, \dots, \underline{N})$ or, equivalently $(P_1, \dots, P_N)$, which carry no colour,  
	  and  $k+2$  consecutively labelled {\em coloured} vertices, which may appear multiply,
	  carrying the colour labels $i=0, \dots,  k +1$ (where those labelled $k+1$ are conventionally located at
	  points over $\infty$, and those labelled $0$ over the origin, when viewed as defined as a branched
	  cover of the $\Pb^1$),  together with edges between  pairs of these satisfying:
	\begin{enumerate}
	\item\label{star_colour}  All edges connect  pairs  consisting of a star vertex and a coloured vertex.
		\item \label{sedgerule} There is at least one edge joining every coloured vertex to a star vertex.
		\item \label{sdegrule}The degree of each star vertex is $k+2$, with each  connected to exactly one vertex of each colour.
		\item \label{starrule}For any given star vertex, the labels of the coloured vertices to which it is connected
		 are in strictly increasing order when enumerated in a counterclockwise sense.
		\item \label{sfacerule} There are exactly $N$ faces, all homeomorphic to a disc. The boundary of each face contains one angular sector
		 of each colour  (i.e., a sequence: star vertex $\ra$ coloured vertex $\ra$ star vertex).
		 		\end{enumerate}
	\end{definition}	
	\begin{remark} 
	It follows from the above, and the fact that the surface $\Gamma$ is oriented, that the vertices of each face
	appear in decreasing order as it is traversed in a counterclockwise sense starting from $k+1$.
	\end{remark}
		
	\begin{remark} \label{star_disjoint_union} The star vertices $(P_1, \dots, P_N)$,
	  viewed  as points of $\Gamma$,  are the preimages of the generic (nonbranching) point $P$ under the projection map 
	  $\Gamma\ra \Pb^1$.   The coloured vertices carrying the label $i$ are the preimages
	of the branch points $(Q^{(0)}, \dots, Q^{(k+1)})$. The ordered sequence of star vertices they are connected to,
	taken in the clockwise sense, determines a cycle in $\SS_N$, of length $\mu^{(i)}_j$.
		It follows from  requirements \ref{sedgerule} and \ref{sdegrule} that, for any given $i$, 
these cycles are disjoint, and their union is the full set $(P_1, \dots, P_n)$. 
	The product of  the cycles for  $j=1, \dots , \ell(\mu^{(i)})$ defines the group element $h_i$, which belongs to the 
	conjugacy class $\cyc(\mu^{(i)})$, whose cycle lengths  are equal to the parts of the partition 
	$\mu^{(i)}=(\mu^{(i)}_1, \dots, \mu^{(i)}_{\ell(\mu^{(i)})})$ corresponding to distinct vertices of the same colour $i$.
	These are the preimages of the point $Q^{(i)}$, and for illustrative purposes are denoted by $(Q^{(i)}_1, \dots, Q^{(i)}_{\ell(\mu^{(i)})})$
	in Figure \ref{path_lift_cycle_single}.
 The group element $h_i$ is the monodromy representation of the simple, positively oriented loop through $P$, going once around the 
branch point $Q^{(i)}$. Taking an ordered product over all these therefore gives the identity element, as in eq.~(\ref{factor_Id_N}).
	Constellations can be divided into equivalence classes,  determined by the ordered set of partitions 
$(\mu^{(0)}, \mu^{(1)}, \dots \mu^{(k+1)})$, where
\be
\mu^{(0)}=:\nu, \quad \mu^{(k+1)} =: \mu.
\ee
 By Definition \ref{frob_pure_hurwitz}, the number of elements of this equivalence class, 
divided by $N!$, is the pure Hurwitz number $H( \mu^{(1)}, \dots , \mu^{(k)}, \mu, \nu)$.
\end{remark}

\begin{remark}
The edges of the constellation are determined by lifting the $N$ simple, positively oriented loops starting and ending at $P$,
going once around each branch point, and dividing the corresponding lifted paths into cycles. Each edge  corresponds to
a pair consisting of an ``incoming'' $1/2$ path and an ``outgoing'' $1/2$-path, connecting a coloured vertex to one of the points
$(P_1, \dots, P_N)$ that represent the star vertices. Figure \ref{path_lift_cycle_single} gives a visualization 
of this monodromy interpretation of the edges for one  $3$-cycle $(abc)$.
\end{remark}


 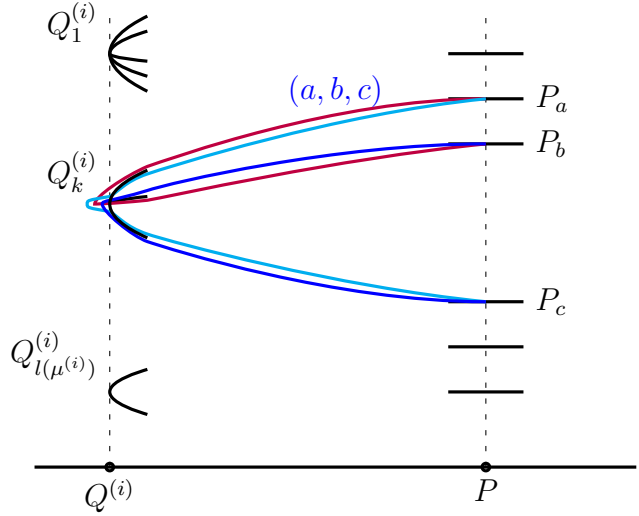
\begin{figure}[H]
 \label{path_lift_cycle_single}
\begin{center}
	\begin{tikzpicture}
		\tikzmath{
			\y1=1;
			\y2=3.5;
			\y3=5.5;}
		\draw[very thick] (0,0) -- (8,0);
		\draw[dash pattern=on 2pt off 4pt] (1,0) -- (1, 6);
		\draw[very thick,rotate=270] (-\y1-0.3,1+0.5) parabola bend (-\y1,1) (-\y1+0.3,1+0.5);
		\draw[very thick] (1,\y1) node[anchor=south east]{$Q^{(i)}_{l(\mu^{(i)})}$};
		\draw[very thick] (1,\y2) node[anchor=south east]{$Q^{(i)}_k$};
		\draw[very thick,rotate=270] (-\y2,1) parabola (-\y2-0.1,1+0.5);
		\draw[very thick,rotate=270] (-\y3-0.3,1+0.5) parabola bend (-\y3,1) (-\y3+0.3,1+0.5);
		\draw[very thick,rotate=270] (-\y3-0.5,1+0.5) parabola bend (-\y3,1) (-\y3+0.5,1+0.5);
		\draw[very thick] (1,\y3) node[anchor=south east]{$Q^{(i)}_1$};
		\draw[very thick,rotate=270] (-\y3,1) parabola (-\y3+0.1,1+0.5);
		\draw[very thick] (6-0.5,\y1) -- (6+0.5,\y1);
		\draw[very thick] (6-0.5,\y1+0.6) -- (6+0.5,\y1+0.6);
		\draw[very thick] (6-0.5,\y1+2*0.6) -- (6+0.5,\y1+2*0.6) node[right]{$P_c$};
		\draw[very thick] (6-0.5,\y3) -- (6+0.5,\y3);
		\draw[very thick] (6-0.5,\y3-0.6) -- (6+0.5,\y3-0.6) node[right] {$P_a$};
		\draw[very thick] (6-0.5,\y3-2*0.6) -- (6+0.5,\y3-2*0.6) node[right]{$P_b$};
		\draw[dash pattern=on 2pt off 4pt] (6,0) -- (6, 6);
		\draw[purple,very thick] (6,\y3-0.6) .. controls (4,\y3-0.6) and (1+0.8,\y2+0.6)..(1+0.5, \y2+0.5) ..controls (1+0.2,\y2+0.37) and (1-0.2,\y2+0.1) .. (1-0.2,\y2) .. controls (1-0.2,\y2)  and (1,\y2)..(1+0.5,\y2+0.05) ..controls (1+0.8,\y2+0.1) and (4,\y3-2*0.6 -0.2).. (6,\y3-2*0.6);
		\draw[cyan,very thick] (6,\y3-0.6) .. controls (4,\y3-0.6-0.2) and (1+0.8,\y2+0.5) .. (1+0.5,\y2+0.4) ..controls(1+0.2,\y2+0.3) and (1+0.1,\y2+0.15) ..(1,\y2+0.1)
		..controls (1-0.3,\y2+0.05) .. (1-0.3,\y2) ..
		controls (1-0.3,\y2-0.05)  .. (1,\y2-0.1) .. controls (1+0.1,\y2-0.15) and (1+0.2,\y2-0.3)  .. (1+0.5,\y2-0.4 ) .. controls (1+0.8,\y2-0.5) and (4,\y1+2*0.6+0.2) .. (6,\y1+2*0.6);
		\draw[blue,very thick] (6,\y3-2*0.6) .. controls (4,\y3-2*0.6) and (1+0.8,\y2+0.28) .. (1+0.5,\y2+0.18) .. controls (1,\y2+0.03) and (1-0.1,\y2+0.05).. (1-0.1,\y2)..
		controls (1-0.1,\y2-0.1) and (1+0.3,\y2-0.43) .. (1+0.5,\y2-0.5) .. controls (1+0.8,\y2-0.6) and (4,\y1+2*0.6) .. (6,\y1+2*0.6);
		\draw[very thick] (1,0) circle (0.05) node[below] {$Q^{(i)}$};
		\draw[very thick] (6,0) circle (0.05) node[below] {$P$};
		\draw[blue] (4,5) node {$(a,b,c)$};
		\draw[very thick,rotate=270] (-\y2-0.45,1+0.5) parabola bend (-\y2,1) (-\y2+0.45,1+0.5);
	\end{tikzpicture}
\end{center}

	 \caption{\footnotesize Lifted loops and edges. Here, $(a,b,c)$ represents a (typical) cycle  in the monodromy factorization at the indicated branch point.}\label{fig:pathABCTer}

\end{figure}
\bigskip

	Figure \ref{fig:planar_N5_constell_1} gives an example of a planar constellation with $N=5$, $k=3$ appearing
	in  \cite{ACEH1, ACEH2}. The factorization  \ref{factor_Id_N} for this case is:
	\be
	h_0 h_1 h_2 h_3 h_4 = \Ib_5 ,
	\label{factor_N5k3}
	\ee
 where 
\be
h_0 = (123),  \ h_1  = (153), \ h_2 =  (15) (23), \ h_3 = (14),  \  h_4 = h_\infty = (14)
      \ee
The corresponding partitions are
 \bea
\mu^{(0)}  &\&= (3, 1, 1) ), \quad  \mu^{(1)} = (3, 1, 1),  \ \mu^{(2)} = (2, 2, 1), \cr
\  \mu^{(3)} &\&= (2, 1,1,1), \  \,  \mu^{(4)}= (2, 1, 1, 1).
\eea

 \begin{figure}[H]
 \label{fig:planar_N5_constell_1}
\begin{center}
	\begin{tikzpicture}
		\tikzmath{
			\sts=0.13;
			\x1=0;
			\y1=0;
			\x2=2.0;
			\y2=-1.2;
			\x3=4.3;
			\y3=-2.4;
			\x4=-1.5;
			\y4=0.1;
			\x5=2.6;
			\y5=2.8;
			}
		\draw[fill=black] (\x5,\y5)  circle   (\sts)  node[anchor=south west]{\underline{5}};
		\draw[fill=black] (\x1,\y1)  circle   (\sts)  node[anchor=north east]{\underline{1}};
		\draw[fill=black] (\x2,\y2)  circle   (\sts)  node[anchor=north east]{\underline{2}};
		\draw[fill=black] (\x3,\y3)  circle   (\sts)  node[anchor=north east]{\underline{3}};
		\draw[fill=black] (\x4,\y4)  circle   (\sts)  node[anchor=north east]{\underline{4}};
		\draw[very thick,fill=white] (\x1 +0.3,\y5-0.3) circle (0.3);
		\draw (\x1+0.3,\y5-0.3) node(v215) {$2$};
		\draw[very thick,fill=white] (\x5-0.3,\y1+0.3) circle (0.3);
		\draw (\x5-0.3,\y1+0.3) node(v115) {$1$};
		\draw[very thick,fill=white] (\x5-1.3,\y5-0.3) circle (0.3);
		\draw (\x5-1.3,\y5-0.3) node(v35) {$3$};
		\draw[very thick,fill=white] (\x5-1.6,\y5-1.6) circle (0.3);
		\draw (\x5-1.6,\y5-1.6) node(vinf5) {$4$};
		\draw[very thick,fill=white] (\x5-0.1,\y5-1.3) circle (0.3);
		\draw (\x5-0.1,\y5-1.3) node(v05) {$0$};
		\draw[very thick,fill=white] (0.5*\x1+0.5*\x4,\y4+0.7) circle (0.3);
		\draw (0.5*\x1+0.5*\x4,\y4+0.7) node (v314) {$3$};
		\draw[very thick,fill=white] (\x4-0.8,\y4+0.7) circle (0.3);
		\draw (\x4-0.8,\y4+0.7) node (v04) {$0$};
		\draw[very thick,fill=white] (\x4-0.9,\y4-0.2) circle (0.3);
		\draw (\x4-0.9,\y4-0.2) node (v14) {$1$};
		\draw[very thick,fill=white] (\x4+0.4,\y4-0.9) circle (0.3);
		\draw (\x4+0.4,\y4-0.9) node (v24) {$2$};
		\draw[very thick,fill=white] (\x4+0.2,\y4-2) circle (0.3);
		\draw (\x4+0.2,\y4-2) node (vinf4) {$4$};
		\draw[very thick,fill=white] (\x1*0.6+\x2*0.4,\y2-0.5) circle (0.3);
		\draw (\x1*0.6+\x2*0.4,\y2-0.5) node (v012) {$0$};
		\draw[very thick,fill=white] (\x2+0.4,\y2-0.8) circle (0.3);
		\draw (\x2+0.4,\y2-0.8) node (v12) {$1$};
		\draw[very thick,fill=white] (\x3+0.5,\y2+1.1) circle (0.3);
		\draw (\x3+0.5,\y2+1.1) node (v223) {$2$};
		\draw[very thick,fill=white] (\x2+ 1,\y2+0.8) circle (0.3);
		\draw (\x2+1,\y2+0.8) node (v32) {$3$};
		\draw[very thick,fill=white] (\x2-0.8,\y2+0.8) circle (0.3);
		\draw (\x2-0.8,\y2+0.8) node (vinf2) {$4$};
		\draw[very thick,fill=white] (\x3 +0.3,\y3+1.3) circle (0.3);
		\draw (\x3+0.3,\y3+1.3) node (v33) {$3$};
		\draw[very thick,fill=white] (\x3-1.0,\y3+0.8) circle (0.3);
		\draw (\x3-1.0,\y3+0.8) node (vinf3) {$4$};
		\draw[very thick] (v35) -- (\x5,\y5);
		\draw[very thick] (vinf5) -- (\x5,\y5);
		\draw[very thick] (v05) -- (\x5,\y5);
		\draw[very thick] (\x4,\y4) -- (v314) -- (\x1,\y1);
		\draw[very thick] (v04) -- (\x4,\y4);
		\draw[very thick] (v14) -- (\x4,\y4);
		\draw[very thick] (v24) -- (\x4,\y4);
		\draw[very thick] (v12) -- (\x2,\y2);
		\draw[very thick] (v32) -- (\x2,\y2);
		\draw[very thick] (vinf2) -- (\x2,\y2);
		\draw[very thick] (v33) -- (\x3,\y3);
		\draw[very thick] (vinf3) -- (\x3,\y3);
		\draw[very thick] (\x1,\y1) -- (v012) -- (\x2,\y2);
		\draw[very thick] (\x1,\y1) --  (v215).. controls (\x1+1.5,\y5+0.4) and  (\x5-0.7,\y5+0.4) .. (\x5,\y5);
		\draw[very thick] (\x1,\y1) --  (v115).. controls (\x5+0.5,\y1+1) and (\x5+0.5,\y5-0.8) .. (\x5,\y5);
		\draw[very thick] (\x4,\y4) .. controls (\x4-0.25,\y4+2) and (\x4-1.75,\y4+2) .. (\x4-2,\y4) .. controls (\x4-2,\y4-1.75) .. (vinf4) .. controls (\x1+0.3,\y1-1.6) and (\x1-0.1,\y1-1.5) .. (\x1,\y1);
		\draw[very thick] (v012) .. controls (\x2*0.7+\x3*0.3, \y3-0.4) .. (\x3,\y3) .. controls (\x3+2,\y3+0.2) and (\x3+2,\y1+0.5) .. (\x3+1,\y1+0.5)
		.. controls (\x3-0.5,\y1+0.7)  .. (v115);
		\draw[very thick] (\x2,\y2) .. controls (\x2*0.5+\x3*0.5+0.05,\y2+0.15).. (v223) .. controls (\x3+2,0.7*\y2+0.3*\y3) and (\x3+1,0.5*\y2+0.5*\y3) .. (\x3,\y3);
	\end{tikzpicture}
\end{center}
\caption{\footnotesize  Example of a planar constellation with $N=5$, $k=3$.}
\end{figure}
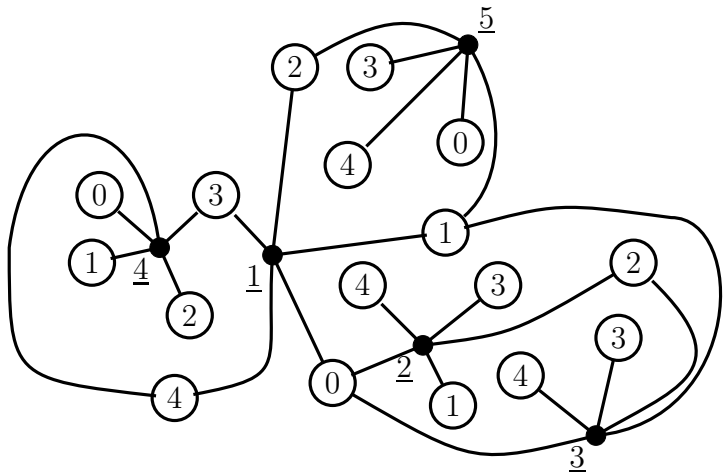

\subsection{Doubly labelled constellations}
\label{double_labelled_constell}

Doubly labelled constellations are constellations  $\CC_{N, k+2}(\Gamma)$
in which  the labelling of the  consecutive colours $(0, 1, \dots, k+1)$ is replaced by a more structured one, consisting 
of pairs $(i, j_i)$ of nonnegative integers,  called ``(colour, flavour)'' pairs.
 Their definition  is just a refinement of the one for singly labelled constellations, with enhanced details regarding the labelling.
We choose a pair of positive integers $(L, M)$, with $L\le k$, and a further set $(J_1, \dots , J_M)$ of positive integers
such that
\be
k = L +\sum_{i=1}^M J_i
\ee
with
\bea
0 &\&\le i \le L+1 \ \text{ if}\  j_i =0 ,
\label{i_ji_range_ji0}
\\
\ \text{ and}  \ 1&\&\le i \le M \ \text{ if}\   1 \le  j_i \le J_i.
\label{i_ji_range_ji1}
\eea
(The pair of integers $(L,M)$ will later be identified with those appearing in the rational weight generating function $G_{{\bf c}, {\bf d}}$
in (\ref{Gi_rat}).) The $M$-tuple of positive integers
\be
J :=(J_1, \dots , J_M)
\ee
will be called the {\em spectrum} of $\CC_{N, k+2}(\Gamma)$. Its cardinality is denoted
\be
\ell(J) := M
\ee
and the sum of its components 
\be
|J| := \sum_{j=1}^M J_j,
\ee
as with partitions.  We therefore have a total of
\be
k+2 = L + |J| + 2 
\ee
coloured vertices.

As  for any $k+2$-constellation $\CC_{N, k+2}(\Gamma)$, of degree $N$, the
graph of the constellation can be drawn on an $N$-sheeted branched cover $\Gamma \ra \Pb^1$
of the Riemann sphere with (at most) $k+2$ branch points. There is, however, also the double labelling information 
contained in the choice of integers $(L, M, (J_1, \dots , J_M))$. Moreover,  instead of just two types of vertices, there
are three: the ``star'' vertices and two types of ``coloured'' vertices.
\begin{itemize}
	\item Star vertices (denoted by black dots), still carry an integer label $\underline{k}$, with $1\le k \le N$
	(which correspond to points on  $\Gamma$ that can be interpreted as preimages of an arbitrarily
	chosen non-branch point $P$.) They may also be labelled by the symbols  $(P_1, \dots, P_N)$, where
	each $P_i$ is viewed as a point over $P$ lying on a different sheet of the branched cover.
	\end{itemize}
	 The two coloured types are:
	\begin{itemize}
	\item Round vertices, which carry the pair of labels $(i,0)$, where $i$ is integer-valued, 
	with $0\le i \le L+1$, enclosed in a circle.  These are interpreted as the primages of a set of  $L+2$ points
\be
	(Q^{(0,0)}=0, Q^{(1,0)}, \dots,  Q^{(i,0)}, \dots , Q^{(L+1,0)}=\infty),
\ee
where, for $1\le i \le L$, the $Q^{(i,0)}\in  \Pb^1$'s are distinct, finite, nonzero points  and the label  $i$ refers to the coefficient $c_i$ in (\ref{Gi_rat}).
	These are assigned one of the colours of the spectrum,  and a ``flavour'' that is set equal to $0$.
	Round vertices carrying the label $(0, 0)$ are assigned the colour ``black''
and interpreted as the preimages of the point $Q^{(0,0)} $ (which is chosen, conventionally,
as  $z=0$). Those carrying the label $(L+1, 0)$ are assigned the colour ``white'' and are interpreted 
as the preimages of the point $Q^{(L+1,0)}$ (which is chosen  conventionally as $z=\infty)$.
\item Square vertices, which carry an ordered pair $(i,j_i)$, of positive integer valued labels within the range (\ref{i_ji_range_ji1}),
enclosed in a square box. These are interpreted as the primages of a set of  $\sum_{i=1}^M J_i$  further (possible) branch points
\be
(Q^{(1, 1)}, \dots, Q^{(1, J_1)},  \dots , Q^{(M, 1)}, \dots, Q^{(M, J_M)}),
\ee
where the subscript  $i$ refers to the coefficient $d_i$ in (\ref{Gi_rat}).
The indices are again referred to as {\em colour} $(i)$ and {\em flavour }$(j_i)$, with  $1\le i \le M$,  where, for each $i$, 
we have $1\le j_i \le J_i$. The colours of the spectrum  assigned to the $i$'s  with $j_i\ge 1$ (i.e., to the $d_i$'s) 
are chosen to be different from those assigned to the round vertices.
\end{itemize}

As in general, we also refer  to the  black and white vertices as ``coloured'', and the (colour, flavour) 
pairs $(i,j_i)$  will be understood as ordered, in the following  sense.
\begin{definition}
By CF ordering of the (colour, flavour) pairs, we mean:
\bea
&\&{(0,0)} < {(1,0)} \cdots <{(L,0)} <{(1, 1)} \cdots  < {(1, J_1)} < 
 \cdots < {(M, 1)} \cdots {(M, J_M)} <{(L+1,0)} .\cr
 &\&
\label{CF_order}
\eea
\end{definition}

 The coloured vertices are understood as the distinct  points in the covering surface
 over the branch points $(Q^{(0,0)}, Q^{(1,0)}, \dots, Q^{(L,0)}, Q^{(1, 1)}, \dots, Q^{(1, J_1)},  \dots,   Q^{(M, 1)}, \dots, Q^{(M, J_M)}, Q^{(L+1,0)})$,
  with the pair $(i,j_i)$  of (colour, flavour) labels indicating the branch point to which they project.

\begin{definition}\label{double_label_constellation}
For a given $(L, M, J)$, a {\em doubly labelled} $L+|J|+2$-constellation $\CC_{N, L+|J|+2}(\Gamma)$ of degree $N$ is 
a graph  on $\Gamma$ formed from $N$ star vertices and $L+|J|+2$ coloured vertices that connect pairs of these, satisfying 
the following requirements:
	\begin{enumerate}
\item\label{dstar_colour}  The graph is bipartite, in the sense that all the edges connect  pairs 
consisting of a star vertex and a coloured vertex.
	\item \label{dedgerule}  Each coloured vertex is connected to at least one star vertex, and every type $(i, j_i)$
	of square vertex has at least one representative that is connected to two or more star vertices. 
\item \label{ddegrule} Every star vertex is connected to exactly one of the coloured vertices of type $(i,j_i)$
(and hence, is of degree $L + |J| + 2$).	
		\item\label{dstarrule} The coloured vertices $\{(i,j_i)\}$ associated to any given star vertex,  
		when enumerated  in a counterclockwise sense,  are in increasing CF  order,  starting from $(0,0)$ .	
	\item \label{dfacerule} There are exactly $N$ faces, all homeomorphic to a disc. Viewing the boundary of any face
	 as a positively oriented polygonal curve, its coloured vertices  consist of all possible pairs $(i,j_i)$.
	 \end{enumerate}
	 \begin{remark}
	 It again follows from the above, and the fact that the surface $\Gamma$ is oriented, that the vertices of each face
	appear in decreasing CF order as it is traversed in a counterclockwise sense.
	\end{remark}
	\end{definition}

\begin{remark} \label{star_disjoint_union} It also follows from  requirements \ref{dedgerule} and \ref{ddegrule}
that the distinct vertices with any given label $(i, j_i)$ are connected to a disjoint set of star vertices, 
whose union is the full set $(P_1, \dots, P_N)$. These therefore define an element $h_{(i, j_i)} \in\SS_N$, 
which is the monodromy of the simple, positively oriented loop through $P$ going once around the 
branch point $Q^{(i, j_i)}$. Taking a CF ordered product over all these therefore gives the identity element:
	\be
h_{(0,0)} \left(\overrightarrow{\prod}_{i=1}^Lh_{(i, 0)}\right) 
 \overrightarrow{\prod}_{k=1}^M \overrightarrow{\prod}_{j_k=1}^{J_k} h_{(k, j_k)} h_{(\infty,0)}= \Ib.
 \label{doubly_labelled_factoriz}
\ee
where none of the elements $\{h_{(k, j_k)}\}_{1 \le k \le M \atop 1\le j_k \le J_k}$ is the identity $\Ib_N$.
\end{remark}

Figure \ref{fig:planar_N5_constell_2} gives an example of a planar constellation with $N=5$, $k=3$
that is equivalent to the one in Figure \ref{fig:planar_N5_constell_1}, but with doubly labelled coloured vertices
with $(L,M,J) = (1,1,(2))$.
The factorization for this labelling is equivalent to (\ref{factor_N5k3}), but with double labelling:
	\be
	h_{(0,0))} h_{(1,0)} h_{(1,1)} h_{(1,2)} h_{(\infty,0)} = \Ib_5  ,
	\ee
 where 
\be
h_{(0,0)} =  (123),\  h_{(1,0)} = (153) , \  h_{(1,1)}  = (15)(23)   , \ h_{(1,2)} = (14) , \  h_{(2,0)}= (14).
\label{N5_L1_M1_factoriz}
      \ee
The corresponding partitions are again
 \bea
\mu^{(0,0)}  &\&= (3, 1, 1) ), \quad  \mu^{(1,0)} = (3, 1, 1)),  \ \mu^{(1,1)} = (2, 2, 1), \cr
\  \mu^{(1,2)} &\&= (2, 1,1,1), \  \,  \mu^{(\infty,0)}= (2, 1, 1, 1).
\eea

 \begin{figure}[H]
 \label{fig:planar_N5_constell_2}
\begin{center}
	\begin{tikzpicture}
		\tikzmath{
			\sts=0.13;
			\x1=0;
			\y1=0;
			\x2=2.0;
			\y2=-1.2;
			\x3=4.3;
			\y3=-2.4;
			\x4=-1.5;
			\y4=0.1;
			\x5=2.6;
			\y5=2.8;
			\rr=0.34;
			}
		\draw[fill=black] (\x5,\y5)  circle   (\sts)  node[anchor=south west]{\underline{5}};
		\draw[fill=black] (\x1,\y1)  circle   (\sts)  node[anchor=north east]{\underline{1}};
		\draw[fill=black] (\x2,\y2)  circle   (\sts);
		\draw (\x2-0.2,\y2-0.4) node{\underline{2}};
		\draw[fill=black] (\x3,\y3)  circle   (\sts)  node[anchor=north east]{\underline{3}};
		\draw[fill=black] (\x4,\y4)  circle   (\sts)  node[anchor=north east]{\underline{4}};
		\draw[very thick,fill=white] (\x1,\y5-0.6) rectangle (\x1+0.6,\y5);
		\draw (\x1+0.3,\y5-0.3) node(v215) {$1,1$};
		\draw[very thick,fill=white] (\x5-0.3,\y1+0.3) circle (\rr);
		\draw (\x5-0.3,\y1+0.3) node(v115) {$1,0$};
		\draw[very thick,fill=white] (\x5-1.6,\y5-0.6) rectangle (\x5-1,\y5);
		\draw (\x5-1.3,\y5-0.3) node(v35) {$1,2$};
		\draw[very thick,fill=white] (\x5-1.6,\y5-1.6) circle (\rr);
		\draw (\x5-1.6,\y5-1.6) node(vinf5) {$2,0$};
		\draw[very thick,fill=white] (\x5-0.1,\y5-1.3) circle (\rr);
		\draw (\x5-0.1,\y5-1.3) node(v05) {$0,0$};
		\draw[very thick,fill=white] (0.5*\x1+0.5*\x4-0.3,\y4+0.4) rectangle (0.5*\x1+0.5*\x4+0.3,\y4+1);
		\draw (0.5*\x1+0.5*\x4,\y4+0.7) node (v314) {$1,2$};
		\draw[very thick,fill=white] (\x4-0.8,\y4+0.7) circle (\rr);
		\draw (\x4-0.8,\y4+0.7) node (v04) {$0,0$};
		\draw[very thick,fill=white] (\x4-0.9,\y4-0.2) circle (\rr);
		\draw (\x4-0.9,\y4-0.2) node (v14) {$1,0$};
		\draw[very thick,fill=white] (\x4+0.1,\y4-1.2) rectangle (\x4+0.7,\y4-0.6);
		\draw (\x4+0.4,\y4-0.9) node (v24) {$1,1$};
		\draw[very thick,fill=white] (\x4+0.2,\y4-2) circle (\rr);
		\draw (\x4+0.2,\y4-2) node (vinf4) {$2,0$};
		\draw[very thick,fill=white] (\x1*0.6+\x2*0.4,\y2-0.5) circle (\rr);
		\draw (\x1*0.6+\x2*0.4,\y2-0.5) node (v012) {$0,0$};
		\draw[very thick,fill=white] (\x2+0.4,\y2-0.8) circle (\rr);
		\draw (\x2+0.4,\y2-0.8) node (v12) {$1,0$};
		\draw[very thick,fill=white] (\x3+0.2,\y2+0.8) rectangle (\x3+0.8,\y2+1.4);
		\draw (\x3+0.5,\y2+1.1) node (v223) {$1,1$};
		\draw[very thick,fill=white] (\x2+0.7,\y2+0.5) rectangle (\x2+1.3,\y2+1.1);
		\draw (\x2+1,\y2+0.8) node (v32) {$1,2$};
		\draw[very thick,fill=white] (\x2-0.8,\y2+0.8) circle (\rr);
		\draw (\x2-0.8,\y2+0.8) node (vinf2) {$2,0$};
		\draw[very thick,fill=white] (\x3,\y3+1) rectangle (\x3+0.6,\y3+1.6);
		\draw (\x3+0.3,\y3+1.3) node (v33) {$1,2$};
		\draw[very thick,fill=white] (\x3-1.0,\y3+0.8) circle (\rr);
		\draw (\x3-1.0,\y3+0.8) node (vinf3) {$2,0$};
		\draw[very thick] (\x5-1.0,\y5-0.25) -- (\x5,\y5);
		\draw[very thick] (\x5-1.35,\y5-1.35) -- (\x5,\y5);
		\draw[very thick] (v05) -- (\x5,\y5);
		\draw[very thick] (\x4,\y4) -- (v314) -- (\x1,\y1);
		\draw[very thick] (\x4-0.55,\y4+0.45) -- (\x4,\y4);
		\draw[very thick] (\x4-0.55,\y4-0.2) -- (\x4,\y4);
		\draw[very thick] (v24) -- (\x4,\y4);
		\draw[very thick] (v12) -- (\x2,\y2);
		\draw[very thick] (\x2+0.7,\y2+0.5)  -- (\x2,\y2);
		\draw[very thick] (\x2-0.55,\y2+0.55) -- (\x2,\y2);
		\draw[very thick] (\x3+0.3,\y3+1.0) -- (\x3,\y3);
		\draw[very thick] (\x3-0.75,\y3+0.55)  -- (\x3,\y3);
		\draw[very thick] (\x1,\y1) -- (\x1*0.6+\x2*0.4,\y2-0.15);
		\draw[very thick] (\x1*0.6+\x2*0.4+0.3,\y2-0.3) -- (\x2,\y2);
		\draw[very thick] (\x1,\y1) --  (v215);
		\draw[very thick] (\x1+0.6,\y5-0.1) .. controls (\x1+1.5,\y5+0.4) and  (\x5-0.7,\y5+0.4) .. (\x5,\y5);
		\draw[very thick] (\x1,\y1) --  (\x5-0.6,\y1+0.2);
		\draw[very thick] (\x5-0.1,\y1+0.6).. controls (\x5+0.5,\y1+1) and (\x5+0.5,\y5-0.8) .. (\x5,\y5);
		\draw[very thick] (\x4,\y4) .. controls (\x4-0.25,\y4+2) and (\x4-1.75,\y4+2) .. (\x4-2,\y4) .. controls (\x4-2,\y4-1.75) .. (\x4-0.15,\y4-2);
		\draw[very thick] (\x4+0.55,\y4-2) .. controls (\x1+0.3,\y1-1.6) and (\x1-0.1,\y1-1.5) .. (\x1,\y1);
		\draw[very thick](\x1*0.6+\x2*0.4+0.25,\y2-0.75)   .. controls (\x2*0.7+\x3*0.3, \y3-0.4) .. (\x3,\y3) .. controls (\x3+2,\y3+0.2) and (\x3+2,\y1+0.5) .. (\x3+1,\y1+0.5)
		.. controls (\x3-0.5,\y1+0.7)  .. (\x5-0.0,\y1+0.4);
		\draw[very thick] (\x2,\y2) .. controls (\x2*0.5+\x3*0.5+0.05,\y2+0.15).. (\x3+0.2,\y2+0.8);
		\draw[very thick] (\x3+0.8,\y2+0.8).. controls (\x3+2,0.7*\y2+0.3*\y3) and (\x3+1,0.5*\y2+0.5*\y3) .. (\x3,\y3);
	\end{tikzpicture}
\end{center}
\caption{\footnotesize  Example of a doubly labelled planar constellation with $N=5$, $L=1$, $M=1$, $J_1 =2$.}
\end{figure}
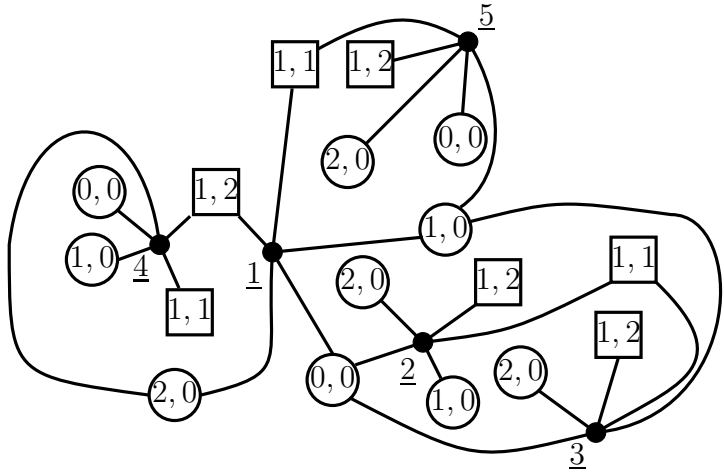
Figure \ref{path_lift_cycle} is a doubly labelled version of Figure \ref{path_lift_cycle_single}. 
It again  illustrates the lift of a closed, simple, positively oriented loop, starting at $P$,  going once around the
branch point $Q^{(i,j_i)}$,  and returning to $P$. The points $(P_a, P_b, P_c)$ are cyclically permuted
when the lifted path encircles the ramification point $Q^{(i, j_i)}_k$, three times, giving a contribution $(a b c)$ to the monodromy
group around this point. Each $1/2$ path contributes to the resulting three edges, with a pair of (inward, outward)
$1/2$ paths constituting each edge.


 \begin{figure}[H]
 \label{path_lift_cycle}
\begin{center}
	\begin{tikzpicture}
		\tikzmath{
			\y1=1;
			\y2=3.5;
			\y3=5.5;}
		\draw[very thick] (0,0) -- (8,0);
		\draw[dash pattern=on 2pt off 4pt] (1,0) -- (1, 6);
		\draw[very thick,rotate=270] (-\y1-0.3,1+0.5) parabola bend (-\y1,1) (-\y1+0.3,1+0.5);
		\draw[very thick] (1,\y1) node[anchor=south east]{$Q^{(i,j_i)}_{l(\nu^{(i,j)})}$};
		\draw[very thick] (1,\y2) node[anchor=south east]{$Q^{(i,j_i)}_k$};
		\draw[very thick,rotate=270] (-\y2,1) parabola (-\y2-0.1,1+0.5);
		\draw[very thick,rotate=270] (-\y3-0.3,1+0.5) parabola bend (-\y3,1) (-\y3+0.3,1+0.5);
		\draw[very thick,rotate=270] (-\y3-0.5,1+0.5) parabola bend (-\y3,1) (-\y3+0.5,1+0.5);
		\draw[very thick] (1,\y3) node[anchor=south east]{$Q^{(i,j_i)}_1$};
		\draw[very thick,rotate=270] (-\y3,1) parabola (-\y3+0.1,1+0.5);
		\draw[very thick] (6-0.5,\y1) -- (6+0.5,\y1);
		\draw[very thick] (6-0.5,\y1+0.6) -- (6+0.5,\y1+0.6);
		\draw[very thick] (6-0.5,\y1+2*0.6) -- (6+0.5,\y1+2*0.6) node[right]{$P_c$};
		\draw[very thick] (6-0.5,\y3) -- (6+0.5,\y3);
		\draw[very thick] (6-0.5,\y3-0.6) -- (6+0.5,\y3-0.6) node[right] {$P_a$};
		\draw[very thick] (6-0.5,\y3-2*0.6) -- (6+0.5,\y3-2*0.6) node[right]{$P_b$};
		\draw[dash pattern=on 2pt off 4pt] (6,0) -- (6, 6);
		\draw[purple,very thick] (6,\y3-0.6) .. controls (4,\y3-0.6) and (1+0.8,\y2+0.6)..(1+0.5, \y2+0.5) ..controls (1+0.2,\y2+0.37) and (1-0.2,\y2+0.1) .. (1-0.2,\y2) .. controls (1-0.2,\y2)  and (1,\y2)..(1+0.5,\y2+0.05) ..controls (1+0.8,\y2+0.1) and (4,\y3-2*0.6 -0.2).. (6,\y3-2*0.6);
		\draw[cyan,very thick] (6,\y3-0.6) .. controls (4,\y3-0.6-0.2) and (1+0.8,\y2+0.5) .. (1+0.5,\y2+0.4) ..controls(1+0.2,\y2+0.3) and (1+0.1,\y2+0.15) ..(1,\y2+0.1)
		..controls (1-0.3,\y2+0.05) .. (1-0.3,\y2) ..
		controls (1-0.3,\y2-0.05)  .. (1,\y2-0.1) .. controls (1+0.1,\y2-0.15) and (1+0.2,\y2-0.3)  .. (1+0.5,\y2-0.4 ) .. controls (1+0.8,\y2-0.5) and (4,\y1+2*0.6+0.2) .. (6,\y1+2*0.6);
		\draw[blue,very thick] (6,\y3-2*0.6) .. controls (4,\y3-2*0.6) and (1+0.8,\y2+0.28) .. (1+0.5,\y2+0.18) .. controls (1,\y2+0.03) and (1-0.1,\y2+0.05).. (1-0.1,\y2)..
		controls (1-0.1,\y2-0.1) and (1+0.3,\y2-0.43) .. (1+0.5,\y2-0.5) .. controls (1+0.8,\y2-0.6) and (4,\y1+2*0.6) .. (6,\y1+2*0.6);
		\draw[very thick,rotate=270] (-\y2-0.45,1+0.5) parabola bend (-\y2,1) (-\y2+0.45,1+0.5);
		\draw[very thick] (1,0) circle (0.05) node[below] {$Q^{(i,j_i)}$};
		\draw[very thick] (6,0) circle (0.05) node[below] {$P$};
		\draw[blue] (4,5) node {$(a,b,c)$};
	\end{tikzpicture}
\end{center}

	 \caption{\footnotesize Lifted loops and edges. Here, $(a,b,c)$ represents a (typical) cycle  in the monodromy factorization at the indicated branch point.}
	 \label{fig:pathABCTer}
\end{figure}
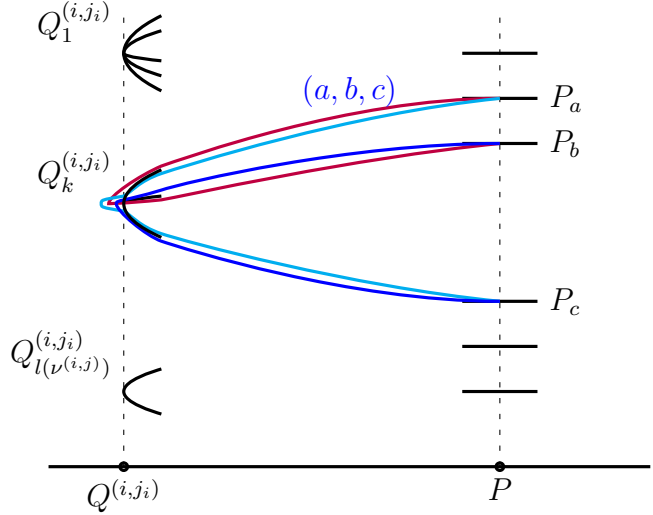
\bigskip

The doubly labelled constellations are divided, as in general, into equivalence classes  determined by the set of partitions
\be
\{\nu:=\mu^{(0)}, \mu^{(1)},...,\mu^{(L)},\nu^{(1,1)}, \dots \nu^{(1,J_1)} \dots, \nu^{(M, 1)} \dots \nu^{(M,J_M)},\mu:=\mu^{(L+1)}\},
\label{mu_nu_partitions}
\ee
with a partial ordering of the vertices carrying the same pair of labels $(i, j_i)$, defined  in the following way:
\begin{itemize}
	\item $\mu^{(i)}_k$ is the number of edges ending in the $k$-th round vertex of colour $i$
	\item $\nu^{(i,j_i)}_k$ is the number of edges ending in the $k$-th square vertex of color $i$ and flavor $j_i$
	\item $\mu_k$ is the number of edges ending in the $k$-th white vertex
	\item $\nu_k$ is the number of edges ending in the $k$-th black vertex
\end{itemize}
corresponding to the cycle lengths of the conjugacy classes of the elements 
\be
(\{h_{(i,0)}\}_{i=1, \dots L+1}, \quad \{h_{(i, j_i)}\}_{1 \le i \le M,  \ 1\le j_i \le J_i},
\ee
where none of the partitions $\{\nu^{(i, j_i)}\}$ is the trivial one $(1)^N$.
The partial ordering  is defined by the weakly decreasing order of the parts of each partition.

Two constellations are again regarded as equivalent (from the viewpoint of Hurwitz enumeration) if the
CF ordered, labelled sets of partitions (\ref{mu_nu_partitions}) 
corresponding to them are the same. The number of elements in the equivalence class is thus, again, $N!$ times
the Hurwitz number 
$H(\nu, \mu^{(1)},...,\mu^{(L)},\nu^{(1,1)}, \dots \nu^{(1,J_1)}, \dots, \nu^{(M, 1)} \dots \nu^{(M, J_M)},\mu)$.


\subsection{Rationally weighted doubly labelled constellations}
\label{double_labelled_constell}

We now define what is meant by a rationally weighted constellation.
\begin{definition}
\label{double_labelled_rational_weighting}
For a rational weight generating function $G_{{\bf c}, {\bf d}}$, a rationally weighted constellation $\GG^J_{{\bf c} ,{\bf d}}$
is a doubly labelled $L+|J|+2$-constellation of  type $(L, M, J)$ and degree $N$, with the following weights attached to its 
 vertices and edges:
\begin{itemize}
	\item The weight of each round vertex of color $i$, where $i =1, \dots, L$, is $(\beta c_i)^{-1}$.
	\item The weight of each square vertex of color $i$, where $i =1, \dots , M$,  is $(-\beta d_i)^{-1}$.
	\item The weight of each edge ending at a round vertex is $\beta c_i$.
	\item The weight of each edge ending at a square vertex is $-\beta d_i$.
	\item The weight of each star vertex is $\gamma$.
	\item The weight of each black vertex of degree $j$ is $p_j(s)=j s_j$.
	\item The weight of each white vertex of degree $j$ is $p_j(t)=j t_j$.
\end{itemize}
\end{definition}
\begin{definition}
The weight $W_{\GG^J_{{\bf c},{\bf d}}}$ of the doubly labelled weighted constellation 
is defined to be the product of  the weights of its edges and vertices multiplied by 
${(-1)^{|J|}\over N!}$.
\end{definition}
The following figures illustrate the labelling and weighting of the different types of vertices and 
edges.
Figure \ref{vertex_edge weights}  illustrates the five types of vertex weights, both for circle and square types of
coloured vertices, and for star vertices, as well as the weights for all edges connected to them.
Figure  \ref{fig:weightsVerticesAllTypesCover} places the weighted vertices and edges into a combined picture,
adding the branch points $\{Q^{(i, j_i)}\}$ of which the coloured vertices are
pre-images, as well as the generic point $P$, with its pre-images $(P_1, \dots, P_N)$.

 \begin{figure}[H]
  \label{vertex_edge weights}
		\tikzmath{
			\rr=0.34;
			\nst=6;
			\nlb=3.5;
			\x1=\rr*cos(72);
			\y1=\rr*sin(72);
			\x2=\rr*cos(130);
			\y2=\rr*sin(130);
			\x3=\rr*cos(216);
			\y3=\rr*sin(216);
			\x4=\rr*cos(288);
			\y4=\rr*sin(288);
			\x5=\rr;
			\y5=0;
			\of1=0;
			\sts=0.13;
			}
 \begin{minipage}{0.35\linewidth}
  \begin{center}
	\begin{tikzpicture}
		\draw (0,0) node {$i,0$};
		\draw[very thick] (0,0) circle(\rr) node[left=8pt,blue] {$(\beta c_i)^{-1}$};
		\draw[very thick] (\x1,\y1)  -- (\nst*\x1,\nst*\y1);
		\draw[very thick] (\x2,\y2)  -- (\nst*\x2,\nst*\y2);
		\draw[very thick] (\x3,\y3) -- (\nst*\x3,\nst*\y3);
		\draw[very thick] (\x4,\y4) -- (\nst*\x4,\nst*\y4);
		\draw[very thick] (\x5,\y5) -- (\nst*\x5,\nst*\y5);
		\draw (\nlb*\x1,\nlb*\y1-0.1) node[above,rotate=72,blue] {$\beta c_i$};
		\draw(\nlb*\x2,\nlb*\y2-0.1) node[above,rotate=180+130,blue] {$\beta c_i$};
		\draw(\nlb*\x3,\nlb*\y3-0.1) node[above,rotate=180+216,blue] {$\beta c_i$};
		\draw(\nlb*\x4,\nlb*\y4-0.1) node[above,rotate=288,blue] {$\beta c_i$};
		\draw(\nlb*\x5,\nlb*\y5-0.1) node[above,rotate=0,blue] {$\beta c_i$};
		\draw[fill=black] (\nst*\x1,\nst*\y1) circle (\sts) node[anchor=west,blue] {\Large$\gamma$};
		\draw[fill=black] (\nst*\x2,\nst*\y2) circle (\sts) node[anchor=west,blue] {$\gamma$};
		\draw[fill=black] (\nst*\x3,\nst*\y3) circle (\sts) node[anchor=west,blue] {$\gamma$};
		\draw[fill=black] (\nst*\x4,\nst*\y4) circle (\sts) node[anchor=west,blue] {$\gamma$};
		\draw[fill=black] (\nst*\x5,\nst*\y5) circle (\sts) node[anchor=west,blue] {$\gamma$};
	\end{tikzpicture}
  \end{center}
 \end{minipage}
 \begin{minipage}{0.315\linewidth}
  \begin{center}
	\begin{tikzpicture}
		\draw (\of1,0) node {$0,0$};
		\draw[thick] (\of1,0) circle(\rr) node[left=8pt,blue] {$p_j({\bf s})$};
		\draw[thick] (\of1+\x1,\y1) -- (\of1+\nst*\x1,\nst*\y1);
		\draw[thick] (\of1+\x2,\y2) -- (\of1+\nst*\x2,\nst*\y2);
		\draw[thick] (\of1+\x3,\y3) -- (\of1+\nst*\x3,\nst*\y3);
		\draw[thick] (\of1+\x4,\y4) -- (\of1+\nst*\x4,\nst*\y4);
		\draw[thick] (\of1+\x5,\y5) -- (\of1+\nst*\x5,\nst*\y5);
		\draw (\of1,\nst*\y4 -0.35) node {\sf{degree} $j$};
		\draw[fill=black] (\of1+\nst*\x1,\nst*\y1) circle (\sts) node[anchor=west,blue] {$\gamma$};
		\draw[fill=black] (\of1+\nst*\x2,\nst*\y2) circle (\sts) node[anchor=west,blue] {$\gamma$};
		\draw[fill=black] (\of1+\nst*\x3,\nst*\y3) circle (\sts) node[anchor=west,blue] {$\gamma$};
		\draw[fill=black] (\of1+\nst*\x4,\nst*\y4) circle (\sts) node[anchor=west,blue] {$\gamma$};
		\draw[fill=black] (\of1+\nst*\x5,\nst*\y5) circle (\sts) node[anchor=west,blue] {$\gamma$};
		\draw (\nlb*\x1,\nlb*\y1-0.1) node[above,rotate=72,blue] {$1$};
		\draw(\nlb*\x2,\nlb*\y2-0.1) node[above,rotate=180+130,blue] {$1$};
		\draw(\nlb*\x3,\nlb*\y3-0.1) node[above,rotate=180+216,blue] {$1$};
		\draw(\nlb*\x4,\nlb*\y4-0.1) node[above,rotate=288,blue] {$1$};
		\draw(\nlb*\x5,\nlb*\y5-0.1) node[above,rotate=0,blue] {$1$};
	\end{tikzpicture}
  \end{center}
 \end{minipage}
 \begin{minipage}{0.315\linewidth}
  \begin{center}
	\begin{tikzpicture}
		\draw (2*\of1,0) node {$\scriptstyle{\infty,0}$};
		\draw[thick] (2*\of1,0) circle(\rr) node[left=8pt,blue] {$p_j({\bf t})$};
		\draw[thick] (2*\of1+\x1,\y1) -- (2*\of1+\nst*\x1,\nst*\y1);
		\draw[thick] (2*\of1+\x2,\y2) -- (2*\of1+\nst*\x2,\nst*\y2);
		\draw[thick] (2*\of1+\x3,\y3) -- (2*\of1+\nst*\x3,\nst*\y3);
		\draw[thick] (2*\of1+\x4,\y4) -- (2*\of1+\nst*\x4,\nst*\y4);
		\draw[thick] (2*\of1+\x5,\y5) -- (2*\of1+\nst*\x5,\nst*\y5);
		\draw (2*\of1,\nst*\y4 -0.35) node {\sf{degree} $j$};
		\draw[fill=black] (2*\of1+\nst*\x1,\nst*\y1) circle (\sts) node[anchor=west,blue] {$\gamma$};
		\draw[fill=black] (2*\of1+\nst*\x2,\nst*\y2) circle (\sts) node[anchor=west,blue] {$\gamma$};
		\draw[fill=black] (2*\of1+\nst*\x3,\nst*\y3) circle (\sts) node[anchor=west,blue] {$\gamma$};
		\draw[fill=black] (2*\of1+\nst*\x4,\nst*\y4) circle (\sts) node[anchor=west,blue] {$\gamma$};
		\draw[fill=black] (2*\of1+\nst*\x5,\nst*\y5) circle (\sts) node[anchor=west,blue] {$\gamma$};
		\draw (\nlb*\x1,\nlb*\y1-0.1) node[above,rotate=72,blue] {$1$};
		\draw(\nlb*\x2,\nlb*\y2-0.1) node[above,rotate=180+130,blue] {$1$};
		\draw(\nlb*\x3,\nlb*\y3-0.1) node[above,rotate=180+216,blue] {$1$};
		\draw(\nlb*\x4,\nlb*\y4-0.1) node[above,rotate=288,blue] {$1$};
		\draw(\nlb*\x5,\nlb*\y5-0.1) node[above,rotate=0,blue] {$1$};
	\end{tikzpicture}
  \end{center}
 \end{minipage}
	\begin{minipage}{0.35\linewidth}
	 \tikzmath{
		 \nst=5;
		 \nlb=3;
		 }
		\begin{center}
	\begin{tikzpicture}
		\draw (3*\of1,0) node (a) {$i,j$};
		\draw[very thick] (3*\of1+\x1,\y1)  -- (3*\of1+\nst*\x1,\nst*\y1);
		\draw[very thick] (3*\of1+\x2,\y2)  -- (3*\of1+\nst*\x2,\nst*\y2);
		\draw[very thick] (3*\of1+\x3,\y3) -- (3*\of1+\nst*\x3,\nst*\y3);
		\draw[very thick] (3*\of1+\x4,\y4) -- (3*\of1+\nst*\x4,\nst*\y4);
		\draw[very thick] (3*\of1+\x5,\y5) -- (3*\of1+\nst*\x5,\nst*\y5);
		\draw[very thick,fill=white] (3*\of1-\rr,-\rr) rectangle (3*\of1+\rr,\rr);
		\draw (3*\of1,0) node (a) {$i,j_i$};
		\draw (a) node[left=0pt of a.west,blue] {$(-\beta d_i)^{-1}$};
		\draw (3*\of1+\nlb*\x1,\nlb*\y1) node[below,rotate=72,blue] {$-\beta d_i$};
		\draw(3*\of1+\nlb*\x2,\nlb*\y2-0.1) node[above,rotate=180+130,blue] {$-\beta d_i$};
		\draw(3*\of1+\nlb*\x3,\nlb*\y3) node[below,rotate=180+216,blue] {$-\beta d_i$};
		\draw(3*\of1+\nlb*\x4,\nlb*\y4-0.1) node[above,rotate=288,blue] {$-\beta d_i$};
		\draw(3*\of1+\nlb*\x5,\nlb*\y5-0.1) node[above,rotate=0,blue] {$-\beta d_i$};
		\draw[fill=black] (3*\of1+\nst*\x1,\nst*\y1) circle (\sts) node[anchor=west,blue] {$\gamma$};
		\draw[fill=black] (3*\of1+\nst*\x2,\nst*\y2) circle (\sts) node[anchor=east,blue] {$\gamma$};
		\draw[fill=black] (3*\of1+\nst*\x3,\nst*\y3) circle (\sts) node[anchor=east,blue] {$\gamma$};
		\draw[fill=black] (3*\of1+\nst*\x4,\nst*\y4) circle (\sts) node[anchor=west,blue] {$\gamma$};
		\draw[fill=black] (3*\of1+\nst*\x5,\nst*\y5) circle (\sts) node[anchor=west,blue] {$\gamma$};
	\end{tikzpicture}
		\end{center}
	\end{minipage}
\caption{\footnotesize  Weights. Left: coloured vertices and edges. Center: black vertices. Right: white vertices. 
The vertex degree is $j=5$ in all the examples.}
	 \label{fig:weightsVerticesAllTypes}
\end{figure}
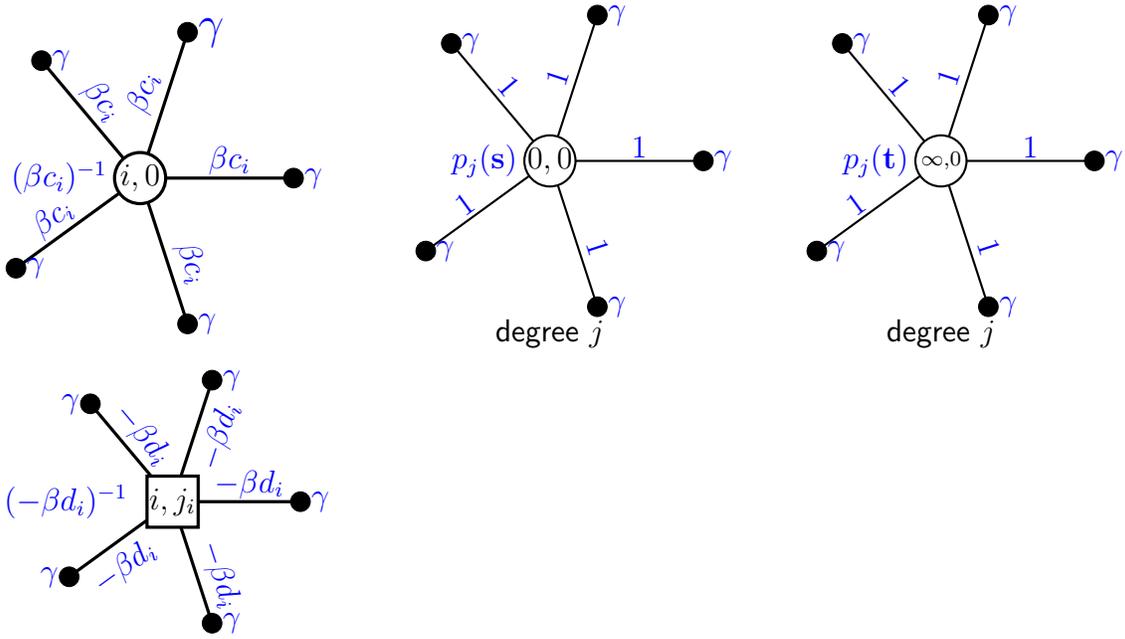
\bigskip

\begin{figure}[H]
\label{fig:weightsVerticesAllTypesCover}
 \begin{minipage}{0.4\linewidth}
  \begin{center}
	\begin{tikzpicture}
		\tikzmath{
			\y1=0;
			\y2=2;
			\y3=4;
			\y0=-1;
			\of1=3;
			\sts=0.13;
			\qsd=1.2*0.28;
			}
		\draw[very thick] (0,\y2) -- (\of1,0.5*\y2+0.5*\y3);
		\draw[very thick] (0,\y2) -- (\of1,\y2);
		\draw[very thick] (0,\y2) -- (\of1,\y1);
		\draw[blue] (0.5*\of1, 0.5*\y1+0.5*\y2+0.3) node[rotate=330] {$\beta c_i$};
		\draw[blue] (0.7*\of1, \y2+0.2) node {$\beta c_i$};
		\draw[blue] (0.6*\of1, 0.25*\y3+0.75*\y2+0.35) node[rotate=20] {$\beta c_i$};
		\draw[very thick,fill=white] (0,\y1) circle(\qsd);
		\draw[very thick,fill=white] (0,\y2) circle(\qsd);
		\draw[very thick,fill=white] (0,\y3) circle(\qsd);
		\draw[very thick] (0,\y1) node {$i,0$};
		\draw[very thick] (0,\y2) node {$i,0$};
		\draw[very thick] (0,\y3) node {$i,0$};
		\draw[fill=black] (\of1,0.5*\y2+0.5*\y3) circle (\sts);
		\draw[fill=black] (\of1,\y2) circle (\sts);
		\draw[fill=black] (\of1,\y1) circle (\sts);
		\draw[blue] (-0.3,\y1) node[left=3pt] {$(\beta c_i)^{-1}$};
		\draw[blue] (-0.3,\y2) node[left=3pt] {$(\beta c_i)^{-1}$};
		\draw[blue] (-0.3,\y3) node[left=3pt] {$(\beta c_i)^{-1}$};
		\draw[blue] (\of1,0.5*\y2+0.5*\y3) node[above=1pt] {$\gamma$};
		\draw[blue] (\of1,\y2) node[above=1pt] {$\gamma$};
		\draw[blue] (\of1,\y1) node[above=1pt] {$\gamma$};
		\draw[black] (\of1,0.5*\y2+0.5*\y3) node[right=2pt] {$\underline{a}$};
		\draw[black] (\of1,\y2) node[right=2pt] {$\underline{b}$};
		\draw[black] (\of1,\y1) node[right=2pt] {$\underline{c}$};
		\draw[black] (0+0.2,\y1+0.2) node[anchor=south west] {$Q^{(i,0)}_{\ell(\mu^{(i,0)})}$};
		\draw[black] (0+0.2,\y2+0.2) node[anchor=south west] {$Q^{(i,0)}_{k}$};
		\draw[black] (0+0.2,\y3+0.2) node[anchor=south west] {$Q^{(i,0)}_{1}$};
		\draw[black] (0,\y0) node {$Q^{(i,0)}$};
	\end{tikzpicture}
	\end{center}
	\end{minipage}
 \begin{minipage}{0.4\linewidth}
  \begin{center}
	\begin{tikzpicture}
		\tikzmath{
			\y1=0;
			\y2=2;
			\y0=-1;
			\y3=4;
			\of1=3;
			\sts=0.13;
			\qsd=1.2*0.28;
			}
		\draw[very thick] (0,\y2) -- (\of1,0.7*\y2+0.3*\y3);
		\draw[very thick] (0,\y2) -- (\of1,\y3);
		\draw[very thick] (0,\y2) -- (\of1,0.75*\y1+0.25*\y2);
		\draw[blue] (0.7*\of1, 0.5*\y1+0.5*\y2+0.2) node[rotate=330] {$1$};
		\draw[blue] (0.7*\of1, 0.3*\y3+0.7*\y2+0) node[rotate=20] {$1$};
		\draw[blue] (0.6*\of1, 0.7*\y3+0.3*\y2+0.1) node[rotate=35] {$1$};
		\draw[very thick,fill=white] (0,\y1) circle(\qsd);
		\draw[very thick,fill=white] (0,\y2) circle(\qsd);
		\draw[very thick,fill=white] (0,\y3) circle(\qsd);
		\draw[very thick] (0,\y1) node {$0,0$};
		\draw[very thick] (0,\y2) node {$0,0$};
		\draw[very thick] (0,\y3) node {$0,0$};
		\draw[fill=black] (\of1,\y3) circle (\sts);
		\draw[fill=black] (\of1,0.7*\y2+0.3*\y3) circle (\sts);
		\draw[fill=black] (\of1,0.75*\y1+0.25*\y2) circle (\sts);
		\draw[blue] (\of1,0.7*\y2+0.3*\y3) node[above=1pt] {$\gamma$};
		\draw[blue] (\of1,0.75*\y1+0.25*\y2) node[above=1pt] {$\gamma$};
		\draw[blue] (\of1,\y3) node[above=1pt] {$\gamma$};
		\draw[black] (\of1,0.7*\y2+0.3*\y3) node[right=2pt] {$\underline{a}$};
		\draw[black] (\of1,0.75*\y1+0.25*\y2) node[right=2pt] {$\underline{b}$};
		\draw[black] (\of1,\y3) node[right=2pt] {$\underline{c}$};
		\draw[blue] (-0.3,\y2) node[left=3pt] {$p_{\nu_j}({\bf t})$};
		\draw[black] (0+0.2,\y1+0.2) node[anchor=south west] {$Q^{(0,0)}_{\ell(\mu^{(i,0)})}$};
		\draw[black] (0+0.2,\y2+0.2) node[anchor=south] {$Q^{(0,0)}_{k}$};
		\draw[black] (0+0.2,\y3+0.2) node[anchor=south west] {$Q^{(0,0)}_{1}$};
		\draw[black] (0,\y0) node {$Q^{(0,0)}$};
	\end{tikzpicture}
	\end{center}
	\end{minipage}
	\newline
 \begin{minipage}{0.4\linewidth}
  \begin{center}
	\begin{tikzpicture}
		\tikzmath{
			\y1=0;
			\y2=2;
			\y3=4;
			\of1=3;
			\y0=-1;
			\sts=0.13;
			\qsd=1.2*0.28;
			}
		\draw[very thick] (0,\y2) -- (\of1,0.7*\y2+0.3*\y1);
		\draw[very thick] (0,\y2) -- (\of1,\y3);
		\draw[very thick] (0,\y2) -- (\of1,0.75*\y1+0.25*\y2);
		\draw[blue] (0.7*\of1, 0.5*\y1+0.5*\y2+0.8) node[rotate=350] {$-\beta d_i$};
		\draw[blue] (0.7*\of1, 0.75*\y1+0.25*\y2+0.1) node[rotate=335] {$-\beta d_i$};
		\draw[blue] (0.6*\of1, 0.5*\y3+0.5*\y2+0.45) node[rotate=35] {$-\beta d_i$};
		\draw[very thick,fill=white] (-\qsd,\y1-\qsd) rectangle(\qsd,\y1+\qsd);
		\draw[very thick,fill=white] (-\qsd,\y2-\qsd) rectangle(\qsd,\y2+\qsd);
		\draw[very thick,fill=white] (-\qsd,\y3-\qsd) rectangle(\qsd,\y3+\qsd);
		\draw[very thick] (0,\y1) node {$i,j_i$};
		\draw[very thick] (0,\y2) node {$i,j_i$};
		\draw[very thick] (0,\y3) node {$i,j_i$};
		\draw[fill=black] (\of1,\y3) circle (\sts);
		\draw[fill=black] (\of1,0.7*\y2+0.3*\y1) circle (\sts);
		\draw[fill=black] (\of1,0.75*\y1+0.25*\y2) circle (\sts);
		\draw[black] (0+0.2,\y1+0.2) node[anchor=south] {$Q^{(i,j_i)}_{\ell(\mu^{(i,0)})}$};
		\draw[black] (0+0.2,\y2+0.2) node[anchor=south] {$Q^{(i,j_i)}_{k}$};
		\draw[black] (0+0.2,\y3+0.2) node[anchor=south west] {$Q^{(i,j_i)}_{1}$};
		\draw[black] (0,\y0) node {$Q^{(i,j_i)}$};
		\draw[blue] (\of1,0.7*\y2+0.3*\y1) node[above=1pt] {$\gamma$};
		\draw[blue] (\of1,0.75*\y1+0.25*\y2) node[above=1pt] {$\gamma$};
		\draw[blue] (\of1,\y3) node[above=1pt] {$\gamma$};
		\draw[black] (\of1,0.7*\y2+0.3*\y1) node[right=2pt] {$\underline{a}$};
		\draw[black] (\of1,0.75*\y1+0.25*\y2) node[right=2pt] {$\underline{b}$};
		\draw[black] (\of1,\y3) node[right=2pt] {$\underline{c}$};
		\draw[blue] (-0.3,\y1) node[left=3pt] {$(-\beta d_i)^{-1}$};
		\draw[blue] (-0.3,\y2) node[left=3pt] {$(-\beta d_i)^{-1}$};
		\draw[blue] (-0.3,\y3) node[left=3pt] {$(-\beta d_i)^{-1}$};
	\end{tikzpicture}
	\end{center}
	\end{minipage}
 \begin{minipage}{0.4\linewidth}
  \begin{center}
	\begin{tikzpicture}
		\tikzmath{
			\y1=0;
			\y2=2;
			\y3=4;
			\of1=3;
			\sts=0.13;
			\qsd=1.2*0.28;
			\y0=-1;
			}
		\draw[very thick] (0,\y2) -- (\of1,0.7*\y2+0.3*\y1);
		\draw[very thick] (0,\y2) -- (\of1,\y3);
		\draw[very thick] (0,\y2) -- (\of1,0.75*\y1+0.25*\y2);
		\draw[blue] (0.7*\of1, 0.5*\y1+0.5*\y2+0.8) node[rotate=350] {$1$};
		\draw[blue] (0.7*\of1, 0.75*\y1+0.25*\y2+0.1) node[rotate=335] {$1$};
		\draw[blue] (0.6*\of1, 0.5*\y3+0.5*\y2+0.45) node[rotate=35] {$1$};
		\draw[very thick,fill=white] (0,\y1) circle(\qsd);
		\draw[very thick,fill=white] (0,\y2) circle(\qsd);
		\draw[very thick,fill=white] (0,\y3) circle(\qsd);
		\draw[very thick] (0,\y1) node {$\scriptstyle{\infty,0}$};
		\draw[very thick] (0,\y2) node {$\scriptstyle{\infty,0}$};
		\draw[very thick] (0,\y3) node {$\scriptstyle{\infty,0}$};
		\draw[fill=black] (\of1,\y3) circle (\sts);
		\draw[fill=black] (\of1,0.7*\y2+0.3*\y1) circle (\sts);
		\draw[fill=black] (\of1,0.75*\y1+0.25*\y2) circle (\sts);
		\draw[blue] (\of1,0.7*\y2+0.3*\y1) node[above=1pt] {$\gamma$};
		\draw[blue] (\of1,0.75*\y1+0.25*\y2) node[above=1pt] {$\gamma$};
		\draw[blue] (\of1,\y3) node[above=1pt] {$\gamma$};
		\draw[black] (\of1,0.7*\y2+0.3*\y1) node[right=2pt] {$\underline{a}$};
		\draw[black] (\of1,0.75*\y1+0.25*\y2) node[right=2pt] {$\underline{b}$};
		\draw[black] (\of1,\y3) node[right=2pt] {$\underline{c}$};
		\draw[blue] (-0.3,\y2) node[left=3pt] {$p_{\mu_j}({\bf s})$};
		\draw[black] (0+0.2,\y1+0.2) node[anchor=south west] {$Q^{(\infty,0)}_{\ell(\mu^{(i,0)})}$};
		\draw[black] (0+0.2,\y2+0.2) node[anchor=south] {$Q^{(\infty,0)}_{k}$};
		\draw[black] (0+0.2,\y3+0.2) node[anchor=south west] {$Q^{(\infty,0)}_{1}$};
		\draw[black] (0,\y0) node {$Q^{(\infty,0)}$};
	\end{tikzpicture}
	\end{center}
	\end{minipage}
	\caption{\footnotesize  Weights. Left: coloured vertices and edges. Right: black and white vertices.}
	 \label{fig:weightsVerticesAllTypes}
\end{figure}
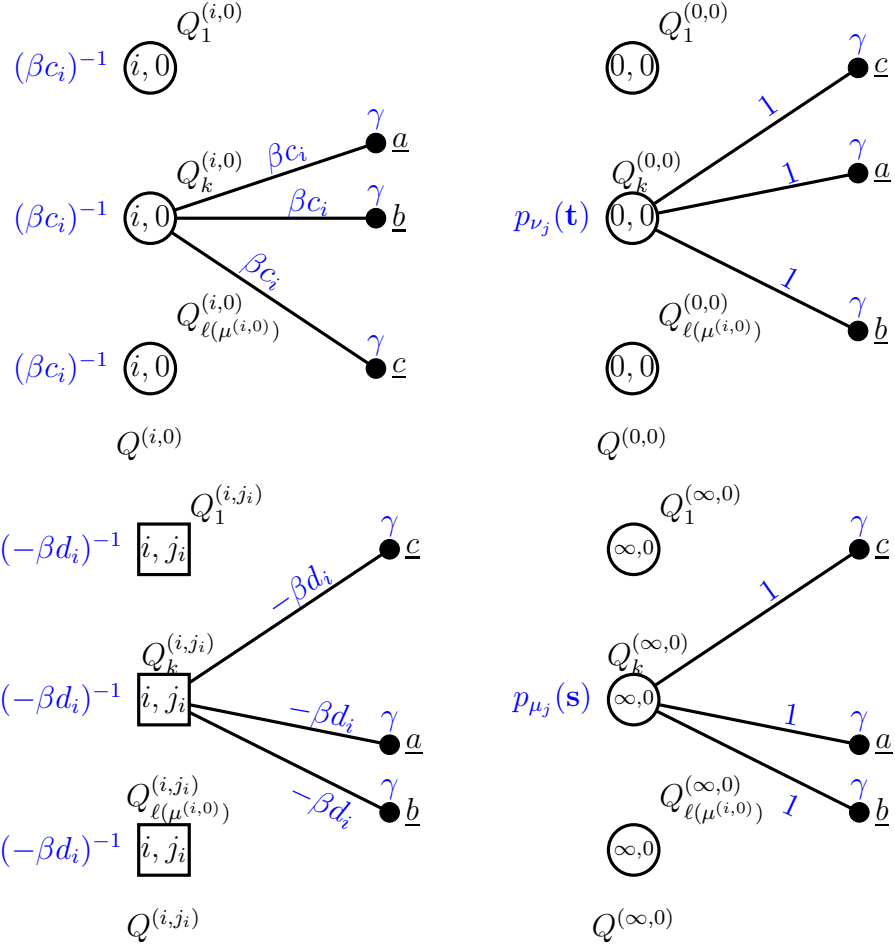

\begin{definition}
The sum of the colengths of the partitions $(\mu^{(1,0)},...,\mu^{(L,0)},\nu^{(1,1)}, \dots,\nu^{(M,J_M)})$
is denoted as
\be
d(\GG^J_{{\bf c}, {\bf d}}) := \sum_{i=1}^L \ell^*(\mu^{(i,0)}) + \sum_{i=1}^M\sum_{j_i=1}^{J_i} \ell^*(\nu^{(i, j_i)})
\label{d_GG_cd}
\ee
and the degree as
\be
N(\GG^J_{{\bf c}, {\bf d}}) =N.
\label{N_GG_cd}
\ee
\end{definition}
We also introduce the following abbreviated notation for ordered  sets of partitions of these two types
	\be
\mub(\GG^J_{{\bf c}, {\bf d}})):= (\mu^{(1)}, \dots, \mu^{(L)}), \quad \nub(\GG^J_{{\bf c}, {\bf d}}) := (\nu^{(1,1)}, \dots,\nu^{(M,J_M)}).
\ee

\begin{lemma}
	Let $\GG^J_{{\bf c},{\bf d}}$ be a doubly labelled, rationally weighted constellation
	in the equivalence class determined by $(\nu(\GG^J_{{\bf c}, {\bf d}}), \mub(\GG^J_{{\bf c}, {\bf d}}), \nub(\GG^J_{{\bf c}, {\bf d}}), \mu(\GG^J_{{\bf c}, {\bf d}}))$,
	with the  weights of vertices and edges determined by the weight generating function $G_{{\bf c},{\bf d}}$.
	Its total weight $W_{\GG^J_{{\bf c},{\bf d}}}$ is then given by
	\be
		W_{\GG^J_{\mathbf{c},\mathbf{d}}}={\gamma^{N}\over N!}\beta^{d(\GG^J_{{\bf c}, {\bf d}})} (-1)^{|J|+\sum_{i=1}^{M}\sum_{j_i=1}^{J_i} \ell^{*}(\nu^{(i,j_i)})}
	c^{\ell^{*}(\mu^{(1,0)})}_{1}\dots c^{\ell^{*}(\mu^{(L,0)})}_{L}
	\prod_{i=1}^{M}\prod_{j_i=1}^{J_i}d^{\ell^{*}(\nu^{(i,j_i)})}_{i} p_{\mu}({\bf t})p_{\nu}({\bf s}).
	\label{double_label_weight}
	\ee
\end{lemma}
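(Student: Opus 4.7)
The plan is to compute $W_{\GG^J_{\mathbf{c},\mathbf{d}}}$ by directly multiplying all vertex and edge weights, grouped by (colour, flavour) label, and then collecting powers of $\beta$, signs, and the remaining monomials. The key structural input is requirement \ref{ddegrule} of Definition \ref{double_label_constellation}: every star vertex is connected to exactly one coloured vertex of each label $(i,j_i)$. It follows that the total number of edges incident to the entire collection of vertices of label $(i,j_i)$ equals $N$, regardless of how these edges are distributed among the distinct vertices of that type. At the same time, the number of distinct vertices of label $(i,j_i)$ equals $\ell(\mu^{(i,0)})$ (for round vertices) or $\ell(\nu^{(i,j_i)})$ (for square vertices), since by Remark \ref{star_disjoint_union} each such vertex corresponds to a cycle in the monodromy element $h_{(i,j_i)}$.

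With this in hand, I would assemble the contributions one colour class at a time. The $N$ star vertices contribute $\gamma^N$. The black vertices of label $(0,0)$ have degrees equal to the parts of $\nu$, with incident edges of weight $1$, so they contribute $\prod_k p_{\nu_k}({\bf s}) = p_\nu({\bf s})$; the white vertices of label $(L+1,0)$ similarly contribute $p_\mu({\bf t})$. For each $i \in \{1,\dots,L\}$, combining the $\ell(\mu^{(i,0)})$ round vertex weights $(\beta c_i)^{-1}$ with the $N$ incident edge weights $\beta c_i$ yields
\[
(\beta c_i)^{N-\ell(\mu^{(i,0)})} = (\beta c_i)^{\ell^*(\mu^{(i,0)})}.
\]
For each $(i,j_i)$ with $1 \le i \le M$ and $1\le j_i \le J_i$, the analogous combination of $\ell(\nu^{(i,j_i)})$ square vertex weights $(-\beta d_i)^{-1}$ with the $N$ incident edge weights $-\beta d_i$ yields $(-\beta d_i)^{\ell^*(\nu^{(i,j_i)})}$.

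Multiplying all these factors together, together with the prefactor $(-1)^{|J|}/N!$ from Definition \ref{double_labelled_rational_weighting}, I would then separate the result into four pieces. The total power of $\beta$ is $\sum_{i=1}^L \ell^*(\mu^{(i,0)}) + \sum_{i=1}^M \sum_{j_i=1}^{J_i} \ell^*(\nu^{(i,j_i)})$, which equals $d(\GG^J_{{\bf c},{\bf d}})$ by \eqref{d_GG_cd}. The $c_i$ and $d_i$ monomial factors are read off directly as $\prod_{i=1}^L c_i^{\ell^*(\mu^{(i,0)})} \prod_{i=1}^M \prod_{j_i=1}^{J_i} d_i^{\ell^*(\nu^{(i,j_i)})}$. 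The remaining factors $p_\mu({\bf t}) p_\nu({\bf s})$ and $\gamma^N/N!$ are already isolated.

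The only delicate step — and the single place where care is needed — is the bookkeeping of signs. Each square vertex contribution $(-\beta d_i)^{\ell^*(\nu^{(i,j_i)})}$ supplies a factor $(-1)^{\ell^*(\nu^{(i,j_i)})}$, and summing these exponents over all $(i,j_i)$ with $j_i \ge 1$ yields $(-1)^{\sum_{i=1}^M \sum_{j_i=1}^{J_i} \ell^*(\nu^{(i,j_i)})}$. Combining this with the $(-1)^{|J|}$ from the definition of $W$ gives the total sign $(-1)^{|J| + \sum_{i=1}^M \sum_{j_i=1}^{J_i} \ell^*(\nu^{(i,j_i)})}$ appearing in \eqref{double_label_weight}. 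Round vertices, white vertices, and black vertices contribute no signs, and the remaining verification is a direct rearrangement of the factors, completing the proof.
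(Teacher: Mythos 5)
Your proposal is correct and follows essentially the same route as the paper's proof: a direct product of all vertex and edge weights, with the factors $(\beta c_i)^{\ell^*(\mu^{(i,0)})}$ and $(-\beta d_i)^{\ell^*(\nu^{(i,j_i)})}$ obtained by cancelling vertex inverses against incident edge weights. The only (immaterial) difference is that you aggregate the $N$ edges and $\ell(\mu^{(i,0)})$ vertices of a whole colour class at once, whereas the paper cancels vertex by vertex, getting $(\beta c_i)^{m-1}$ per vertex of degree $m$ before taking the product.
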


\begin{proof}
	Consider a doubly labelled constellation with spectrum $J$ from the equivalence class
	$(\nu(\GG^J_{{\bf c}, {\bf d}}), \mub(\GG^J_{{\bf c}, {\bf d}}), \nub(\GG^J_{{\bf c}, {\bf d}}),\mu(\GG^J_{{\bf c}, {\bf d}}))$.
		The star vertices give an overall factor $\gamma^N$.
	Round vertices of colour $i$ connected to $m$ star vertices give $(\beta c_i)^{m-1}$. 
	The product over all round vertices of the same colour therefore gives 
	\be
	(\beta c_i)^{\sum_{k=1}^{\ell(\mu^{(i,0)})}(\mu^{(i,0)}_k-1)}=(\beta c_i)^{\ell^{*}(\mu^{(i,0)})}\, ,
	\ee
	while those over square vertices of the same colour gives
	\be
	(-1)^{\sum_{j=1}^{j_i}\ell^{*}(\nu^{(i,j_i)})}(\beta d_i)^{\sum_{j=1}^{j_i}\ell^{*}(\nu^{(i,j_i)})}\,.
	\ee
	A white vertex of degree $\mu_i$ contributes $p_{\mu_i}({\bf t})$, while a black vertex
	of degree $\nu_i$  gives $p_{\nu_i}({\bf s})$;  taking their product gives 
	\be
	p_{\mu}({\bf t})p_{\nu}({\bf s}) =\prod_{i=1}^{\ell(\mu)}p_{\mu_i}({\bf t}) \prod_{i=1}^{\ell(\nu)}p_{\nu_i}({\bf s})\,.
	\ee
	The weight of the entire graph is therefore given by (\ref{double_label_weight}). 
\end{proof}
\begin{remark}
Note that, for any choice of $(L, M,J)$ the weight (\ref{double_label_weight}) for a doubly weighted
constellation can be obtained from that for its singly weighted analog by simply replacing the successive
weighting parameters $(c_1, c_2, \dots)$ in the singly weighted sequence by the corresponding 
sequence of $c_i$'s and $d_i$'s, with the latter repeated $J_i$ times in the CF ordered sequence,
and multiplying by the sign factor $(-1)^{|J|+\sum_{i=1}^{M}\sum_{j_i=1}^{J_i} \ell^{*}(\nu^{(i,j_i)})}$.
\end{remark}

In the examples  above, we may view Figure \ref{fig:planar_N5_constell_1}
as displaying a doubly labelled constellation in which $L=3$ and $M=0$, with
the labels $((0), (1), (2), (3), (4))$ replaced by $((0,0), (1,0), (2,0), (3,0), (4,0))$. This then corresponds to a 
polynomial weight generating function $G_{(c_1,c_2,c_3),(\emptyset)}$ of third degree and,  
according to eq.~(\ref{double_label_weight}),  the total weight  of the constellation is
\be
W_{\GG^0_{(c_1,c_2,c_3), (\emptyset)}} = {1\over 5!} \gamma^5 \beta^5 c_1^2 c_2^2 c_3\ p_{(2,1,1,1)}({\bf t})\  p_{(3,1,1)}({\bf s}).
\ee
The doubly labelled planar constellation Figure \ref{fig:planar_N5_constell_2} corresponds to
$(L,M, J)=(1,1, (2))$ with weight generating function $G_{((c_1), (d_1))}$ and has total weight
\be
W_{\GG^2_{(c_1), (d_1)}} = -{1\over 5!} \gamma^5 \beta^5 c_1^2 d_1^3\   p_{(2,1,1,1)}({\bf t})\  p_{(3,1,1)}({\bf s}).
\ee


\subsection{Reconstruction of  $\tau^{(G_{{\bf c}, {\bf d}}, \beta, \gamma)} ({\bf t}, {\bf s})$}
\label{tau_func_constell}

We now show how to recover the generating function $\tau^{(G_{{\bf c}, {\bf d}}, \beta, \gamma)} ({\bf t}, {\bf s})$ defined in  eq.~(\ref{tau_G_cd_H}) by summing over the total weights of the doubly weighted constellations defined above.
	In the following, it is helpful to introduce the following notations for sets of positive integer vectors and partitions.
	\begin{definition}
	For positive integers $(l, m, L, M)$ with $1\le l \le L$, let
	\bea
		A_{l,L}&\&:=\{{\bf a}=(a_1, \dots, a_l) \in \Nb^l \, | \,1\leq a_1<a_2<\dots<a_l\leq L\}\,,\quad 1\leq l\leq L\,, \\
		B_{m,M}&\&:=\{{\bf b}= (b_1, \dots, b_m)\in \Nb^m| 1\leq b_1\leq b_2 \leq \dots\leq b_m\leq M\}.
	\eea
	\end{definition}
	\begin{definition}
	\label{def_m_german}
	For positive integers $(N, d, k)$, let 
	\be
	\mathfrak{M}^{(N)}_{k,d}:=\lbrace \mub\rbrace\,,\quad \mub=\left(\mu^{(1)},\dots,\mu^{(k)}\right)
	\ee
       denote  the set of (ordered) $k$-tuples of nontrivial partitions $\left(\mu^{(1)},\dots,\mu^{(k)}\right)$ of $N$ with total colength  
       $d$ as defined in eq.~(\ref{d_def}).
	The cardinality of  $\mub$ will be denoted  
	\be
	\#\mub=k.
	\ee
	More generally,  for a $k$-tuple of nonnegative integers
       \be
       J = (J_1, \dots, J_k), \quad J_i \in \Nb^+
       \ee
       let
       \be
	\mathfrak{M}^{(N)}_{J,d}:=\{\left(\mub^{(1)},\dots,\mub^{(k)}\right)\},
	\ee
	where
	\be
	\mub^{(i)} = (\mu^{(i, 1)}, \dots, \mu^{(i, J_i)})
	\ee
	are $J_i$-tuples of partitions of $N$
	and
	\be
	d =\sum_{i=1}^k \sum_{j_i=1}^{J_i}\ell^*( \mu^{(i, j_i)})
	\ee
	is the total colength.
	\end{definition}
	In general, boldface Greek letters will denote ordered multiples of partitions
	\be
	\nub=\left(\nu^{(1)},\dots,\nu^{(m)}\right)
	\ee
	of any cardinality
	\be
	\#(\nub) =m.
	\ee
	\begin{definition}
	\label{def_n_german}
	Define the subset $\mathfrak{N}^{(N)}_{k,d}\subset \mathfrak{M}^{(N)}_{k,d}$
	 to consist of weakly ordered $k$-tuples of partitions
	\be
		\mathfrak{N}^{(N)}_{k,d}:=\{\mub\in\mathfrak{M}^{(N)}_{k,d}\;|\;\mu^{(1)}\leq \mu^{(2)}\leq \dots \leq \mu^{(k)}\}\,.
	\ee
	   with respect to lexicographical ordering \cite{Mac}.
	      \end{definition}
	      The action of the symmetric group $\SS_k$ on $\mathfrak{M}^{(N)}_{k,d}$ is denoted
	\be
		\sigma(\mub)=\left(\mu^{(\sigma(1))},\mu^{(\sigma(2))},\dots,\mu^{(\sigma(k))}\right)\,,\quad \sigma \in \mathcal{S}_k\,.
	\ee
		We also use the abbreviated notations
	\bea
		H(\mub,\nub,\mu,\nu)&\&:=H(\mu^{(1)},\dots,\mu^{(l)},\nu^{(1)},\dots,\nu^{(m)},\mu,\nu)\\
		\WW_{G_{{\bf c},{\bf d}}}(\mub;\nub)&\&:=\WW_{G_{{\bf c},{\bf d}}}
		(\mu^{(1)},\dots,\mu^{(l)};\nu^{(1)},\dots,\nu^{(m)})
	\eea
	 for pure Hurwitz numbers and weights.

\begin{remark}
	 The definitions \ref{def_m_german}, \ref{def_n_german} immediately imply the following tautological properties.
		 For a triple of positive integers $(N,k,d)$  and an arbitrary map $X$
		 \be
			X: \mathfrak{M}^{(N)}_{k,d} \rightarrow \mathbb{C}
		\ee
		 the following identity holds:
		\be
		 \label{ordunord}
			\sideset{}{'}\sum_{\mub \in \mathfrak{M}^{(N)}_{k,d}}X(\mub)=
			\sideset{}{'}\sum_{\mub \in \mathfrak{N}^{(N)}_{k,d}} \frac{1}{|\aut(\mub)|}\sum_{\sigma \in \SS_{k}} X(\sigma(\mub))
		\ee
		 where $\aut(\mub)$ denotes the stabilizer of $\mub$ in $\SS_{k}$.
		 For an arbitrary element $\mub$ of the set $\mathfrak{M}^{(N)}_{k,d}$ the following is also true:
	\be
		 \label{sumaverage}
		 \sum_{\sigma_1 \in \SS_{k}}\sum_{\sigma_2 \in \SS_{k}} 
		 X (\sigma_1(\sigma_2(\mub)))=k! \sum_{\sigma_1 \in \SS_{k}}X(\sigma_1(\mub))\,.
	\ee
\end{remark}
	 
Define the numbers $W_{G_{{\bf c},{\bf d}}}(\mub;\nub)$ (not to be confused  with $\WW_{G_{{\bf c},{\bf d}}}$)  to be
\be
	W_{G_{{\bf c},{\bf d}}}(\mub;\nub) :=
	(-1)^{\sum_{j=1}^{\#\nub} \ell^*(\nu^{(j)}) -\#\nub }
	\sum_{{\bf a}\in A_{\#\mub,L}}\sum_{{\bf b}\in B_{\#\nub,M}}
	\prod_{i=1}^{\#\mub} c_{a_i}^{\ell^{*}(\mu^{(i)})}
	\prod_{j=1}^{\#\nub} d_{b_j}^{\ell^{*}(\nu^{(j)})} 
	\ee
\begin{lemma}
	The $\tau$-function $\tau^{(G_{\bf c,\bf d}, \beta, \gamma)}({\bf t},{\bf s})$ can be expressed as
	\bea
	\label{tausum}
	\tau^{(G_{\bf c,\bf d}, \beta, \gamma)}({\bf t},{\bf s})=
	\sum_{d=0}^{\infty}\beta^d \sum_{N=0}^{\infty}\gamma^{N}
	\sum_{\substack{\mu,\nu\\|\mu|=|\nu|=N}}
	\sum_{d_{+}=0}^{d}
		\sum_{l=0}^{d_{+}}
\sideset{}{'}
		\sum_{\mub \in \mathfrak{M}^{(N)}_{l,d_{+}}}
		\sum_{m=0}^{d-d_{+}}\sideset{}{'}
		\sum_{\nub\in \mathfrak{M}^{(N)}_{m,d-d_{+}}}
		&&
		W_{G_{{\bf c},{\bf d}}}(\mub;\nub)\cr
		&&\times H(\boldsymbol{\mu},\nub,\mu,\nu) p_{\mu}(\mathbf{s})p_{\nu}(\mathbf{t}).
		\cr
		&\&
	\eea
	(Recall that  $\sideset{}{'}\sum$ denotes summation only over nontrivial partitions.)
\end{lemma}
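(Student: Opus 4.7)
\noindent The plan is to derive \eqref{tausum} from Theorem~\ref{prop:tauHurwitz} specialized to the rational weight generating function $G_{{\bf c},{\bf d}}$ (equation~\eqref{tau_G_cd_H}), by substituting the definition~\eqref{H_d_G_rat_def} of $H^d_{G_{{\bf c},{\bf d}}}(\mu,\nu)$ together with the explicit weight factor~\eqref{WG_cd_def}, and then collapsing the $\SS_l\times\SS_m$ symmetrizations using the fact that pure Hurwitz numbers are symmetric in their partition arguments.

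First I would insert \eqref{H_d_G_rat_def} into \eqref{tau_G_cd_H} and split the total colength constraint $\sum_{i}\ell^*(\mu^{(i)})+\sum_{j}\ell^*(\nu^{(j)})=d$ by introducing the auxiliary index $d_+\in\{0,\ldots,d\}$ with $\sum_{i=1}^l\ell^*(\mu^{(i)})=d_+$ and $\sum_{j=1}^m\ell^*(\nu^{(j)})=d-d_+$. This recasts the primed sums as $\sideset{}{'}\sum_{\mub\in\mathfrak{M}^{(N)}_{l,d_+}}$ and $\sideset{}{'}\sum_{\nub\in\mathfrak{M}^{(N)}_{m,d-d_+}}$ and brings the outer structure of the summation into agreement with the right-hand side of \eqref{tausum}.

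The main step is to replace $\WW_{G_{{\bf c},{\bf d}}}(\mub;\nub)$ by $W_{G_{{\bf c},{\bf d}}}(\mub;\nub)$, i.e., to remove both the $\sum_{\sigma\in\SS_l}\sum_{\sigma'\in\SS_m}$ and the prefactor $1/(l!\,m!)$ in \eqref{WG_cd_def}. The crucial point is that the Frobenius--Schur formula~\eqref{Frob_Schur_Hurwitz} exhibits $H(\mub,\nub,\mu,\nu)$ as a manifestly symmetric function of the arguments $\mu^{(i)}$ and $\nu^{(j)}$. Fixing $\sigma\in\SS_l$ and ${\bf a}\in A_{l,L}$, and performing the change of summation variable $\mub\mapsto\sigma^{-1}(\mub)$ in $\sideset{}{'}\sum_{\mub\in\mathfrak{M}^{(N)}_{l,d_+}}$ (a bijection, since the primed constraint and the set $\mathfrak{M}^{(N)}_{l,d_+}$ are both $\SS_l$-invariant), one computes
\[
\sum_{\mub}\Bigl(\prod_{i=1}^l c_{a_{\sigma(i)}}^{\ell^*(\mu^{(i)})}\Bigr)H(\mub,\nub,\mu,\nu)=\sum_{\mub}\Bigl(\prod_{i=1}^l c_{a_i}^{\ell^*(\mu^{(i)})}\Bigr)H(\mub,\nub,\mu,\nu),
\]
which is independent of $\sigma$, so that the average $\tfrac{1}{l!}\sum_{\sigma}$ collapses to a single term. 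The analogous argument with $\nub$ in place of $\mub$ and $\sigma'\in\SS_m$ in place of $\sigma$ removes the second symmetrization; the only subtlety is that $B_{m,M}$ allows weak (rather than strict) inequalities, but this is harmless because the reindexing $\nub\mapsto(\sigma')^{-1}(\nub)$ is still a bijection of $\mathfrak{M}^{(N)}_{m,d-d_+}$. After collecting the resulting sign $(-1)^{\sum_j\ell^*(\nu^{(j)})-m}$ together with the unsymmetrized product $\prod_i c_{a_i}^{\ell^*(\mu^{(i)})}\prod_j d_{b_j}^{\ell^*(\nu^{(j)})}$ into the definition of $W_{G_{{\bf c},{\bf d}}}(\mub;\nub)$, and reordering the outer summations to bring $\sum_{d}\beta^d$ and $\sum_N\gamma^N$ to the front, one arrives at \eqref{tausum}.

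The main obstacle is the careful book-keeping in the collapse of the $\SS_l\times\SS_m$ averaging: verifying that the primed condition (all $\mu^{(i)}$ and $\nu^{(j)}$ nontrivial) and the weakly increasing structure of $B_{m,M}$ (as opposed to the strictly increasing $A_{l,L}$) do not introduce hidden stabilizer correction factors. Both are handled uniformly by the symmetry of $H$ together with the observation that $\mathfrak{M}^{(N)}_{l,d_+}$ and $\mathfrak{M}^{(N)}_{m,d-d_+}$ are $\SS_l$- and $\SS_m$-invariant sets of ordered tuples, so that the identities \eqref{ordunord} and \eqref{sumaverage} apply without modification.
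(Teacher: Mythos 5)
Your proposal is correct and follows essentially the same route as the paper: both derive the identity by substituting the definition of $H^d_{G_{{\bf c},{\bf d}}}(\mu,\nu)$ into the power-sum expansion, splitting the colength into $d_+$ and $d-d_+$, and then using the symmetry of the pure Hurwitz numbers under permutation of their partition arguments to replace the symmetrized weight $\WW_{G_{{\bf c},{\bf d}}}$ by the unsymmetrized $W_{G_{{\bf c},{\bf d}}}$ (the paper's eq.~(\ref{sumunord})). The only cosmetic difference is that you collapse the $\SS_l\times\SS_m$ average by a direct change of variables on the permutation-invariant sets $\mathfrak{M}^{(N)}_{l,d_+}$, $\mathfrak{M}^{(N)}_{m,d-d_+}$, whereas the paper routes the same argument through the weakly ordered tuples $\mathfrak{N}$ and the stabilizer identities (\ref{ordunord})--(\ref{sumaverage}).
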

\begin{proof}
	We start with formula (\ref{tau_G_cd_H}) for $\tau^{(G_{\bf c,\bf d}, \beta, \gamma)}$:
	\be
	\label{tausym}
	\tau^{(G_{\bf c,\bf d}, \beta, \gamma)}({\bf t},{\bf s})
		=\sum_{d=0}^{\infty} \beta^d\sum_{N=0}^{\infty}\gamma^{N}\sum_{\substack{\mu,\nu\\|\mu|=|\nu|=N}} H^d_{G_{\mathbf{c},\mathbf{d}}}(\mu,\nu) p_{\mu}(\mathbf{s})p_{\nu}(\mathbf{t})
	\ee
	where, by eq.~(\ref{H_d_G_rat_def}), the weighted Hurwitz numbers $H^{d}_{G_{\mathbf{c},\mathbf{d}}}(\mu,\nu)$ are given by
	\be
		H^{d}_{G_{\mathbf{c},\mathbf{d}}}(\mu,\nu)=
		\sum_{d_{+}=0}^{d}\sum_{l=0}^{d_{+}}\sum_{m=0}^{d_-}
		\sideset{}{'}\sum_{\nub\in \mathfrak{M}^{(N)}_{l,d_{+}}}
		\sideset{}{'}\sum_{\nub\in \mathfrak{M}^{(N)}_{m,d_-}}
		\WW_{G_{{\bf c},{\bf d}}}(\boldsymbol{\mu};\nub) H(\boldsymbol{\mu},\nub,\mu,\nu).
      \label{whnsym}
	\ee
	where the last two sums are over pairs $(\mub, \nub)$ for which the sum of colengths is $d$
	\bea
	d_+ +d _- &\&=d, \cr
	d_+ = d_+(\mub) := \sum_{i=1}^{\#(\mub)}\ell^*( \mu^{(i)}), &\& \quad d_- = d_-(\nub) := \sum_{j=1}^{\#(\nub)} \ell^*(\nu^{(j)}).
	\eea
	By eq.~(\ref{WG_cd_def}), the weight $\WW_{G_{{\bf c},{\bf d}}}$ is
\bea
&\& \WW_{G_{{\bf c}, {\bf d}}}(\mu^{(1)}, \dots, \mu^{(l)}; \nu^{(1)}, \dots, \nu^{(m)}) \cr
&\& :=
	\frac{(-1)^{\sum_{j=1}^m \ell^*(\nu^{(j)}) -m }}{l!m!}
	\sum_{{\bf a}\in A_{l,L}}\sum_{{\bf b}\in B_{m,M}}
	\sum_{\sigma\in \mathcal{S}_l}\sum_{\sigma'\in \mathcal{S}_m} 
  c_{a_{\sigma(1)}}^{\ell^*(\mu^{(1)})} \cdots c_{a_{\sigma(l)}}^{\ell^*(\mu^{(l)})}  
  d_{b_{\sigma'(1)}}^{\ell^*(\nu^{(1)})} \cdots d_{b_{\sigma'(m)}}^{\ell^*(\nu^{(m)})}.\cr
  &\&
  \label{WW_Gcd_perm1}
\eea
	Since $\WW_{G_{{\bf c},{\bf d}}}$ is invariant under permutations of the partitions $\{\mu^{(i)}\}$ and $\{\nu^{(j)}\}$
	amongst themselves, we can write equivalently
\bea
&\& \WW_{G_{{\bf c}, {\bf d}}}(\mu^{(1)}, \dots, \mu^{(l)}; \nu^{(1)}, \dots, \nu^{(m)}) \cr
&\& :=
	\frac{(-1)^{\sum_{j=1}^m \ell^*(\nu^{(j)}) -m }}{l!m!}
	\sum_{{\bf a}\in A_{l,L}}\sum_{{\bf b}\in B_{m,M}}
	\sum_{\sigma\in \mathcal{S}_l}\sum_{\sigma'\in \mathcal{S}_m} 
	c_{a_1}^{\ell^*(\mu^{(\sigma(1))})} \cdots c_{a_{l}}^{\ell^*(\mu^{(\sigma(l))})}  
	d_{b_{1}}^{\ell^*(\nu^{(\sigma'(1))})} \cdots d_{b_{m}}^{\ell^*(\nu^{(\sigma'(m))})}.\cr
  &\&
  \label{WW_Gcd_perm2}
\eea
	$\WW_{G_{{\bf c},{\bf d}}}$ is thus an average of the weights $W_{G_{{\bf c},{\bf d}}}(\boldsymbol{\mu};\nub)$
over permutations of partitions
\be
	\WW_{G_{{\bf c},{\bf d}}}(\boldsymbol{\mu};\nub)
	=\frac{1}{(\#\boldsymbol{\mu}!)(\#\nub!)}\sum_{\sigma\in \mathcal{S}_{\#\boldsymbol{\mu}}}\sum_{\sigma'\in \mathcal{S}_{\#\nub}} 
	W_{G_{{\bf c},{\bf d}}}(\sigma(\boldsymbol{\mu});\sigma'(\nub)).
		\label{weightav}
\ee
For any pair of integer $(d_-,d_+)$ such that 
\be
l\leq d_+\leq d, \quad d_- + d_+ =d,
\ee 
the two right-most sums in  (\ref{whnsym}) can  therefore be rewritten as
\bea
	&\&\sideset{}{'}
	\sum_{\boldsymbol{\mu} \in \mathfrak{M}^{(|\mu|)}_{l,d_{+}}}\sideset{}{'}
	\sum_{\nub\in \mathfrak{M}^{(|\nu|)}_{m,d_-}}
	\WW_{G_{{\bf c},{\bf d}}}(\boldsymbol{\mu};\nub)
	H(\boldsymbol{\mu},\nub,\mu,\nu)  \cr
	=&\& \frac{1}{l!m!}\sideset{}{'}\sum_{\boldsymbol{\mu} \in \mathfrak{M}^{(|\mu|)}_{l,d_{+}}}\sideset{}{'}
	\sum_{\nub\in \mathfrak{M}^{(|\nu|)}_{m,d_-}} \sum_{\sigma\in \mathcal{S}_{l}}\sum_{\sigma'\in \mathcal{S}_{m}} 
	W_{G_{{\bf c},{\bf d}}}(\sigma(\boldsymbol{\mu});\sigma'(\nub)) H(\boldsymbol{\mu},\nub,\mu,\nu)\cr
		= &\&
	\sideset{}{'}\sum_{\boldsymbol{\mu} \in \mathfrak{N}^{(|\mu|)}_{l,d_{+}}}
\sideset{}{'}
	\sum_{\nub\in \mathfrak{N}^{(|\mu|)}_{m,d_-}}
	\frac{1}{|\aut(\boldsymbol{\mu})||\aut(\nub)|}\sum_{\sigma\in \mathcal{S}_{l}}
	\sum_{\sigma'\in \mathcal{S}_{m}     }
	W_{G_{{\bf c},{\bf d}}}(\sigma(\boldsymbol{\mu});\sigma'(\nub)) 
	H(\boldsymbol{\mu},\nub,\mu,\nu)\,, 
	\label{symmetrized_WW_H_sum}
\eea
where we have used the identities (\ref{ordunord}) and (\ref{sumaverage}) to pass from the second to the third line.
On the other hand, the identity (\ref{ordunord}) implies that
\bea
	&\&\sideset{}{'}
	\sum_{\boldsymbol{\mu} \in \mathfrak{M}^{(|\mu|)}_{l,d_{+}}}\sideset{}{'}
	\sum_{\nub\in \mathfrak{M}^{(|\nu|)}_{m,d_-}}
	W_{G_{{\bf c},{\bf d}}}(\boldsymbol{\mu};\nub)
	H(\boldsymbol{\mu},\nub,\mu,\nu)  \cr
		= &\&
	\sideset{}{'}\sum_{\boldsymbol{\mu} \in \mathfrak{N}^{(|\mu|)}_{l,d_{+}}}
\sideset{}{'}
	\sum_{\nub\in \mathfrak{N}^{(|\mu|)}_{m,d_-}}
	\frac{1}{|\aut(\boldsymbol{\mu})||\aut(\nub)|}\sum_{\sigma\in \mathcal{S}_{l}}
	\sum_{\sigma'\in \mathcal{S}_{m}     }
	W_{G_{{\bf c},{\bf d}}}(\sigma(\boldsymbol{\mu});\sigma'(\nub)) 
	H(\boldsymbol{\mu},\nub,\mu,\nu)\,, 
\label{symmetrized_W_H_sum}
\eea
Comparing (\ref{symmetrized_WW_H_sum}) with (\ref{symmetrized_W_H_sum}) we obtain the following relation:
\be
\label{sumunord}
	\sideset{}{'}\sum_{\boldsymbol{\mu} \in \mathfrak{M}^{(|\mu|)}_{l,d_{+}}}\sideset{}{'}\sum_{\nub\in \mathfrak{M}^{(|\mu|)}_{m,d_-}}
	\WW_{G_{{\bf c},{\bf d}}}(\boldsymbol{\mu};\nub)  H(\boldsymbol{\mu},\nub,\mu,\nu) =
	\sideset{}{'}\sum_{\boldsymbol{\mu} \in \mathfrak{M}^{(|\mu|)}_{l,d_{+}}}\sideset{}{'}\sum_{\nub\in \mathfrak{M}^{(|\mu|)}_{m,d_-}}
	W_{G_{{\bf c},{\bf d}}}(\boldsymbol{\mu};\nub)
	H(\boldsymbol{\mu},\nub,\mu,\nu).
\ee
Substituting (\ref{sumunord}) into eqs.~(\ref{tausym}),(\ref{whnsym}),  we recover the statement of the lemma.

\end{proof}
\begin{theorem}
\label{tau_function_rat_weighted_constell}
	The $\tau$-function $\tau^{(G_{{\bf c},{\bf d}}, \beta, \gamma)}({\bf s},{\bf t})$  
		is equal to the sum of the total weights of all doubly labelled, rationally weighted constellations
		with weight generating function $G_{{\bf c}, {\bf d}}$.
\end{theorem}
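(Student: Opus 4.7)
\medskip
\noindent\textbf{Proof plan.} My strategy is to reorganize the sum of per-constellation weights so that it matches, term-by-term, the expansion of $\tau^{(G_{{\bf c}, {\bf d}}, \beta, \gamma)}({\bf t}, {\bf s})$ given by equation~(\ref{tausum}). First I partition the set of all doubly labelled rationally weighted constellations with weight generating function $G_{{\bf c}, {\bf d}}$ (so $L$ and $M$ are fixed) into equivalence classes. As explained in Section~\ref{double_labelled_constell}, an equivalence class is specified by a degree $N$, a spectrum $J = (J_1, \dots, J_M)$, the boundary partitions $\mu, \nu$ of $N$, the CF-ordered $L$-tuple $(\mu^{(1,0)}, \dots, \mu^{(L,0)})$ of (possibly trivial) round-vertex partitions, and the tuple $(\nu^{(1,1)}, \dots, \nu^{(M, J_M)})$ of nontrivial square-vertex partitions. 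Each such class contains exactly $N!$ times the pure Hurwitz number $H(\nu, \{\mu^{(i,0)}\}, \{\nu^{(i, j_i)}\}, \mu)$ constellations, each of which carries the same total weight determined by formula~(\ref{double_label_weight}).

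\medskip
Multiplying the cardinality of a class by its common per-constellation weight cancels the factor $1/N!$ in~(\ref{double_label_weight}). I then split the sum over round-vertex partitions into those colours $i \in \{1, \dots, L\}$ that carry the trivial partition $(1^N)$ — which contribute $c_i^0 = 1$ and leave the Hurwitz number unchanged, since multiplying by the identity does not alter the factorization count — and those carrying a nontrivial partition, whose indices form a strictly increasing tuple ${\bf a} = (a_1 < \cdots < a_l) \in A_{l, L}$. The remaining nontrivial partitions assemble into an ordered tuple $\mub \in \mathfrak{M}^{(N)}_{l, d_+}$ with $d_+$ equal to the sum of their colengths. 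On the square-vertex side, a choice of spectrum $J$ with $|J| = m$ together with nontrivial ordered partitions $(\nu^{(1,1)}, \dots, \nu^{(M, J_M)})$ is, by reading off colour indices in CF order, equivalent to a weakly increasing sequence ${\bf b} = (b_1 \le \cdots \le b_m) \in B_{m, M}$ (with $J_i$ equal to the multiplicity of $i$ in ${\bf b}$) paired with an ordered tuple $\nub \in \mathfrak{M}^{(N)}_{m, d_-}$.

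\medskip
After using the identity $(-1)^{|J| + \sum_{i, j_i} \ell^*(\nu^{(i, j_i)})} = (-1)^{\sum_j \ell^*(\nu^{(j)}) - m}$ to align the sign factors, the contribution of each equivalence class becomes
\[
\gamma^N \beta^d \left(\prod_{i=1}^{l} c_{a_i}^{\ell^*(\mu^{(i)})}\right)\left(\prod_{j=1}^{m} d_{b_j}^{\ell^*(\nu^{(j)})}\right) (-1)^{\sum_j \ell^*(\nu^{(j)})-m}\, H(\mub, \nub, \mu, \nu)\, p_\mu({\bf t})\, p_\nu({\bf s}),
\]
and summing over ${\bf a} \in A_{l, L}$ and ${\bf b} \in B_{m, M}$ reconstructs exactly the coefficient $W_{G_{{\bf c}, {\bf d}}}(\mub; \nub)$ appearing in~(\ref{tausum}). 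The remaining sums over $(N, \mu, \nu, d_+, d_-, l, \mub, m, \nub)$ with $d_+ + d_- = d$ are identical to those on the right-hand side of~(\ref{tausum}), yielding the theorem. I expect the main obstacle to be the bookkeeping in the middle step: the bijection between spectra $J$ (with positive integer entries labelling distinct nontrivial flavours) and weakly increasing sequences ${\bf b} \in B_{m, M}$, combined with the sign $(-1)^{|J|}$ and the way multiple flavours of a single denominator parameter $d_i$ give rise to independent partitions $\nu^{(i, j_i)}$, must be handled with care.
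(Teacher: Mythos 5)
Your proposal is correct and follows essentially the same route as the paper's proof: it rests on the expansion (\ref{tausum}), the bijection between colour assignments $({\bf a},{\bf b})\in A_{l,L}\times B_{m,M}$ paired with ordered tuples $(\mub,\nub)$ and the (possibly trivial) round-vertex partitions plus the spectrum-indexed square-vertex partitions, and the count of $N!\,H(\cdot)$ constellations per equivalence class, merely organized in the direction constellation sum $\to$ $\tau$-function rather than the reverse. The sign alignment $(-1)^{|J|+\sum\ell^*(\nu^{(i,j_i)})}=(-1)^{\sum_j\ell^*(\nu^{(j)})-m}$ and the handling of trivial round-vertex partitions match the paper's eqs.~(\ref{Hexpand})--(\ref{WconstDen}).
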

\begin{proof}
	Note first that terms with $\#\boldsymbol{\mu}>L$ are absent from the sum (\ref{tausum}) 
	since there cannot be more than $L$  distinct indices $a_i$.
	Given  a set of indices $\mathbf{a}\in A_{l,L}$ and an $l$-tuple of nontrivial partitions $\boldsymbol{\mu} \in \mathfrak{M}^{(N)}_{l,d_{+}}$,
	we construct an $L$-tuple of (possibly trivial) partitions $\tilde{\boldsymbol{\mu}}(\boldsymbol{\mu},{\bf a})$
	\be
		\tilde{\boldsymbol{\mu}}(\boldsymbol{\mu},{\bf a})=\left(\tilde{\mu}^{(1)},\dots,\tilde{\mu}^{(L)}\right),
	\ee
	where $l$ of the partitions $\{\tilde{\mu}^{(i)}\}_{i=1, \dots, L}$ coincide with the elements of $\boldsymbol{\mu}$, and the rest are trivial:
	\be
		\tilde{\mu}^{(a_i)}=\mu^{(i)}\,,\quad 1\leq i \leq l.
	\ee
	\be
	\tilde{\boldsymbol{\mu}}(\boldsymbol{\mu},{\bf a})=\left(\underbrace{1^N,\dots,1^{N}}_{a_1-1\;\text{times}},\mu^{(1)},\underbrace{1^{N},\dots,1^{N}}_{a_2-a_1-1\;\text{times}},\mu^{(2)},1^{N},\dots,1^{N},\mu^{(l)},\underbrace{1^{N},\dots,1^{N}}_{L-a_l\;\text{times}}\right)
	\ee
	Conversely, from an $L$-tuple of partitions $\tilde{\boldsymbol{\mu}}$ we construct a set of indices 
	${\bf a}(\tilde{\mub}) \in A_{l, L} $ and an $l$-tuple of partitions $\mub(\tilde{\mub})$,
	where $l$ is the number of nontrivial partitions in $\tilde{\mub}$, ${\bf a} \in A_{l, L}$ is the
	 set of indices $i$ such that $\tilde{\mu}^{(i)}$ is nontrivial and the $l$-tuple of partitions $\mub(\tilde{\mub})$ 
	 is obtained by removing all trivial elements from $\mub$  while retaining the ordering.
	Note that for any ${\bf a} \in A_{l, L} $  where $l=\#\mub\leq L$,  the pure Hurwitz numbers 
	and weights are unchanged if  $\mub$   is replaced by $\tilde{\mub}$:
	\bea
		H(\tilde{\mub}(\mub,{\bf a}),\dots)&\&=H(\mub,\dots)
		\label{Hexpand}  \\
	\label{Wexpand}
		W_{G_{{\bf c},{\bf d}}}(\mub;\dots)&\&=
		W_{G_{{\bf c},{\bf d}}}(\tilde{\mub}(\mub,{\bf a});\dots)
	\eea	
	
The denominator requires a bit more care. Given an $m$-tuple of possibly coinciding indices $\mathbf{b}\in B_{m,M}$,
and  an $m$-tuple of non-trivial partitions $\nub\in \mathfrak{M}^{(N)}_{m,d_{-}}$ for any integer $i$ between $1$ and $M$,
 we identify the nonnegative integer $J_i({\bf b})$  to be the number of elements of $\mathbf{b}$ equal to $i$, denoting
  \be
J({\bf b}):= (J_1({\bf b},) \dots J_M({\bf b})),
 \ee
 and define ${\bf n}^{(i)}(\mathbf{b})$ to be the $J_i$-tuple of such elements:
\be
	{\bf n}^{(i)}({\bf b})=\left(n^{(i)}_{1},\dots,n^{(i)}_{J_i}\right)\,,
\ee
\be
	n^{(i)}_{1}<n^{(i)}_{2}<\dots<n^{(i)}_{J_i}\,,\quad b_{n^{(i)}_{j_i}}=i\,.
\ee
	The $J_i$-tuple of partitions $\tilde{\nub}^{(i)}$ is chosen as
\be
	\tilde{\nub}^{(i)}=\left(\tilde{\nu}^{(i,1)},\dots,\tilde{\nu}^{(i,J_i)}\right)\,,
\ee
\be
	\tilde{\nu}^{(i,j_i)}=\nu^{(n^{(i)}_{j_i})}\,.
\ee
Repeating this for all values of $i$ between $1$ and $M$ we get a map
\bea
\TT^N_{m, d_-} : \grM^{(N)}_{m,d_-} \times B_{m, M} &\& \ra  \bigsqcup_{J, \  \ell(J)=M , |J|=m}\grM^{(N)}_{J, d_-}\cr
\TT^N_{m, d_-} :	(\nub,{\bf b}) &\& \mapsto (\tilde{\nub}^{(1)},\dots,\tilde{\nub}^{(M)})  
\eea
It is evident that this map is invertible.

	It remains to associate such sets with rationally weighted, doubly labelled constellations. 
	From (\ref{Hexpand}),(\ref{Wexpand}) we obtain
	\bea
	\label{WconstNum}
		&\&\sum_{l=0}^{d_{+}}\sideset{}{'}\sum_{\mub\in\mathfrak{M}_{l,d_{+}}^{(|\mu|)}}W_{G_{{\bf c},{\bf d}}}(\mub;\nub)H(\mub,\nub,\mu,\nu) \cr
	&\&	=
		(-1)^{d-d_{+}-\#\nub}\sum_{\tilde{\mub}\in\mathfrak{M}_{L,d_{+}}^{(|\mu|)}}
		\prod_{i=1}^{L}c_{i}^{\ell^{*}(\mu^{(i)})}\sum_{\mathbf{b}\in B_{\#\nub,d_-}}\prod_{i=1}^{\#\nub}d_{b_i}^{\ell^{*}(\nu^{(i)})}
		H(\tilde{\mub},\nub,\mu,\nu)
	\eea
	and
	\bea
	\label{WconstDen}
		&\&\sum_{m=0}^{d_-}\sideset{}{'}\sum_{\nub\in\mathfrak{M}^{(|\mu|)}_{m,d_-}}(-1)^{d_--m}\sum_{\mathbf{b}\in B_{m,M}}\prod_{i=1}^{m}d_{b_i}^{\ell^{*}(\nu^{(i)})}
		H(\tilde{\mub},\nub,\mu,\nu)\cr
		&\& = \sideset{}{'}\sum_{\substack{
			\tilde{\nub}^{(1)},\dots,\tilde{\nub}^{(M)}\\
			|\nu^{(i,j_i)}|=N\,, \sum_{i=1}^{M}\ell^{*}(\tilde{\nub}^{(i)})=d_-
		}}
		(-1)^{d_--\sum_{i=1}^{M}\#\tilde{\nub}^{(i)}}
		\prod_{i=1}^{M}\prod_{j_i=1}^{\#\tilde{\nub}^{(i)}}d_{i}^{\ell^{*}(\tilde{\nu}^{(i,j_i)})} 
		H(\tilde{\mub},\tilde{\nub}^{(1)},\dots,\tilde{\nub}^{(M)},\mu,\nu). \cr
		&\&
	\eea
	Comparing (\ref{double_label_weight}) with (\ref{WconstNum}), (\ref{WconstDen}), 
	we see that, for a constellation $\GG^J_{{\bf c},{\bf d}}$ with spectrum $J(\GG^J_{{\bf c},{\bf d}})$ from the equivalence class 
	$\left(\nu(\GG^J_{{\bf c},{\bf d}}), \mub(\GG^J_{{\bf c},{\bf d}}),\nub(\GG^J_{{\bf c},{\bf d}}),\mu(\GG^J_{{\bf c},{\bf d}})\right)$,  
	the weight can be expressed as
	\be
		W_{\GG^J_{\mathbf{c},\mathbf{d}}}={1\over N!}\beta^{d(\GG^J_{{\bf c},{\bf d}})}\gamma^{N(\GG^J_{{\bf c},{\bf d}})}
		W_{G_{{\bf c},{\bf d}}}(\mub(\GG^J_{{\bf c},{\bf d}}),\nub(\GG^J_{{\bf c},{\bf d}}))p_{\mu}(\mathbf{s})p_{\nu}(\mathbf{t})\,.
	\ee
	
Since the partitions \mbox{$\left(\nu(\GG^J_{{\bf c},{\bf d}}), \mub(\GG^J_{{\bf c},{\bf d}}),\nub(\GG^J_{{\bf c},{\bf d}}),\mu(\GG^J_{{\bf c},{\bf d}})\right)$} 
are recovered from the equivalence class of the constellation,
 the summation in (\ref{tausum}) can be interpreted as a taken over all equivalence classes of doubly labelled constellations with fixed weight generating function.
	Removing the factor $N!H(\mub,\nub^{(1)},\dots,\nub^{(M)},\mu,\nu)$, which
is the number of constellations in the equivalence class,  we replace the sum over equivalence classes 
with a sum over constellations, which concludes the proof.
\end{proof}

\section{Examples of rational weight generating functions and constellations}
\label{examples_rat_constel}

\subsection{$2$D Toda $\tau$-functions for two simple cases; matrix integrals}
\label{MM_rational_gener_tau}

The two simplest examples of nonpolynomial rational weight generating functions, are:
\be
G_{(\emptyset), (d_1)}  = {1\over 1 - z d_1}, \quad \text{and } G_{(c_1), (d_1)}  = {1 + c_1 z \over 1 - d_1 z},\quad \text{with } c_1\neq 0, d_1 \neq 0.
\ee
Both of these lead to weighted Hurwitz numbers  having interesting combinatorial interpretations.

For the first, we have
\bea
\tau^{(G_{(\emptyset), (d_1)}, \beta, \gamma)}({\bf t},{\bf s})
&\&:= \sum_\lambda \gamma^{|\lambda|} r^{(G_{(\emptyset), (d_1)},  \beta)}_\lambda s_\lambda({\bf t}) s_\lambda({\bf s})
\label{tau_schur_G_0d1} \\
&\& =\sum_{\substack{\mu, \nu \\ |\mu|=|\nu|}} \gamma^{|\mu|}
\sum_{d=0}^\infty \beta^d  H^d_{G_{(\emptyset), (d_1)}}(\mu, \nu)\,  p_\mu({\bf t}) p_\nu({\bf s}).
\label{tau_G_0d1_H}
\eea
where
\be
 r^{(G_{(\emptyset),(d_1)}, \beta)}_\lambda  =  {1\over (-\beta d_1)^{|\lambda|}  \left(-{1\over \beta d_1}\right)_\lambda},
 \ee
 and
 \be
 (a)_\lambda := \prod_{(i,j)\in \lambda} (a+j-i).
 \ee
 It well-known \cite{GGN1, GGN2} that the weighted Hurwitz numbers $H^d_{G_{(\emptyset), (d_1)}}(\mu, \nu)$ in this case may equivalently
be interpreted as (normalized) enumerations of $d$-step paths in the Cayley graph of $\SS_N$ generated
by transpositions $(ab)$ (with $b> a$) starting from an element in the conjugacy class $\cyc(\mu)$, 
and ending in the class $\cyc(\nu)$, such that in the  consecutive sequence of steps $(a_1,b_1) \cdots (a_d, b_d)$ 
along edges, the second elements $(b_1, \dots, b_d)$ form a weakly increasing sequence
\be
b_i \le b_{i+1}, \quad 1\le i \le d-1.
\ee

It is also well-known \cite{GGN1, GGN2} that, if we restrict the  $2$D Toda flow variables  ${\bf t}$ and ${\bf s}$
 in (\ref{tau_schur_G_0d1}), (\ref{tau_G_0d1_H}) to equal the trace invariants
\be
t_i = {\tfrac1 i} \tr(A^i), \quad s_i = {\tfrac1 i} \tr(B^i)
\label{tr_invar_AB}
\ee
of a pair of diagonal $n\times n$ matrices
\be
A = \diag(a_1, \dots, a_n), \quad B = \diag(b_1, \dots, b_n), 
\ee
and let
\be
\beta := -{1\over n d_1},
\label{beta_n_d1}
\ee
then  $\tau^{(G_{(\emptyset), (d_1)}, \beta, \gamma)}({\bf t},{\bf s})$ may be expressed
as the Harish-Chandra-Itzykson-Zuber integral over the group $U(n)$ of unitary $n \times n$ matrices
\be
\tau^{(G_{(\emptyset) (d_1)},-\frac{1}{nd_1}, \gamma)}([A],[B]) =
{1\over \VV_n}\int_{U\in U(n)} e^{\gamma n \tr(UAU^\dag B)}d\mu_H(U) = 
{( \prod_{k=1}^{N-1}k! )\, \det(e^{\gamma n a_i b_j})_{1\le i,j \le n} \over  (n\gamma)^n\Delta({\bf a}) \Delta ({\bf b})},
\label{IZHC_integral}
\ee
where $d\mu(U)$ is the Haar measure on $U(n)$, whose volume is
\be
\VV_n := \int_{U(n)} d\mu_H(U) ={(2\pi)^{\tfrac12 n (n+1)} \over \prod_{k=1}^{n-1}k!},
\ee
 $[A]$ and $[B]$ denote the infinite
sequences of normalized trace invariants defined in (\ref{tr_invar_AB}) and $\Delta({\bf a})$
$\Delta({\bf b})$ are the Vandermonde determinants in the elements
\be
{\bf a} =(a_1, \dots, a_n), \quad {\bf b} =(b_1, \dots, b_n).
\ee

For the second case, we have
\bea
\tau^{(G_{(c_1),( d_1)}, \beta, \gamma)}({\bf t},{\bf s})
&\&:= \sum_\lambda \gamma^{|\lambda|} r^{(G_{(c_1), (d_1)},  \beta)}_\lambda s_\lambda({\bf t}) s_\lambda({\bf s})
\label{tau_schur_G_c1d1} \\
&\& =\sum_{\substack{\mu, \nu \\ |\mu|=|\nu|}} \gamma^{|\mu|}
\sum_{d=0}^\infty \beta^d  H^d_{G_{(c_1), (d_1)}}(\mu, \nu)\,  p_\mu({\bf t}) p_\nu({\bf s}).
\label{tau_G_c1d1_H}
\eea
where
\be
 r^{(G_{(c_1), (d_1)}, \beta)}_\lambda  = \left( - \frac{c_1}{d_1}\right)^{|\lambda|}
 {\left({1\over \beta c_1}\right)_\lambda \over\left(- {1\over \beta d_1}\right)_\lambda}.
 \ee

An alternative combinatorial interpretation  for $ H^d_{G_{(c_1), (d_1))}}(\mu, \nu)$ also exists
in terms of enumeration of monotonic paths in the Cayley graph  \cite{GH1}. If we express $H^d_{G_{(c_1), (d_1))}}(\mu, \nu)$ 
as a polynomial in the parameters $(c_1, d_1)$
 \be
 H^d_{G_{(c_1), (d_1))}}(\mu, \nu) = \sum_{j=0}^d c_1^j d_1^{d-j} E^d_j(\mu, \nu),
 \ee
 then $E^d_j(\mu, \nu)$ gives a (normalized) enumeration of $d$-step paths in the Cayley graph
 of $\SS_N$ generated by transpositions which consist of a sequence of $j$ strictly monotonically
 increasing steps, followed by a sequence of $d-j$ weakly monotonically increasing ones. 
 
If  we restrict the $2$D Toda flow variables ${\bf t}$ and ${\bf s}$  again to equal the normalized trace invariants 
$[A]$ and $[B]$ of a pair of diagonal $n\times n$ matrices,
 and choose $\beta$ as in eq.~(\ref{beta_n_d1}), then $\tau^{(G_{(c_1), (d_1)}, \beta, \gamma)}({\bf t},{\bf s})$ can 
 again   be expressed as a matrix integral over the group $U(n)$, which can also be evaluated  explicitly as a 
 finite determinant \cite{GH1, HO1}:
 \bea
\tau^{(G_{(c_1), (d_1)},-\frac{1}{n d_1}, \gamma)}([A],[B]) &\&=
{1\over \VV_n}\int_{U\in U(n)} \left(\det(1 -z U A U^\dag B)\right)^{n{d_1\over c_1} } d\mu_H(U) \cr
&\& = 
\left(\prod_{k=1}^{n-1} {k!\over (1 - n(1 + {d_1\over c_1})_k} \right) 
{\left(\det(1-z a_i b_j)_{1\le i,j \le n} \right)^{n(1+{d_1 \over c_1}) -1}
\over z^{{1\over 2}n(n-1)} \Delta({\bf a}) \Delta ({\bf b})}, \cr
&\&
\label{HO_integral}
\eea
where
\be
z := -{ \gamma c_1 \over d_1}.
\ee

All other $\tau$-functions  $\tau^{(G_{{\bf c}, {\bf d}}, \beta, \gamma)}$ generating either polynomially 
 or rationally  weighted Hurwitz numbers may also be given matrix integral representations \cite{AC2, AMMN2, BH}
 when the flow parameters ${\bf t}$ and ${\bf s}$ are restricted to the trace invariants of a pair of $n\times n$
  normal matrices $(A, B)$.

\subsection{Examples of weighted constellations}
\label{constell_rat_simplest}

We now display some further examples of singly and doubly labelled constellations corresponding
to weight generating functions $G_{(c_1, c_2, c_3), (\emptyset)}$,  $G_{(\emptyset), (d_1)}$, $G_{(c_1), (d_1)}$
and $G_{(\emptyset), (d_1,d_2)}$,
both on Riemann surfaces of genus $g=0$, that are five-sheeted branched covers of the Riemann sphere
with five branch points,  and of genus $g=1$, that are three-sheeted branched covers, also with five branch points.

Figure \ref{fig:planar_N5_constell_3} shows another example of a doubly weighted labelled planar constellation, of type $G_{(\emptyset), (d_1)}$  
for $N=5$ with five branch points corresponding to the monodromy factorization
\bea
h_{(0,0))} h_{(1,1)} h_{(1,2)} h_{(1,3)} h_{(1,0)}&\& = \Ib_5, \cr
h_{(0,0)} =  (123),\  h_{(1,1)} = (153) , \  h_{(1,2)}  = (15)(23)   , \ h_{(1,3)} &\&= (14) , \  h_{(1,0)}= (14).
\eea

 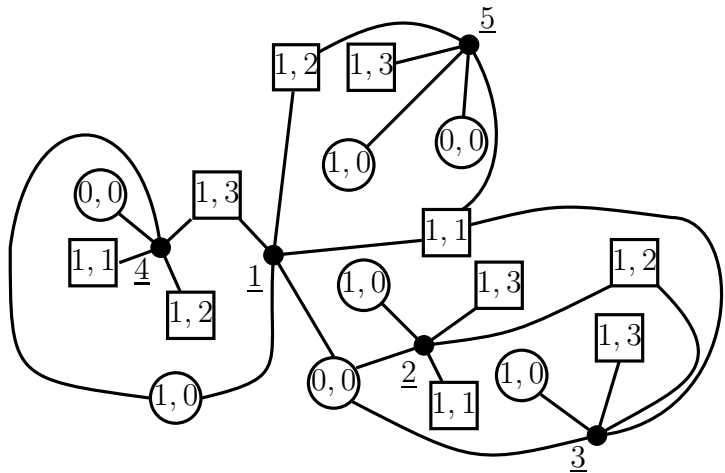
\begin{figure}[H]
 \label{fig:planar_N5_constell_3}
\begin{center}
	\begin{tikzpicture}
		\tikzmath{
			\sts=0.13;
			\x1=0;
			\y1=0;
			\x2=2.0;
			\y2=-1.2;
			\x3=4.3;
			\y3=-2.4;
			\x4=-1.5;
			\y4=0.1;
			\x5=2.6;
			\y5=2.8;
			\qsd=1.2*0.28;
			\qrd=1.1*0.28;
			}
		\draw[fill=black] (\x5,\y5)  circle   (\sts)  node[anchor=south west]{\underline{5}};
		\draw[fill=black] (\x1,\y1)  circle   (\sts)  node[anchor=north east]{\underline{1}};
		\draw[fill=black] (\x2,\y2)  circle   (\sts);
		\draw (\x2-0.2,\y2-0.4) node{\underline{2}};
		\draw[fill=black] (\x3,\y3)  circle   (\sts)  node[anchor=north east]{\underline{3}};
		\draw[fill=black] (\x4,\y4)  circle   (\sts)  node[anchor=north east]{\underline{4}};
		\draw[very thick,fill=white] (\x1,\y5-0.6) rectangle (\x1+0.6,\y5);
		\draw (\x1+0.3,\y5-0.3) node(v215) {$1,2$};
		\draw[very thick,fill=white] (\x5-0.3-\qrd,\y1+0.3-\qrd) rectangle(\x5-0.3+\qrd,\y1+0.3+\qrd);
		\draw (\x5-0.3,\y1+0.3) node(v115) {$1,1$};
		\draw[very thick,fill=white] (\x5-1.3-\qrd,\y5-0.3-\qrd) rectangle (\x5-1.3+\qrd,\y5-0.3+\qrd);
		\draw (\x5-1.3,\y5-0.3) node(v35) {$1,3$};
		\draw[very thick,fill=white] (\x5-1.6,\y5-1.6) circle (\qsd);
		\draw (\x5-1.6,\y5-1.6) node(vinf5) {$1,0$};
		\draw[very thick,fill=white] (\x5-0.1,\y5-1.3) circle (\qsd);
		\draw (\x5-0.1,\y5-1.3) node(v05) {$0,0$};
		\draw[very thick,fill=white] (0.5*\x1+0.5*\x4-\qrd,\y4+0.7-\qrd) rectangle (0.5*\x1+0.5*\x4+\qrd,\y4+0.7+\qrd);
		\draw (0.5*\x1+0.5*\x4,\y4+0.7) node (v314) {$1,3$};
		\draw[very thick,fill=white] (\x4-0.8,\y4+0.7) circle (\qsd);
		\draw (\x4-0.8,\y4+0.7) node (v04) {$0,0$};
		\draw[very thick,fill=white] (\x4-0.9-\qrd,\y4-0.2-\qrd) rectangle(\x4-0.9+\qrd,\y4-0.2+\qrd);
		\draw (\x4-0.9,\y4-0.2) node (v14) {$1,1$};
		\draw[very thick,fill=white] (\x4+0.4-\qrd,\y4-0.9-\qrd) rectangle (\x4+0.4+\qrd,\y4-0.9+\qrd);
		\draw (\x4+0.4,\y4-0.9) node (v24) {$1,2$};
		\draw[very thick,fill=white] (\x4+0.2,\y4-2) circle (\qsd);
		\draw (\x4+0.2,\y4-2) node (vinf4) {$1,0$};
		\draw[very thick,fill=white] (\x1*0.6+\x2*0.4,\y2-0.5) circle (\qsd);
		\draw (\x1*0.6+\x2*0.4,\y2-0.5) node (v012) {$0,0$};
		\draw[very thick,fill=white] (\x2+0.4-\qrd,\y2-0.8-\qrd) rectangle(\x2+0.4+\qrd,\y2-0.8+\qrd);
		\draw (\x2+0.4,\y2-0.8) node (v12) {$1,1$};
		\draw[very thick,fill=white] (\x3+0.5-\qrd,\y2+1.1-\qrd) rectangle (\x3+0.5+\qrd,\y2+1.1+\qrd);
		\draw (\x3+0.5,\y2+1.1) node (v223) {$1,2$};
		\draw[very thick,fill=white] (\x2+1-\qrd,\y2+0.8-\qrd) rectangle (\x2+1.0+\qrd,\y2+0.8+\qrd);
		\draw (\x2+1,\y2+0.8) node (v32) {$1,3$};
		\draw[very thick,fill=white] (\x2-0.8,\y2+0.8) circle (\qsd);
		\draw (\x2-0.8,\y2+0.8) node (vinf2) {$1,0$};
		\draw[very thick,fill=white] (\x3+0.3-\qrd,\y3+1.3-\qrd) rectangle (\x3+0.3+\qrd,\y3+1.3+\qrd);
		\draw (\x3+0.3,\y3+1.3) node (v33) {$1,3$};
		\draw[very thick,fill=white] (\x3-1.0,\y3+0.8) circle (\qsd);
		\draw (\x3-1.0,\y3+0.8) node (vinf3) {$1,0$};
		\draw[very thick] (\x5-1.0,\y5-0.25) -- (\x5,\y5);
		\draw[very thick] (\x5-1.35,\y5-1.35) -- (\x5,\y5);
		\draw[very thick] (v05) -- (\x5,\y5);
		\draw[very thick] (\x4,\y4) -- (v314) -- (\x1,\y1);
		\draw[very thick] (\x4-0.55,\y4+0.45) -- (\x4,\y4);
		\draw[very thick] (\x4-0.55,\y4-0.2) -- (\x4,\y4);
		\draw[very thick] (v24) -- (\x4,\y4);
		\draw[very thick] (v12) -- (\x2,\y2);
		\draw[very thick] (\x2+0.7,\y2+0.5)  -- (\x2,\y2);
		\draw[very thick] (\x2-0.55,\y2+0.55) -- (\x2,\y2);
		\draw[very thick] (\x3+0.3,\y3+1.0) -- (\x3,\y3);
		\draw[very thick] (\x3-0.75,\y3+0.55)  -- (\x3,\y3);
		\draw[very thick] (\x1,\y1) -- (\x1*0.6+\x2*0.4,\y2-0.15);
		\draw[very thick] (\x1*0.6+\x2*0.4+0.3,\y2-0.3) -- (\x2,\y2);
		\draw[very thick] (\x1,\y1) --  (v215);
		\draw[very thick] (\x1+0.6,\y5-0.1) .. controls (\x1+1.5,\y5+0.4) and  (\x5-0.7,\y5+0.4) .. (\x5,\y5);
		\draw[very thick] (\x1,\y1) --  (\x5-0.6,\y1+0.2);
		\draw[very thick] (\x5-0.1,\y1+0.6).. controls (\x5+0.5,\y1+1) and (\x5+0.5,\y5-0.8) .. (\x5,\y5);
		\draw[very thick] (\x4,\y4) .. controls (\x4-0.25,\y4+2) and (\x4-1.75,\y4+2) .. (\x4-2,\y4) .. controls (\x4-2,\y4-1.75) .. (\x4-0.15,\y4-2);
		\draw[very thick] (\x4+0.55,\y4-2) .. controls (\x1+0.3,\y1-1.6) and (\x1-0.1,\y1-1.5) .. (\x1,\y1);
		\draw[very thick](\x1*0.6+\x2*0.4+0.25,\y2-0.75)   .. controls (\x2*0.7+\x3*0.3, \y3-0.4) .. (\x3,\y3) .. controls (\x3+2,\y3+0.2) and (\x3+2,\y1+0.5) .. (\x3+1,\y1+0.5)
		.. controls (\x3-0.5,\y1+0.7)  .. (\x5-0.0,\y1+0.4);
		\draw[very thick] (\x2,\y2) .. controls (\x2*0.5+\x3*0.5+0.05,\y2+0.15).. (\x3+0.2,\y2+0.8);
		\draw[very thick] (\x3+0.8,\y2+0.8).. controls (\x3+2,0.7*\y2+0.3*\y3) and (\x3+1,0.5*\y2+0.5*\y3) .. (\x3,\y3);
	\end{tikzpicture}
\end{center}
\caption{\footnotesize  Example of a doubly labelled planar constellation with $N=5$, $L=0$, $M=1$, $J_1 =3$.}
\end{figure}
Up to the change in labelling, this is identical to the factorizations (\ref{factor_N5k3}) and (\ref{N5_L1_M1_factoriz}),
so the unweighted constellation is  the same as those in Figures \ref{fig:planar_N5_constell_1} and \ref{fig:planar_N5_constell_2}.
However, the double labelling implies a different set of weighting parameters, according to the rules listed in Definition \ref{double_labelled_rational_weighting}.
This case corresponds to $(L,M, J)=(0, 1, (3))$ and therefore,
by eq.~(\ref{double_label_weight})  has total weight
\be
W_{\GG^{(3)}_{(\emptyset), (d_1)}} = {1\over 5!} \gamma^5 \beta^5 d_1^5\   p_{(2,1,1,1)}({\bf t})\  p_{(3,1,1)}({\bf s}).
\ee

Figure \ref{fig:planar_N5_constell_4}  provides a further double labelling,
of type $G_{(\emptyset), (d_1, d_2))}$,  with $(J_1,J_2) =(1,2)$, of the same unweighted constellation that appears in 
Figures \ref{fig:planar_N5_constell_1}, \ref{fig:planar_N5_constell_2} and \ref{fig:planar_N5_constell_3}.
The associated weighting parameters are again determined according to the rules listed in 
Definition \ref{double_labelled_rational_weighting}.
This case corresponds to $(L,M, J)=(0, 2, (1,2))$ and therefore,
by eq.~(\ref{double_label_weight})  has total weight
\be
W_{\GG^{(1,2)}_{(\emptyset) (d_1,d_2)}} = {1\over 5!} \gamma^5  \beta^5 d_1^2 d_2^3\   p_{(2,1,1,1)}({\bf t})\  p_{(3,1,1)}({\bf s}).
\ee

  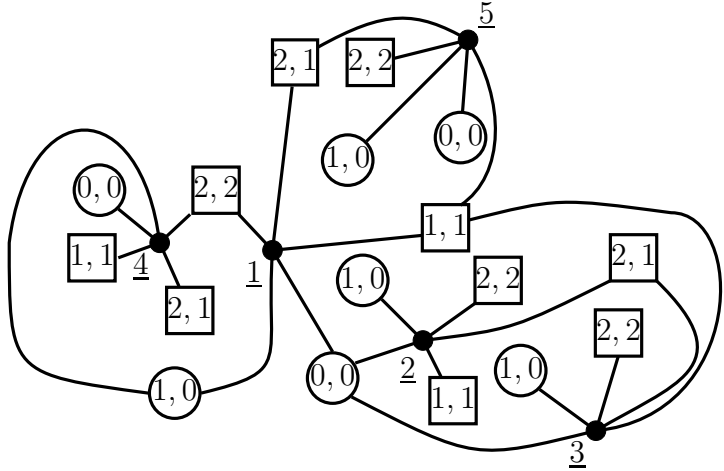
\begin{figure}[H]
 \label{fig:planar_N5_constell_4}
\begin{center}
	\begin{tikzpicture}
		\tikzmath{
			\sts=0.13;
			\x1=0;
			\y1=0;
			\x2=2.0;
			\y2=-1.2;
			\x3=4.3;
			\y3=-2.4;
			\x4=-1.5;
			\y4=0.1;
			\x5=2.6;
			\y5=2.8;
			\qsd=1.2*0.28;
			\qrd=1.1*0.28;
			}
		\draw[fill=black] (\x5,\y5)  circle   (\sts)  node[anchor=south west]{\underline{5}};
		\draw[fill=black] (\x1,\y1)  circle   (\sts)  node[anchor=north east]{\underline{1}};
		\draw[fill=black] (\x2,\y2)  circle   (\sts);
		\draw (\x2-0.2,\y2-0.4) node{\underline{2}};
		\draw[fill=black] (\x3,\y3)  circle   (\sts)  node[anchor=north east]{\underline{3}};
		\draw[fill=black] (\x4,\y4)  circle   (\sts)  node[anchor=north east]{\underline{4}};
		\draw[very thick,fill=white] (\x1,\y5-0.6) rectangle (\x1+0.6,\y5);1,0
		\draw (\x1+0.3,\y5-0.3) node(v215) {$2,1$};
		\draw[very thick,fill=white] (\x5-0.3-\qrd,\y1+0.3-\qrd) rectangle(\x5-0.3+\qrd,\y1+0.3+\qrd);
		\draw (\x5-0.3,\y1+0.3) node(v115) {$1,1$};
		\draw[very thick,fill=white] (\x5-1.3-\qrd,\y5-0.3-\qrd) rectangle (\x5-1.3+\qrd,\y5-0.3+\qrd);
		\draw (\x5-1.3,\y5-0.3) node(v35) {$2,2$};
		\draw[very thick,fill=white] (\x5-1.6,\y5-1.6) circle (\qsd);
		\draw (\x5-1.6,\y5-1.6) node(vinf5) {$1,0$};
		\draw[very thick,fill=white] (\x5-0.1,\y5-1.3) circle (\qsd);
		\draw (\x5-0.1,\y5-1.3) node(v05) {$0,0$};
		\draw[very thick,fill=white] (0.5*\x1+0.5*\x4-\qrd,\y4+0.7-\qrd) rectangle (0.5*\x1+0.5*\x4+\qrd,\y4+0.7+\qrd);
		\draw (0.5*\x1+0.5*\x4,\y4+0.7) node (v314) {$2,2$};
		\draw[very thick,fill=white] (\x4-0.8,\y4+0.7) circle (\qsd);
		\draw (\x4-0.8,\y4+0.7) node (v04) {$0,0$};
		\draw[very thick,fill=white] (\x4-0.9-\qrd,\y4-0.2-\qrd) rectangle(\x4-0.9+\qrd,\y4-0.2+\qrd);
		\draw (\x4-0.9,\y4-0.2) node (v14) {$1,1$};
		\draw[very thick,fill=white] (\x4+0.4-\qrd,\y4-0.9-\qrd) rectangle (\x4+0.4+\qrd,\y4-0.9+\qrd);
		\draw (\x4+0.4,\y4-0.9) node (v24) {$2,1$};
		\draw[very thick,fill=white] (\x4+0.2,\y4-2) circle (\qsd);
		\draw (\x4+0.2,\y4-2) node (vinf4) {$1,0$};
		\draw[very thick,fill=white] (\x1*0.6+\x2*0.4,\y2-0.5) circle (\qsd);
		\draw (\x1*0.6+\x2*0.4,\y2-0.5) node (v012) {$0,0$};
		\draw[very thick,fill=white] (\x2+0.4-\qrd,\y2-0.8-\qrd) rectangle(\x2+0.4+\qrd,\y2-0.8+\qrd);
		\draw (\x2+0.4,\y2-0.8) node (v12) {$1,1$};
		\draw[very thick,fill=white] (\x3+0.5-\qrd,\y2+1.1-\qrd) rectangle (\x3+0.5+\qrd,\y2+1.1+\qrd);
		\draw (\x3+0.5,\y2+1.1) node (v223) {$2,1$};
		\draw[very thick,fill=white] (\x2+1-\qrd,\y2+0.8-\qrd) rectangle (\x2+1.0+\qrd,\y2+0.8+\qrd);
		\draw (\x2+1,\y2+0.8) node (v32) {$2,2$};
		\draw[very thick,fill=white] (\x2-0.8,\y2+0.8) circle (\qsd);
		\draw (\x2-0.8,\y2+0.8) node (vinf2) {$1,0$};
		\draw[very thick,fill=white] (\x3+0.3-\qrd,\y3+1.3-\qrd) rectangle (\x3+0.3+\qrd,\y3+1.3+\qrd);
		\draw (\x3+0.3,\y3+1.3) node (v33) {$2,2$};
		\draw[very thick,fill=white] (\x3-1.0,\y3+0.8) circle (\qsd);
		\draw (\x3-1.0,\y3+0.8) node (vinf3) {$1,0$};
		\draw[very thick] (\x5-1.0,\y5-0.25) -- (\x5,\y5);
		\draw[very thick] (\x5-1.35,\y5-1.35) -- (\x5,\y5);
		\draw[very thick] (v05) -- (\x5,\y5);
		\draw[very thick] (\x4,\y4) -- (v314) -- (\x1,\y1);
		\draw[very thick] (\x4-0.55,\y4+0.45) -- (\x4,\y4);
		\draw[very thick] (\x4-0.55,\y4-0.2) -- (\x4,\y4);
		\draw[very thick] (v24) -- (\x4,\y4);
		\draw[very thick] (v12) -- (\x2,\y2);
		\draw[very thick] (\x2+0.7,\y2+0.5)  -- (\x2,\y2);
		\draw[very thick] (\x2-0.55,\y2+0.55) -- (\x2,\y2);
		\draw[very thick] (\x3+0.3,\y3+1.0) -- (\x3,\y3);
		\draw[very thick] (\x3-0.75,\y3+0.55)  -- (\x3,\y3);
		\draw[very thick] (\x1,\y1) -- (\x1*0.6+\x2*0.4,\y2-0.15);
		\draw[very thick] (\x1*0.6+\x2*0.4+0.3,\y2-0.3) -- (\x2,\y2);
		\draw[very thick] (\x1,\y1) --  (v215);
		\draw[very thick] (\x1+0.6,\y5-0.1) .. controls (\x1+1.5,\y5+0.4) and  (\x5-0.7,\y5+0.4) .. (\x5,\y5);
		\draw[very thick] (\x1,\y1) --  (\x5-0.6,\y1+0.2);
		\draw[very thick] (\x5-0.1,\y1+0.6).. controls (\x5+0.5,\y1+1) and (\x5+0.5,\y5-0.8) .. (\x5,\y5);
		\draw[very thick] (\x4,\y4) .. controls (\x4-0.25,\y4+2) and (\x4-1.75,\y4+2) .. (\x4-2,\y4) .. controls (\x4-2,\y4-1.75) .. (\x4-0.15,\y4-2);
		\draw[very thick] (\x4+0.55,\y4-2) .. controls (\x1+0.3,\y1-1.6) and (\x1-0.1,\y1-1.5) .. (\x1,\y1);
		\draw[very thick](\x1*0.6+\x2*0.4+0.25,\y2-0.75)   .. controls (\x2*0.7+\x3*0.3, \y3-0.4) .. (\x3,\y3) .. controls (\x3+2,\y3+0.2) and (\x3+2,\y1+0.5) .. (\x3+1,\y1+0.5)
		.. controls (\x3-0.5,\y1+0.7)  .. (\x5-0.0,\y1+0.4);
		\draw[very thick] (\x2,\y2) .. controls (\x2*0.5+\x3*0.5+0.05,\y2+0.15).. (\x3+0.2,\y2+0.8);
		\draw[very thick] (\x3+0.8,\y2+0.8).. controls (\x3+2,0.7*\y2+0.3*\y3) and (\x3+1,0.5*\y2+0.5*\y3) .. (\x3,\y3);
	\end{tikzpicture}
\end{center}
\caption{\footnotesize  Example of a doubly labelled planar constellation with $N=5$, $L=0$, $M=2$, $J_1 =1$, $J_2 =2$.}
\end{figure}

Figure \ref{fig:torus_constell_N3_1} shows a doubly labelled constellation with $N=3$,
 and $(L,M, J)=(3, 0, (0))$, so the weight generating function is the cubic polynomial $G_{(c_1, c_2, c_3), (\emptyset)}$
and this is therefore  equivalent to a singly labelled constellation.  It has five branch points with monodromy factorization
\bea
h_{(0,0))} h_{(1,0)} h_{(2,0)} h_{(3,0)} h_{(4,0)}&\& = \Ib_5, \cr
h_{(0,0)} =  (123),\  h_{(1,0)} = (12) , \  h_{(2,0)}  = (23)   , \ h_{(3,0)} &\&= (12) , \  h_{(4,0)}= (12).
\label{N3_torus_factorization}
\eea
The genus is $g=1$, so it is mapped on a torus.
The weighting parameters are again determined by the rules of Definition \ref{double_labelled_rational_weighting}.
so, by eq.~(\ref{double_label_weight})  the total weight is
\be
W_{\GG^0_{(c_1, c_2, c_3), (\emptyset)}} = {1\over 3!} \gamma^3 \beta^3 c_1 c_2 c_3\   p_{(2,1)}({\bf t})\  p_{(3)}({\bf s}).
\ee

\begin{figure}[H]
 \label{fig:torus_constell_N3_1}
	\begin{center}
		\tikzmath{
			\sts=0.13;
			\qsd=0.28;
			}
		\begin{tikzpicture}
			\draw[very thick,blue] (0,0) ellipse[y radius=2cm,x radius=5cm];
			\draw[very thick,blue] (-2.7,0.7) arc[start angle=210,end angle=330, y radius=1.5cm, x radius=3cm];
			\draw[very thick,blue] (2.0,0.35) arc[start angle=20, end angle=160, x radius=2.2cm, y radius=1 cm];
			\draw[very thick,fill=black] (1.8,-1.0) circle[radius=\sts];
			\draw[very thick,fill=black] (-2.5,-0.2) circle[radius=\sts];
			\draw[very thick,fill=black] (4,0.3) circle[radius=\sts];
			\draw (4,0.3) node[anchor=north west] {$\underline{3}$};
			\draw (-2.5,-0.2) node[below=2pt] {$\underline{2}$};
			\draw (1.8,-1.0) node [anchor=south west] {$\underline{1}$};
			\draw[very thick] (0,-0.5) circle[radius=1.2*\qsd];
			\draw (0,-0.52) node {$0,0$};
			\draw[very thick] (-2.4,-0.1) arc[start angle=180, end angle=265, y radius=0.5cm, x radius=2.2cm];
			\draw[very thick] (0.3,-0.6) arc[start angle=275, end angle=325, y radius=2cm, x radius=5.0cm];
			\draw[very thick] (-1.3,-1.03) circle[radius=1.2*\qsd];
			\draw (-1.3,-1.03) node {$1,0$};
			\draw[very thick] (-2.4,-0.1) arc[start angle=180, end angle=265, y radius=0.5cm, x radius=2.2cm];
			\draw[very thick,rotate=70,dash pattern= on 1pt off 1pt ] (-2.5,2.0) arc [start angle=180, end angle=360,x radius=1.05cm, y radius =0.5cm];
			\draw[very thick,rotate=70] (-0.45,1.9) arc [start angle=0,end angle=60, x radius=1.1cm,y radius=0.5cm];
			\draw[very thick] (-2.7,-1.6) .. controls (-2.8,-1.2) and (-2.6,-1.4) .. (-1.6,-1.03);
			\draw[very thick,rotate=70,dash pattern= on 1pt off 1pt ] (0.8,-1.6) arc [start angle=0, end angle=180,x radius=1.15cm, y radius =0.5cm];
			\draw[very thick,rotate=-15] (-0.7,-1.2) arc[start angle=270, end angle=320, y radius=2cm, x radius=3.5cm];
			\draw[very thick] (0.3,-0.62)--(1.8,-1.0);
			\draw[very thick] (-0.7,-1.55) circle[radius=1.2*\qsd];
			\draw (-0.7,-1.55) node {$2,0$};
			\draw[very thick] (-0.6+\qsd,-1.55) arc [start angle=270, end angle=330, x radius=2.5cm, y radius=1cm];
			\draw[very thick,rotate=70] (-1.5,-1.6) arc [start angle=180, end angle=270,x radius=1.15cm, y radius =0.5cm];
			\draw[very thick] (3.2,0.4) circle[radius=1.2*\qsd];
			\draw (3.2,0.4) node {$4,0$};
			\draw[very thick] (2.1,1.0) circle[radius=1.2*\qsd];
			\draw (2.1,1.0) node {$3,0$};
			\draw[very thick] (2.4,0.9) .. controls (2.8,0.75) and (2.8,0.45) .. (1.8,0.2);
			\draw[very thick] (2.45,0.95) arc[start angle=90,end angle=0,x radius=1.5cm, y radius=0.5cm];
			\draw[very thick] (3.2,-0.6) circle[radius=1.2*\qsd];
			\draw (3.2,-0.6) node {$1,0$};
			\draw[very thick] (-4.4,0.3) circle[radius=1.2*\qsd];
			\draw (-4.4,0.3) node {$4,0$};
			\draw[very thick] (-3.4,0.3) circle[radius=1.2*\qsd];
			\draw (-3.4,0.3) node {$3,0$};
			\draw[very thick] (-2,1.0) circle[radius=1.2*\qsd];
			\draw (-2,1.0) node {$2,0$};
			\draw[very thick] (-4.4,-0.05) arc [start angle=220, end angle=300, x radius=1.5cm, y radius=1cm];
			\draw[very thick] (-2.5,-0.2) .. controls (-3.2,0.6) and (-3.2,0.8) .. (-2.3,1.0);
			\draw[very thick] (-1.7,1.0) .. controls (2.2,2.2) and (4,1) .. (4,0.3);
			\draw[very thick] (-4.4,0.6) arc [start angle=170,end angle=0, x radius=4.5cm, y radius=1.5cm];
			\draw[very thick] (1.8,-1.0) .. controls (2.8,-1.5) and (4.7,-1)..(4.53,0.35);
			\draw[very thick] (-3.15,0.05) arc[start angle=225, end angle=270, x radius=1.2cm, y radius=1cm];
			\draw[very thick] (4,0.3) arc[start angle=45,end angle=70,x radius=1.2cm, y radius=0.75cm];
			\draw[very thick] (4.0,0.3) arc[start angle=360, end angle=310, x radius=1.2cm, y radius=1cm];
		\end{tikzpicture}
	\end{center}
	\caption{A constellation  with $N=3$, $L=3$, $M=0$, and $g=1$.}
\end{figure}
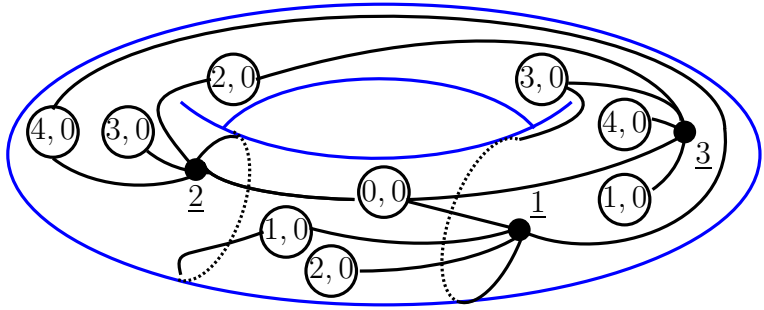

Figure \ref{fig:torus_constell_N3_1_period_parallel} shows the same constellation as Figure \ref{fig:torus_constell_N3_1},  
displayed on the lattice of period parallelograms.

\begin{figure}[H]
 \label{fig:torus_constell_N3_1_period_parallel}
	\begin{center}
		\tikzmath{
			\sts=0.13;
			\qsd=0.28;
			}
		\begin{tikzpicture}
			\foreach \xoff in {0,7.3}
			\foreach \yoff in {0,6}
			{
			\draw[very thick,fill=black] (1.4+\xoff,-3.0+\yoff) circle[radius=\sts];
			\draw[very thick,fill=black] (-2+\xoff,0+\yoff) circle[radius=\sts];
			\draw[very thick,fill=black] (3+\xoff,1+\yoff) circle[radius=\sts];
			\draw (3+\xoff,1+\yoff) node[anchor=north west] {$\underline{3}$};
			\draw (-2+\xoff,0+\yoff) node[below=2pt] {$\underline{2}$};
			\draw (1.4+\xoff,-3.0+\yoff) node[below=2pt] {$\underline{1}$};
			\draw[very thick] (1.4+\xoff,-1+\yoff) circle[radius=1.2*\qsd];
			\draw (1.4+\xoff,-1+\yoff) node {$0,0$};
			\draw[very thick] (0+\xoff,-2+\yoff) circle[radius=1.2*\qsd];
			\draw (0+\xoff,-2+\yoff) node {$1,0$};
			\draw[very thick] (3.3+\xoff,-2+\yoff) circle[radius=1.2*\qsd];
			\draw (3.3+\xoff,-2+\yoff) node {$4,0$};
			\draw[very thick] (0.4+\xoff,-4+\yoff) circle[radius=1.2*\qsd];
			\draw (0.4+\xoff,-4+\yoff) node {$2,0$};
			\draw[very thick] (-2+\xoff,1+\yoff) circle[radius=1.2*\qsd];
			\draw (-2+\xoff,1+\yoff) node {$2,0$};
			\draw[very thick] (3+\xoff,-3.5+\yoff) circle[radius=1.2*\qsd];
			\draw (3+\xoff,-3.5+\yoff) node {$3,0$};
			\draw[very thick] (-2.8+\xoff,0+\yoff) circle[radius=1.2*\qsd];
			\draw (-2.8+\xoff,0+\yoff) node {$3,0$};
			\draw[very thick] (2+\xoff,1+\yoff) circle[radius=1.2*\qsd];
			\draw (2+\xoff,1+\yoff) node {$4,0$};
			\draw[very thick] (3+\xoff,0+\yoff) circle[radius=1.2*\qsd];
			\draw (3+\xoff,0+\yoff) node {$1,0$};
			\draw[very thick] (-2+\xoff,0+\yoff) -- (-2+\xoff,0.7+\yoff);
			\draw[very thick] (-2+\xoff,0+\yoff) -- (-2.5+\xoff,0+\yoff);
			\draw[very thick] (3+\xoff,1+\yoff) -- (3+\xoff,0.3+\yoff);
			\draw[very thick] (3+\xoff,1+\yoff) -- (2.3+\xoff,1+\yoff);
			\draw[very thick] (1.4+\xoff,-3.0+\yoff) -- (1.4+\xoff,-1.3+\yoff);
			\draw[very thick] (1.4+\xoff,-3.0+\yoff) -- (0.2+\xoff,-2.3+\yoff);
			\draw[very thick] (1.4+\xoff,-3.0+\yoff) -- (3.1+\xoff,-2.3+\yoff);
			\draw[very thick] (1.4+\xoff,-3.0+\yoff) -- (2.75+\xoff,-3.3+\yoff);
			\draw[very thick] (1.4+\xoff,-3.0+\yoff) -- (0.65+\xoff,-3.75+\yoff);
			\draw[very thick] (-2+\xoff,0+\yoff) -- (1.05+\xoff,-0.95+\yoff);
			\draw[very thick] (3+\xoff,1+\yoff) -- (1.75+\xoff,-0.95+\yoff);
			\draw[very thick, dash pattern = on 1pt  off 1pt] (\xoff-0.5*7.3+0.3,\yoff-4.5) rectangle (\xoff+0.5*7.3+0.3,\yoff+1.5);
			\draw[very thick](-2.35+\xoff,1+\yoff)--(-0.5*7.3+0.3+\xoff,1+\yoff);
			\draw[very thick](3+\xoff,1+\yoff)--(0.5*7.3+0.3+\xoff,1+\yoff);
			\draw[very thick](3+\xoff,-3.8+\yoff)--(3+\xoff,-4.5+\yoff);
			\draw[very thick](3+\xoff,1+\yoff)--(3+\xoff,1.5+\yoff);
			\draw[very thick](-2+\xoff,0+\yoff)--(-1.25+\xoff,1.5+\yoff);
			\draw[very thick](-0.2+\xoff,-2.3+\yoff)--(-1.25+\xoff,-4.5+\yoff);
			\draw[very thick](3.5+\xoff,-1.8+\yoff)--(0.5*7.3+0.3+\xoff,-1.35+\yoff);
			\draw[very thick](-2+\xoff,0+\yoff)--(-0.5*7.3+0.3+\xoff,-1.35+\yoff);
			}
		\end{tikzpicture}
	\end{center}
	\caption{The same constellation as Figure \ref{fig:torus_constell_N3_1},  displayed on the lattice of period parallelograms.}
\end{figure}
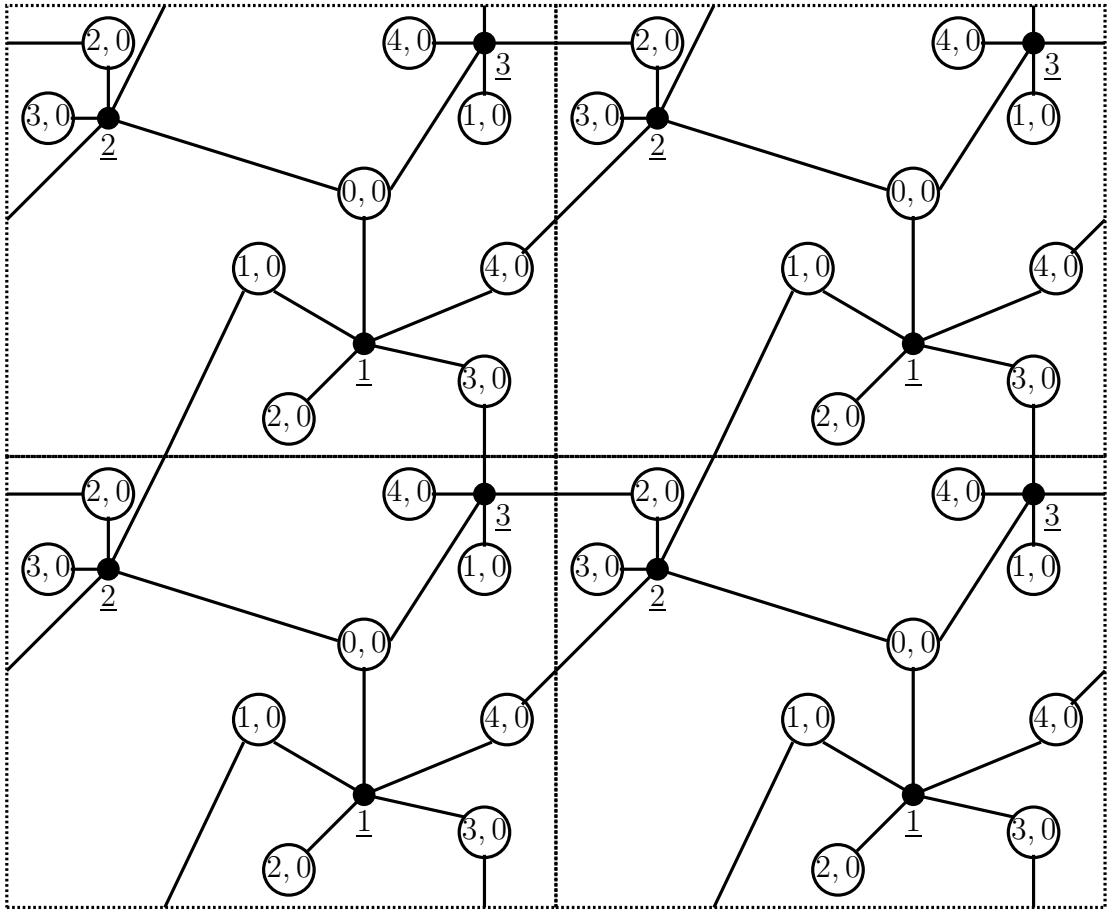

Figures \ref{fig:torus_constell_N3_2},  \ref{fig:torus_constell_N3_3} and \ref{fig:torus_constell_N3_4} show
three further types of double labelling of the same  constellation as Figure \ref{fig:torus_constell_N3_1},
corresponding to $(L, M, J)$ equal to $(0,1, (3))$, $(1,1,(2))$ and $(0,2,(1,2))$, respectively, 
and weight generating functions  $G_{(0),(d_1)}$, $G_{(c_1),(d_1)}$ , $G_{(c_1),(d_1, d_2)}$.
Up to changes in labelling, the group elements comprising the corresponding factorization of $\Ib_3 \in \SS_3$ 
are the same as in (\ref{N3_torus_factorization}). 
However, the weighting parameters,  determined according to the rules in Definition 
\ref{double_labelled_rational_weighting}, are different in each case. The corresponding weighted constellations
 therefore contribute different terms to the sums in (\ref{tau_G_cd_H}), defining different 
 $\tau$-functions  $\tau^{(G_{{\bf s}, {\bf d}}, \beta, \gamma)}({\bf t}, {\bf s})$ for each case.
 
Figure \ref{fig:torus_constell_N3_2} corresponds to $(L,M, J)=(0, 1, (3))$, with weight generating
function $G_{(\emptyset),(d_1)}$ and therefore, by eq.~(\ref{double_label_weight})  has total weight
\be
W_{\GG^{(3)}_{(\emptyset), (d_1)}} = {1\over 3!} \gamma^3  \beta^3d_1^3\   p_{(2,1)}({\bf t})\  p_{(3)}({\bf s}).
\ee


\begin{figure}[H]
 \label{fig:torus_constell_N3_2}
	\begin{center}
		\tikzmath{
			\sts=0.13;
			\qsd=1.2*0.28;
			}
		\begin{tikzpicture}
			\draw[very thick,blue] (0,0) ellipse[y radius=2cm,x radius=5cm];
			\draw[very thick,blue] (-2.7,0.7) arc[start angle=210,end angle=330, y radius=1.5cm, x radius=3cm];
			\draw[very thick,blue] (2.0,0.35) arc[start angle=20, end angle=160, x radius=2.2cm, y radius=1 cm];
			\draw[very thick,fill=black] (1.8,-1.0) circle[radius=\sts];
			\draw[very thick,fill=black] (-2.5,-0.2) circle[radius=\sts];
			\draw[very thick,fill=black] (4,0.3) circle[radius=\sts];
			\draw (4,0.3) node[anchor=north west] {$\underline{3}$};
			\draw (-2.5,-0.2) node[below=2pt] {$\underline{2}$};
			\draw (1.8,-1.0) node [anchor=south west] {$\underline{1}$};
			\draw[very thick] (0,-0.5) circle[radius=\qsd];
			\draw (0,-0.52) node {$0,0$};
			\draw[very thick] (-2.4,-0.1) arc[start angle=180, end angle=265, y radius=0.5cm, x radius=2.2cm];
			\draw[very thick] (0.35,-0.6) arc[start angle=275, end angle=325, y radius=2cm, x radius=5.0cm];
			\draw[very thick] (-1.4-\qsd,-1.03-\qsd) rectangle (-1.4+\qsd,-1.03+\qsd);
			\draw (-1.4,-1.03) node {$1,1$};
			\draw[very thick] (-2.4,-0.1) arc[start angle=180, end angle=265, y radius=0.5cm, x radius=2.2cm];
			\draw[very thick,rotate=70,dash pattern= on 1pt off 1pt ] (-2.5,2.0) arc [start angle=180, end angle=360,x radius=1.05cm, y radius =0.5cm];
			\draw[very thick,rotate=70] (-0.45,1.9) arc [start angle=0,end angle=60, x radius=1.1cm,y radius=0.5cm];
			\draw[very thick] (-2.7,-1.6) .. controls (-2.8,-1.2) and (-2.6,-1.4) .. (-1.75,-1.03);
			\draw[very thick,rotate=70,dash pattern= on 1pt off 1pt ] (0.8,-1.6) arc [start angle=0, end angle=180,x radius=1.15cm, y radius =0.5cm];
			\draw[very thick,rotate=-15] (-0.75,-1.2) arc[start angle=270, end angle=320, y radius=2cm, x radius=3.55cm];
			\draw[very thick] (0.35,-0.62)--(1.8,-1.0);
			\draw[very thick] (-0.6-\qsd,-1.55-\qsd) rectangle (-0.6+\qsd,-1.55+\qsd);
			\draw (-0.6,-1.55) node {$1,2$};
			\draw[very thick] (-0.6+\qsd,-1.55) arc [start angle=270, end angle=330, x radius=2.5cm, y radius=1cm];
			\draw[very thick,rotate=70] (-1.5,-1.6) arc [start angle=180, end angle=270,x radius=1.15cm, y radius =0.5cm];
			\draw[very thick] (3.2,0.4) circle[radius=\qsd];
			\draw (3.2,0.4) node {$1,0$};
			\draw[very thick] (2.1-\qsd,1.0-\qsd) rectangle (2.1+\qsd,1.0+\qsd);
			\draw (2.1,1.0) node {$1,3$};
			\draw[very thick] (2.45,0.9) .. controls (2.8,0.75) and (2.8,0.45) .. (1.8,0.2);
			\draw[very thick] (2.45,0.95) arc[start angle=90,end angle=0,x radius=1.5cm, y radius=0.5cm];
			\draw[very thick] (3.2-\qsd,-0.6-\qsd) rectangle (3.2+\qsd,-0.6+\qsd);
			\draw (3.2,-0.6) node {$1,1$};
			\draw[very thick] (-4.4,0.3) circle[radius=\qsd];
			\draw (-4.4,0.3) node {$1,0$};
			\draw[very thick] (-3.4-\qsd,0.3-\qsd) rectangle (-3.4+\qsd,0.3+\qsd);
			\draw (-3.4,0.3) node {$1,3$};
			\draw[very thick] (-2-\qsd,1.1-\qsd) rectangle (-2+\qsd,1.1+\qsd);
			\draw (-2,1.1) node {$1,2$};
			\draw[very thick] (-4.4,-0.05) arc [start angle=220, end angle=300, x radius=1.5cm, y radius=1cm];
			\draw[very thick] (-2.5,-0.2) .. controls (-3.2,0.6) and (-3.2,0.8) .. (-2.35,1.0);
			\draw[very thick] (-1.65,1.0) .. controls (2.2,2.2) and (4,1) .. (4,0.3);
			\draw[very thick] (-4.4,0.65) arc [start angle=170,end angle=-2, x radius=4.49cm, y radius=1.5cm];
			\draw[very thick] (1.8,-1.0) .. controls (2.8,-1.5) and (4.7,-1)..(4.53,0.35);
			\draw[very thick] (-3.1,0.00) arc[start angle=225, end angle=260, x radius=1.2cm, y radius=1cm];
			\draw[very thick] (4,0.3) arc[start angle=45,end angle=70,x radius=1.2cm, y radius=0.75cm];
			\draw[very thick] (4.0,0.3) arc[start angle=360, end angle=310, x radius=1.2cm, y radius=1cm];
		\end{tikzpicture}
	\end{center}
	\caption{The same constellation as Figure \ref{fig:torus_constell_N3_1}, with double labelling parameters $L=0$, $M=1$, $J_1=3$.}
\end{figure}
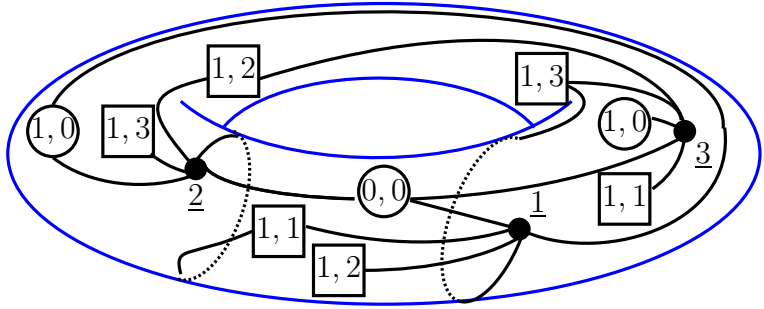

Figure \ref{fig:torus_constell_N3_3} corresponds to $(L,M, J)=$(1,1,(2))$ $, with weight generating
function $G_{(c_1),(d_1)}$ and therefore has total weight
\be
W_{\GG^{(2)}_{(c_1), (d_1)}} = {1\over 3!} \gamma^3   \beta^3 c_1 d_1^2\   p_{(2,1)}({\bf t})\  p_{(3)}({\bf s}).
\ee

\begin{figure}[H]
 \label{fig:torus_constell_N3_3}
	\begin{center}
		\tikzmath{
			\sts=0.13;
			\qsd=1.2*0.28;
			}
		\begin{tikzpicture}
			\draw[very thick,blue] (0,0) ellipse[y radius=2cm,x radius=5cm];
			\draw[very thick,blue] (-2.7,0.7) arc[start angle=210,end angle=330, y radius=1.5cm, x radius=3cm];
			\draw[very thick,blue] (2.0,0.35) arc[start angle=20, end angle=160, x radius=2.2cm, y radius=1 cm];
			\draw[very thick,fill=black] (1.8,-1.0) circle[radius=\sts];
			\draw[very thick,fill=black] (-2.5,-0.2) circle[radius=\sts];
			\draw[very thick,fill=black] (4,0.3) circle[radius=\sts];
			\draw (4,0.3) node[anchor=north west] {$\underline{3}$};
			\draw (-2.5,-0.2) node[below=2pt] {$\underline{2}$};
			\draw (1.8,-1.0) node [anchor=south west] {$\underline{1}$};
			\draw[very thick] (0,-0.5) circle[radius=\qsd];
			\draw (0,-0.52) node {$0,0$};
			\draw[very thick] (-2.4,-0.1) arc[start angle=180, end angle=265, y radius=0.5cm, x radius=2.2cm];
			\draw[very thick] (0.35,-0.6) arc[start angle=275, end angle=325, y radius=2cm, x radius=5.0cm];
			\draw[very thick] (-1.4,-1.03) circle[radius=\qsd];
			\draw (-1.4,-1.03) node {$1,0$};
			\draw[very thick] (-2.4,-0.1) arc[start angle=180, end angle=265, y radius=0.5cm, x radius=2.2cm];
			\draw[very thick,rotate=70,dash pattern= on 1pt off 1pt ] (-2.5,2.0) arc [start angle=180, end angle=360,x radius=1.05cm, y radius =0.5cm];
			\draw[very thick,rotate=70] (-0.45,1.9) arc [start angle=0,end angle=60, x radius=1.1cm,y radius=0.5cm];
			\draw[very thick] (-2.7,-1.6) .. controls (-2.8,-1.2) and (-2.6,-1.4) .. (-1.75,-1.03);
			\draw[very thick,rotate=70,dash pattern= on 1pt off 1pt ] (0.8,-1.6) arc [start angle=0, end angle=180,x radius=1.15cm, y radius =0.5cm];
			\draw[very thick,rotate=-15] (-0.75,-1.2) arc[start angle=270, end angle=320, y radius=2cm, x radius=3.55cm];
			\draw[very thick] (0.35,-0.62)--(1.8,-1.0);
			\draw[very thick] (-0.6-\qsd,-1.55-\qsd) rectangle (-0.6+\qsd,-1.55+\qsd);
			\draw (-0.6,-1.55) node {$1,1$};
			\draw[very thick] (-0.6+\qsd,-1.55) arc [start angle=270, end angle=330, x radius=2.5cm, y radius=1cm];
			\draw[very thick,rotate=70] (-1.5,-1.6) arc [start angle=180, end angle=270,x radius=1.15cm, y radius =0.5cm];
			\draw[very thick] (3.2,0.4) circle[radius=\qsd];
			\draw (3.2,0.4) node {$2,0$};
			\draw[very thick] (2.1-\qsd,1.0-\qsd) rectangle (2.1+\qsd,1.0+\qsd);
			\draw (2.1,1.0) node {$1,2$};
			\draw[very thick] (2.45,0.9) .. controls (2.8,0.75) and (2.8,0.45) .. (1.8,0.2);
			\draw[very thick] (2.45,0.95) arc[start angle=90,end angle=0,x radius=1.5cm, y radius=0.5cm];
			\draw[very thick] (3.2,-0.6) circle[radius=\qsd];
			\draw (3.2,-0.6) node {$1,0$};
			\draw[very thick] (-4.4,0.3) circle[radius=\qsd];
			\draw (-4.4,0.3) node {$2,0$};
			\draw[very thick] (-3.4-\qsd,0.3-\qsd) rectangle (-3.4+\qsd,0.3+\qsd);
			\draw (-3.4,0.3) node {$1,2$};
			\draw[very thick] (-2-\qsd,1.1-\qsd) rectangle (-2+\qsd,1.1+\qsd);
			\draw (-2,1.1) node {$1,1$};
			\draw[very thick] (-4.4,-0.05) arc [start angle=220, end angle=300, x radius=1.5cm, y radius=1cm];
			\draw[very thick] (-2.5,-0.2) .. controls (-3.2,0.6) and (-3.2,0.8) .. (-2.35,1.0);
			\draw[very thick] (-1.65,1.0) .. controls (2.2,2.2) and (4,1) .. (4,0.3);
			\draw[very thick] (-4.4,0.65) arc [start angle=170,end angle=-2, x radius=4.49cm, y radius=1.5cm];
			\draw[very thick] (1.8,-1.0) .. controls (2.8,-1.5) and (4.7,-1)..(4.53,0.35);
			\draw[very thick] (-3.1,0.00) arc[start angle=225, end angle=260, x radius=1.2cm, y radius=1cm];
			\draw[very thick] (4,0.3) arc[start angle=45,end angle=70,x radius=1.2cm, y radius=0.75cm];
			\draw[very thick] (4.0,0.3) arc[start angle=360, end angle=310, x radius=1.2cm, y radius=1cm];
		\end{tikzpicture}
	\end{center}
	\caption{The same constellation as Figure \ref{fig:torus_constell_N3_1}, with double labelling parameters $L=1$, $M=1$ $J_1=2$.}
\end{figure}
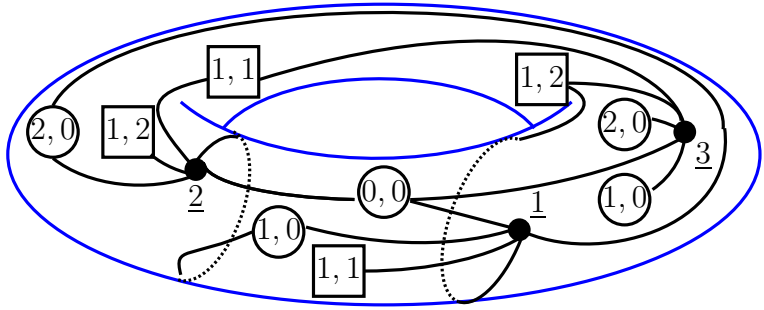

Figure \ref{fig:torus_constell_N3_4} corresponds to $(L,M, J) =(0,2,(1,2)) $, with weight generating
function $G_{(0),(d_1, d_2)}$ and therefore  has total weight
\be
W_{\GG^{(1,2)}_{(\emptyset), (d_1,d_2)}} = {1\over 3!} \gamma^3   \beta^3 d_1 d_2^2\   p_{(2,1)}({\bf t})\  p_{(3)}({\bf s}).
\ee

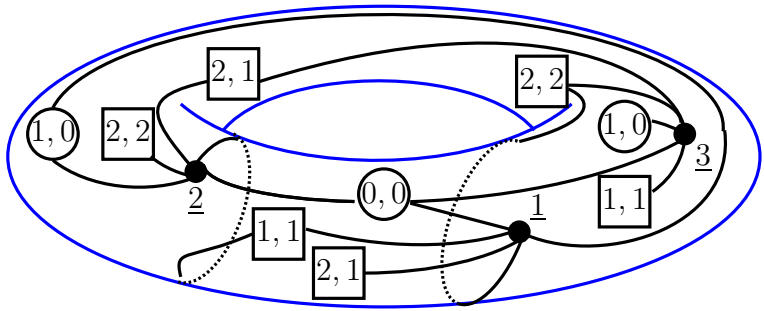
\begin{figure}[H]
 \label{fig:torus_constell_N3_4}
	\begin{center}
		\tikzmath{
			\sts=0.13;
			\qsd=1.2*0.28;
			}
		\begin{tikzpicture}
			\draw[very thick,blue] (0,0) ellipse[y radius=2cm,x radius=5cm];
			\draw[very thick,blue] (-2.7,0.7) arc[start angle=210,end angle=330, y radius=1.5cm, x radius=3cm];
			\draw[very thick,blue] (2.0,0.35) arc[start angle=20, end angle=160, x radius=2.2cm, y radius=1 cm];
			\draw[very thick,fill=black] (1.8,-1.0) circle[radius=\sts];
			\draw[very thick,fill=black] (-2.5,-0.2) circle[radius=\sts];
			\draw[very thick,fill=black] (4,0.3) circle[radius=\sts];
			\draw (4,0.3) node[anchor=north west] {$\underline{3}$};
			\draw (-2.5,-0.2) node[below=2pt] {$\underline{2}$};
			\draw (1.8,-1.0) node [anchor=south west] {$\underline{1}$};
			\draw[very thick] (0,-0.5) circle[radius=\qsd];
			\draw (0,-0.52) node {$0,0$};
			\draw[very thick] (-2.4,-0.1) arc[start angle=180, end angle=265, y radius=0.5cm, x radius=2.2cm];
			\draw[very thick] (0.35,-0.6) arc[start angle=275, end angle=325, y radius=2cm, x radius=5.0cm];
			\draw[very thick] (-1.4-\qsd,-1.03-\qsd) rectangle (-1.4+\qsd,-1.03+\qsd);
			\draw (-1.4,-1.03) node {$1,1$};
			\draw[very thick] (-2.4,-0.1) arc[start angle=180, end angle=265, y radius=0.5cm, x radius=2.2cm];
			\draw[very thick,rotate=70,dash pattern= on 1pt off 1pt ] (-2.5,2.0) arc [start angle=180, end angle=360,x radius=1.05cm, y radius =0.5cm];
			\draw[very thick,rotate=70] (-0.45,1.9) arc [start angle=0,end angle=60, x radius=1.1cm,y radius=0.5cm];
			\draw[very thick] (-2.7,-1.6) .. controls (-2.8,-1.2) and (-2.6,-1.4) .. (-1.75,-1.03);
			\draw[very thick,rotate=70,dash pattern= on 1pt off 1pt ] (0.8,-1.6) arc [start angle=0, end angle=180,x radius=1.15cm, y radius =0.5cm];
			\draw[very thick,rotate=-15] (-0.75,-1.2) arc[start angle=270, end angle=320, y radius=2cm, x radius=3.55cm];
			\draw[very thick] (0.35,-0.62)--(1.8,-1.0);
			\draw[very thick] (-0.6-\qsd,-1.55-\qsd) rectangle (-0.6+\qsd,-1.55+\qsd);
			\draw (-0.6,-1.55) node {$2,1$};
			\draw[very thick] (-0.6+\qsd,-1.55) arc [start angle=270, end angle=330, x radius=2.5cm, y radius=1cm];
			\draw[very thick,rotate=70] (-1.5,-1.6) arc [start angle=180, end angle=270,x radius=1.15cm, y radius =0.5cm];
			\draw[very thick] (3.2,0.4) circle[radius=\qsd];
			\draw (3.2,0.4) node {$1,0$};
			\draw[very thick] (2.1-\qsd,1.0-\qsd) rectangle (2.1+\qsd,1.0+\qsd);
			\draw (2.1,1.0) node {$2,2$};
			\draw[very thick] (2.45,0.9) .. controls (2.8,0.75) and (2.8,0.45) .. (1.8,0.2);
			\draw[very thick] (2.45,0.95) arc[start angle=90,end angle=0,x radius=1.5cm, y radius=0.5cm];
			\draw[very thick] (3.2-\qsd,-0.6-\qsd) rectangle (3.2+\qsd,-0.6+\qsd);
			\draw (3.2,-0.6) node {$1,1$};
			\draw[very thick] (-4.4,0.3) circle[radius=\qsd];
			\draw (-4.4,0.3) node {$1,0$};
			\draw[very thick] (-3.4-\qsd,0.3-\qsd) rectangle (-3.4+\qsd,0.3+\qsd);
			\draw (-3.4,0.3) node {$2,2$};
			\draw[very thick] (-2-\qsd,1.1-\qsd) rectangle (-2+\qsd,1.1+\qsd);
			\draw (-2,1.1) node {$2,1$};
			\draw[very thick] (-4.4,-0.05) arc [start angle=220, end angle=300, x radius=1.5cm, y radius=1cm];
			\draw[very thick] (-2.5,-0.2) .. controls (-3.2,0.6) and (-3.2,0.8) .. (-2.35,1.0);
			\draw[very thick] (-1.65,1.0) .. controls (2.2,2.2) and (4,1) .. (4,0.3);
			\draw[very thick] (-4.4,0.65) arc [start angle=170,end angle=-2, x radius=4.49cm, y radius=1.5cm];
			\draw[very thick] (1.8,-1.0) .. controls (2.8,-1.5) and (4.7,-1)..(4.53,0.35);
			\draw[very thick] (-3.1,0.00) arc[start angle=225, end angle=260, x radius=1.2cm, y radius=1cm];
			\draw[very thick] (4,0.3) arc[start angle=45,end angle=70,x radius=1.2cm, y radius=0.75cm];
			\draw[very thick] (4.0,0.3) arc[start angle=360, end angle=310, x radius=1.2cm, y radius=1cm];
		\end{tikzpicture}
	\end{center}
\caption{The same constellation as Figure \ref{fig:torus_constell_N3_1}, with double labelling parameters $L=0$, $M=2$, $J_1=1$,  $J_2=2$.}
\end{figure}

 \bigskip \bigskip
\noindent 
\small{ {\it Acknowledgements.} 
This work was partially supported by the Natural Sciences and Engineering Research Council of Canada (NSERC) 
and the Fonds de Recherche du Qu\'ebec, Nature et Technologies (FRQNT).  
\bigskip
\bigskip 


 \newcommand{\arxiv}[1]{\href{http://arxiv.org/abs/#1}{arXiv:{#1}}}

\bigskip
\noindent


\begin{thebibliography}{99}

  \bibitem{AMMN1} A. Alexandrov, A. Mironov, A. Morozov and S. Natanzon, 
``Integrability of Hurwitz partition functions. I. Summary'',
 {\em J. Phys. A} {\bf 45 } 045209 (2012).

 \bibitem{AMMN2} A. Alexandrov, A. Mironov, A. Morozov and S. Natanzon ,
``On KP-integrable Hurwitz functions''
{\em JHEP}, {\bf 2014}, 80 (2014). 

 \bibitem{ACEH1} A. Alexandrov, G. Chapuy, B. Eynard and J. Harnad, 
``Weighted Hurwitz numbers and topological recursion: an overview'', 
{\em J. Math. Phys.} {\bf 59},  081102: 1-20 (2018).

\bibitem{ACEH2}  A. Alexandrov, G. Chapuy, B. Eynardand J. Harnad, ``Weighted Hurwitz Numbers and Topological Recursion'', 
{\em Commun. Math. Phys.} {\bf  375}, 237–305 (2020).

 \bibitem{AC1} J. Ambj{\o}rn and L. Chekhov, ``The matrix model for dessins d'enfants'', 
{\em Ann. Inst. Henri Poincar\'e, Comb. Phys. Interact.} {\bf  1},  337-361 (2014).

\bibitem{AC2} J. Ambj{\o}rn and L. Chekhov, ``A matrix model for hypergeometric Hurwitz numbers'', 
{\em Theor. Math. Phys. } {\bf   181},  1486-1498 (2014).

   \bibitem{BH} M. Bertola  and J.  Harnad , ``Rationally weighted Hurwitz numbers, Meijer G-functions and matrix integrals'',
   {\em J. Math. Phys.} {\bf  60}:10, 103504 (2019).

   \bibitem{Ch} G. Chapuy, ``Asymptotic enumeration of constellations and related families of maps on orientable surfaces''
{\em Comb. Prob. Comp.} {\bf 18}, 477-516  (2009).

\bibitem{Frob1}  G. Frobenius, ``\"Uber die Charaktere der symmetrischen Gruppe'', {\em Sitzber. Pruess. Akad. Berlin}, 516-534  (1900).

\bibitem{Frob2}  G. Frobenius, ``\"Uber die charakterischen Einheiten der symmetrischen Gruppe'', {\em Sitzber.  Akad. Wiss., Berlin}, 328-358  (1903). Gesammelte Abhandlung III, 244-274.

\bibitem{FH} W. Fulton and J. Harris,  {\em Representation Theory},  Graduate Texts in Mathematics, {\bf 129}, 
Springer-Verlag (N.Y., Berlin, Heidelberg, 1991), Chapt. 4, and Appendix A.

\bibitem{GGN1} I. P. Goulden, M. Guay-Paquet and J. Novak, ``Monotone Hurwitz numbers and the HCIZ Integral'', 
 {\em Ann. Math. Blaise Pascal} {\bf 21}, 71-89 (2014).

\bibitem{GGN2} I. P. Goulden, M. Guay-Paquet and J. Novak, ``Toda Equations and Piecewise Polynomiality for Mixed Double Hurwitz numbers'', {\em SIGMA} {\bf 12},  040 (2016).

\bibitem{GJ} I.P. Goulden, D.M. Jackson, ``The KP hierarchy, branched covers, and triangulations'', 
{\em Adv. Math.} {\bf 219} , 932–951 (2008)

\bibitem{GH1} M. Guay-Paquet and J. Harnad, ``2D Toda $\tau$-functions as combinatorial generating functions'',  {\em Lett. Math. Phys.} {\bf 105}, 827-852 (2015).

\bibitem{GH2} M. Guay-Paquet and J. Harnad, ``Generating functions for weighted Hurwitz numbers'',  {\em J. Math. Phys.} {\bf  58}, 083503 (2017).

\bibitem{HC} Harish-Chandra,  ``Differential operators on a semisimple Lie algebra'',
{\em Amer. J. Math.} {\bf 79}, 87-120 (1957).

\bibitem{H1} J. Harnad,  ``Weighted Hurwitz numbers and hypergeometric $\tau$-functions: an overview'', {\em  Proc.~Symp.~Pure Math} {\bf 93}, 289-333 (2016). 

\bibitem{H2} J. Harnad, ``Multispecies Weighted Hurwitz Numbers''  {\em SIGMA} {\bf 11},  097 (2015).

\bibitem{H3} J. Harnad,  ``Quantum Hurwitz numbers  and Macdonald polynomials'', {\em J. Math. Phys.} {\bf 57} 113505 (2016).

\bibitem{HR} J. Harnad and B.~Runov ``Matrix model generating function for quantum weighted Hurwitz numbers'',
{\em J. Phys. A} {\bf 53}, 06521 (2020).

\bibitem{HO1} J. Harnad and A. Yu. Orlov, ``Fermionic construction of partition functions for two-matrix models and perturbative Schur function
expansions'', {\em J. Phys. A} {\bf 29}, 8783-8809 (2006).

\bibitem{HO2} J. Harnad and A. Yu. Orlov, ``Hypergeometric $\tau$-functions, Hurwitz numbers and enumeration of paths'',  
{\em Commun. Math. Phys. } {\bf 338}, 267-284 (2015).

\bibitem{HOrt} J. Harnad and J. Ortmann,``Asymptotics of quantum weighted Hurwitz numbers'', 
{\em J. Phys. A} {\bf 51}, 225201 (2018).

\bibitem{Hur1} A. Hurwitz, ``\"Uber Riemann'sche Fl\"asche mit gegebnise Verzweigungspunkten'', 
{\em Math. Ann.} {\bf 39}, 1-61 (1891); Matematische Werke I, 321-384.

\bibitem{Hur2} A. Hurwitz, ``\"Uber die Anzahl der Riemann'sche Fl\"asche mit gegebnise Verzweigungspunkten'', 
{\em Math. Ann.} {\bf 55}, 53-66 (1902); Matematische Werke I, 42-505.

\bibitem{IZ}  C. Itzykson and J.-B. Zuber  ``The planar approximation. II '', {\em J. Math. Phys.} {\bf  21},
 411-421 (1980 ).

\bibitem{KZ} M. Kazarian and P. Zograf, ``Virasoro constraints and topological recursion for Grothendieck's dessin counting'', 
{\it Lett. Math. Phys.}  {\bf 105} 1057-1084 (2015).

  \bibitem{KMMM}  S.~Kharchev, A.~Marshakov, A.~Mironov and A.~Morozov,
  ``Generalized Kazakov-Migdal-Kontsevich model: Group theory aspects,''
 {\em  Int.\ J.\ Mod.\ Phys.\ A}  {\bf 10}, 2015 (1995).

\bibitem{LZ} S. K.  Lando and A.~K.~Zvonkin, {\em Graphs on Surfaces and their Applications}, Encyclopaedia of Mathematical Sciences, Volume {\bf 141}, Chapter 1, Springer-Verlag (2004).

\bibitem{Mac} I.~G. ~Macdonald, {\em Symmetric Functions and Hall Polynomials},
Clarendon Press, Oxford, (1995).

\bibitem{MMN} A. Mironov, A. Morozov and S. Natanzon 
``Integrability properties of Hurwitz partition functions. 
II. Multiplication of cut-and-join operators and WDVV equations'',
{\em JHEP}  {\bf 2011}, 97 (2011). 

\bibitem{Ok} A.~Okounkov, ``Toda equations for Hurwitz numbers'', {\em Math.~Res.~Lett.} {\bf 7}, 447--453 (2000).

\bibitem{Or} A.~Yu.~Orlov, ``{Hypergeometric} functions as infinite-soliton $\tau$-functions'', 
 {\em Theor.  Math. Phys.} {\bf 146}, 183-206 (2006).

\bibitem{OrSc1}  A.~Yu.~Orlov and D. ~M.~Scherbin
``Multivariate hypergeometric functions as $\tau$-functions of Toda lattice and Kadomtsev–Petviashvili equation''
{\em Physica D}  {\bf 152}, 51-65 (2000).

\bibitem{OrSc2} A.~Yu.~Orlov and D.~M.~Scherbin, ``Hypergeometric solutions of soliton equations'', 
{\em Theor. Math. Phys.} {\bf 128}, 906-926 (2001).

\bibitem{Pa} R. Pandharipande, ``The Toda Equations and the Gromov-Witten Theory of
the Riemann Sphere'',  {\em Lett. Math. Phys.} {\bf  53}, 59-74 (2000).

\bibitem{Sa} M. Sato, ``Soliton equations as dynamical systems on infinite dimensional Grassmann 
manifolds'', RIMS, Kyoto Univ. {\it Kokyuroku} {\bf 439},  30-46 (1981).

\bibitem{Sch} I. Schur, ``Neue Begr\"undung der Theorie der Gruppencharaktere'', {\em Sitzber.  Akad. Wiss., Berlin}, 406-432  (1905).

 \bibitem{SW} G. Segal  and G. Wilson, ``Loop groups and equations of KdV type '', {\em Publ. Math. IH\'ES} {\bf 6}, 5-65 (1985).

 \bibitem{Ta} K. Takasaki, ``Initial value problem for the Toda lattice hierarchy'', in: {\em Group Representation and Systems of Differential Equations}, 139-163, {\em Adv. Stud. in Pure Math.} {\bf 4}, (1984).

\bibitem{Takeb} T. Takebe, ``Representation theoretical meaning of the initial value problem for the Toda lattice hierarchy I'', {\em Lett. Math. Phys.} {\bf 21} 77--84, (1991).

\bibitem{UTa} K. Ueno and K. Takasaki, ``Toda Lattice Hierarchy'', in:{ \em Group Representation and Systems of Differential Equations}, {\em Adv. Stud. in Pure Math.} {\bf 4},  1-95 (1984).

\bibitem{Z} P. Zograf,  ``Enumeration of Grothendieck's dessins and KP hierarchy'', 
{\em Int. Math Res. Notices} {\bf 24}, 13533-13544 (2015).

\bibitem{Za} D. Zagier,  {\em Applications of the theory of finite groups}, Appendix A of:
S. K.  Lando and A.~K.~Zvonkin, {\em Graphs on Surfaces and their Applications}, Encyclopaedia of Mathematical Sciences, Volume {\bf 141}, Springer-Verlag (2004).

\end{thebibliography}
\end{document}